   \def\@citecolor{blue}%
   \def\@urlcolor{blue}%
   \def\@linkcolor{blue}%
\def\orcidID#1{\smash{\href{http://orcid.org/#1}{\protect\raisebox{-1.25pt}{\protect\includegraphics{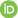}}}}}
\definecolor{hibou_col_lf}{RGB}{22, 22, 130}
\newcommand{\hlf}[1]{\textcolor{hibou_col_lf}{#1}}
\definecolor{hibou_col_ms}{RGB}{15, 86, 15}
\newcommand{\hms}[1]{\textcolor{hibou_col_ms}{#1}}
\definecolor{darkspringgreen}{rgb}{0.09, 0.45, 0.27}
\newcommand{\shortColGreen}[1]{\textcolor{darkspringgreen}{#1}}
\newcommand{\shortColRed}[1]{\textcolor{red}{#1}}
\newcommand{\shortColBlue}[1]{\textcolor{blue}{#1}}
\newcommand{\shortColOrange}[1]{\textcolor{orange}{#1}}
\newcommand{\shortColViolet}[1]{\textcolor{violet}{#1}}
\newcommand{\shortColCyan}[1]{\textcolor{cyan}{#1}}
\newcommand{\shortColBrown}[1]{\textcolor{brown}{#1}}
\definecolor{my_grey}{RGB}{160, 160, 160}
\DeclareRobustCommand\doubleVerticalTimesDefault{%
  \leavevmode
  {\sbox0{\ddag}%
   \ooalign{\raisebox{\ht0-\height}{$\times$}\cr
            \raisebox{\depth-\dp0}{\scalebox{1}[-1]{$\times$}}\cr}%
  }%
}
\providecommand{\leftsquigarrow}{%
  \mathrel{\mathpalette\reflect@squig\relax}%
}
\newcommand{\reflect@squig}[2]{%
  \reflectbox{$\m@th#1\rightsquigarrow$}%
}
\newcommand{\doubleVerticalTimes}{\scalerel*{\doubleVerticalTimesDefault}{b}}
\newcommand{\lifelineElim}{\mathsf{rmv}}
\newcommand{\muProjection}{\mathsf{\pi}}
\newcommand{\conflictPredicate}{\,\doubleVerticalTimes\,}
\newcommand{\multiAppend}{\,\symbol{94}\,}
\newcommand{\opStrictSeq}{\,;\,}
\newcommand{\opWeakSeq}{\,;_{\doubleVerticalTimes}\,}
\newcommand{\opInterleaving}{\,||\,}
\newcommand{\macroOKVerdict}{Ok}
\newcommand{\macroKOVerdict}{Nok}
\newcommand{\graphRelationExec}{\,\shortColOrange{\rightsquigarrow_e}\,}
\newcommand{\graphRelationPORExec}{\,\shortColOrange{\rightsquigarrow_e^{\mathbf{POR}}}\,}
\newcommand{\graphRelationHide}{\,\shortColViolet{\rightsquigarrow_r}\,}
\newcommand{\graphRelationPass}{\,\shortColBlue{\rightsquigarrow_o}\,}
\newcommand{\squareop}[1]{%
    \setlength{\fboxsep}{0pt}%
    \setlength{\unitlength}{.7em}%
    \mathrel{%
        \raisebox{-1pt}{\framebox(1.1,1){\(\scriptstyle #1\)}}%
    }%
}
\newcommand{\oneUnambiguous}{\,\squareop{1\!\raisebox{0.085pt}{$\scriptstyle\in$}}\,}
\newtheorem{definition}{Definition}
\newtheorem{theorem}{Theorem}
\newtheorem{property}{Property}
\newtheorem*{property*}{Property}
\newtheorem*{lemma*}{Lemma}
\newtheorem*{theorem*}{Theorem}
\newenvironment{scprooftree}[1]%
  {\gdef\scalefactor{#1}\begin{center}\proofSkipAmount \leavevmode}%
  {\scalebox{\scalefactor}{\DisplayProof}\proofSkipAmount \end{center} }
\newcommand{\collidesLf}{\centernot{\downarrow}^{\doubleVerticalTimes}}
\newcommand{\isPruneBase}{\simeq^{\doubleVerticalTimes}}
\newcommand{\isPruneOf}[1]{\isPruneBase_{#1}}
\newcommand{\anaLocal}{\omega}
\newcommand{\depthAnaLoc}{\delta}
\journal{Science of Computer Programming}
\begin{document}

\begin{frontmatter}

\title{Efficient Interaction-Based Offline Runtime Verification of Distributed Systems with Lifeline Removal}

\author[1]{
Erwan Mahe
\orcidID{0000-0002-5322-4337}
}
\author[1]{
Boutheina Bannour
\orcidID{0000-0002-4943-7807}
}
\author[1]{
Christophe Gaston
\orcidID{0000-0001-6865-5108}
}
\author[2]{
Pascale Le Gall
\orcidID{0000-0002-8955-6835} 
}

\affiliation[1]{organization={Université Paris-Saclay, CEA, List},
            addressline={F-91120}, 
            city={Palaiseau},
            country={France}}

\affiliation[2]{organization={MICS, CentraleSupélec, Université Paris-Saclay}, 
            addressline={F-91192}, 
            city={Gif-sur-Yvette},
            country={France}}

\begin{abstract}

Runtime Verification (RV) refers to a family of techniques in which system executions are observed and confronted to formal specifications, with the aim of identifying faults.
In Offline RV, observation is done in a first step and verification in a second step, on a static artifact collected during the observation.
In this paper, we define an approach to offline RV of Distributed Systems (DS) against interactions. Interactions are formal models describing communications within a DS. DS are composed of subsystems deployed on different machines and interacting via message passing to achieve common goals. 
Therefore, observing executions of a DS entails logging a collection of local execution traces, one for each subsystem, collected on its host machine.
We call {\em multi-trace} such observational artifacts.
A major challenge in analyzing multi-traces is that there are no practical means to synchronize the ends of observations of all the local traces. 
We address this via an operation, called lifeline removal, which we apply on-the-fly on the specification during the verification of a multi-trace once a local trace has been entirely analyzed.
This operation removes from the interaction the specification of actions occurring on the subsystem that is no-longer observed.
This may allow further execution of the specification via removing deadlocks due to the partial orders of actions.
We prove the correctness of the resulting RV algorithm and introduce two optimization techniques which we also prove correct.
We implement a Partial Order Reduction (POR) technique via the selection of a one-unambiguous action (as a unique first step to a linearization) which existence is determined via another use of the lifeline removal operator.
Additionally, Local Analyses (LOC) i.e., the verification of local traces, can be leveraged during the global multi-trace analysis to prove failure more quickly.
Experiments illustrate the application of our RV approach and the benefits of our optimizations.
\end{abstract}

\begin{keyword}
Distributed Systems\sep 
Interaction\sep
Sequence Diagrams\sep
Partial Observation\sep
Multi-Trace\sep
Offline Runtime Verification\sep
Lifeline Removal \sep
Partial Order Reduction
\end{keyword}

\end{frontmatter}

\section{Introduction\label{sec:introduction}}

Distributed Systems (DS) are software or cyber-physical systems composed of multiple distinct and potentially distant subsystems that interact with each other through peer-to-peer message exchanges.
The behavioral specification of such DS can be formalized either via a collection of local models combined with a communication policy, or via a global model that includes both.
Interactions are such global models that have the advantage of having a direct and intuitive graphical representation in the form of Sequence Diagrams (SD) akin to UML-SD \cite{UML} or Message Sequence Charts (MSC) \cite{MSC}.

Interactions are natural candidates to be used as reference models in Runtime Verification (RV) for DS. 
RV refers to a family of techniques that consist in observing executions of systems and confronting them to formal specifications or models, with the intent of identifying faults.
In most approaches to RV for DS, the formal references against which system executions are analyzed are specified using formalisms or logics usually equipped with global trace semantics.
A global trace totally orders events that occur in the DS, whichever are the subsystems on which they occur. However, because DS are composed of subsystems deployed on different computers and communicating via message passing, their executions are more naturally represented as collections of local traces observed at the level of the different subsystems' interfaces  \cite{constraint_based_oracles_for_timed_distributed_systems,passive_conformance_testing_of_service_choreographies}. 
Because subsystems may not share a common clock \cite{Lamport19b}, reordering the events that constitute the local traces as a unique global trace is not readily possible. As a result, the entry data to the RV process for DS, in all generality, takes the form of a structured collection of local traces, referred to as a multi-trace 
\cite{constraint_based_oracles_for_timed_distributed_systems,a_small_step_approach_to_multi_trace_checking_against_interactions}.

However, under the hypothesis of the absence of a global clock, as is typical in the context of DS, it is not feasible to synchronize the endings of the different local observations. 
As a result, we have a form of partial observation, having no guarantees that local observation has ceased at the right time (i.e., not too early) on every subsystem.
In a broader context, it is also possible that some subsystems are not observed at all. Such issues may occur due to technical or legal reasons, such as missing monitoring devices, malfunctions, or synchronization issues.

For these reasons, DS have been identified in a recent survey \cite{surveyRV_SanchezSABBCFFK19} as one of the most challenging application domains for RV.
A significant challenge arises from the fact that the reference formalisms or logics used for analyzing system executions adopt global trace semantics.
Various approaches to address DS RV involve identifying global traces resulting from all possible temporal orderings of events in local traces. If none of these global traces conform to the formal reference, an error may be inferred \cite{passive_conformance_testing_of_service_choreographies}. 
In \cite{a_small_step_approach_to_multi_trace_checking_against_interactions}, we have taken advantage of operational semantics of interactions to assess the conformance of a multi-trace w.r.t.~a given interaction directly: events are reorganized into a global trace at the same time as they are compared with the model. 

Yet, these approaches cannot handle partial observation because reconstructing a global trace might be impaired by having some events missing from specific local traces. 
For instance, it is possible that the reception of a message is observed via local observation on a given subsystem, while its corresponding emission was not observed because the local observation on the emitting subsystem ceased too early. 
Hence, although locally, every partial local observation corresponds to a prefix of an ideal full local observation, globally, the partial observation does not correspond to observing a prefix of a globally ordered sequence of events.

In this paper, we introduce a lifeline removal transformation operator on interactions. This operator is utilized to disregard parts of the interaction that are no longer observed, and we elaborate on its application to enhance multi-trace analysis, specifically in confronting multi-traces with interaction specifications.
This work builds upon the framework developed in \cite{denotational_and_operational_semantics_for_interaction_languages_application_to_trace_analysis}, in which we introduced three equivalent global trace semantics of interactions: the first, described as denotational, is based on the composition of algebraic operators on traces; the second, described as operational, allows interactions to be executed step by step, and the third one, called execution, allows for an efficient implementation of trace or multi-trace analysis algorithms as in \cite{a_small_step_approach_to_multi_trace_checking_against_interactions}. 
Expanding upon \cite{interaction_based_offline_runtime_verification_of_distributed_systems}, we use the denotational semantics to prove some properties on the lifeline removal transformation and the operational semantics to define a new offline RV algorithm for DS that leverages lifeline removal to handle {\em partial observation}. 
This allows us to identify prefixes of correct multi-traces, the notion of prefix being here that each local trace is a prefix of the corresponding local trace on an ideal complete multi-trace. 
Because not all prefixes of multi-traces can be obtained by projecting a prefix of a corresponding global trace, a simple adaptation of the algorithm from \cite{a_small_step_approach_to_multi_trace_checking_against_interactions} is not enough. That is why we introduced the lifeline removal operator in \cite{interaction_based_offline_runtime_verification_of_distributed_systems}.

This paper complements \cite{interaction_based_offline_runtime_verification_of_distributed_systems} in several ways. Firstly, we bridge the gap between previous studies \cite{denotational_and_operational_semantics_for_interaction_languages_application_to_trace_analysis} by introducing a projection operator from global traces to multi-traces, which grounds the lifeline removal operator. 
Secondly, we introduce two optimization techniques that both leverage the lifeline removal operator and which are respectively related to partial order reduction and guiding the global analysis via local analyses.

This paper is organized as follows.
After some preliminary notions are introduced in Sec.\ref{sec:prelim}, Sec.\ref{sec:core} introduces multi-traces, interactions and the lifeline removal operator. 
In Sec.\ref{sec:multipref}, we define the baseline offline RV algorithm, already featured in \cite{interaction_based_offline_runtime_verification_of_distributed_systems}, that analyzes multi-traces against interactions under conditions of partial observation.
Sec.\ref{sec:graph_size} introduces partial order reduction and local analyses, two techniques that significantly reduce the size of its search space.
In Sec.\ref{sec:experiments}, we present experimental results conducted on a tool that implements our approach.
Finally, Sec.\ref{sec:related} discusses related works and we conclude in Sec.\ref{sec:conclusion}.

\section{Preliminaries\label{sec:prelim}}

We introduce some preliminary notions that will be used throughout the paper.

\paragraph{Sets and words} 
Given a finite set $A$, $|A|$ designates its cardinality, $\mathcal{P}(A)$ denotes the set of all subsets of $A$ and 
$A^*$ is the set of words on $A$, with $\varepsilon$ the empty word and the ``$.$'' concatenation law. The concatenation law is extended to sets as usual: for $a$ in $A$, for $B$ and $C$ two subsets of $A$, $a.B$ (resp. $B.a$) is the set of words $a.w$ (resp. $w.a$) with $w$ in $B$ and $B.C$ is the set of words $w.w'$ with $w$ in $B$ and $w'$ in $C$. 

For any word $w \in A^*$, a word $w'$ is a prefix of $w$ if there exists a word $w''$, possibly empty, such that $w = w'.w''$. Let $\overline{w}$ denote the set of all prefixes of a word $w \in A^*$ and $\overline{W}$ the set of prefixes of all words of a set $W \subseteq A^*$. 

For any word $w \in A^*$, $|w|$ denotes the length of $w$ and for any $\depthAnaLoc \in \mathbb{N}$, $w\lbrack 0 .. \depthAnaLoc \rbrack$ denotes either $w$ if $\depthAnaLoc \geq |w|$, or the prefix of $w$ that contains its $\depthAnaLoc$ first elements if $\depthAnaLoc < |w|$.

Let $J$ be a finite set.
For a family $(X_j)_{j \in J}$ of sets indexed by $J$, $\prod_{j \in J} X_j$ is the set of tuples $(x_1, \ldots, x_j,\ldots)$ with $\forall j \in J, x_j \in X_j$.
For a tuple  $\mu = (x_1, \ldots, x_j,\ldots)$ indexed by $J$, $\mu_{|j}$ denotes the component $x_j$.

\paragraph{Terms and positions}
Considering a finite set of operation symbols $\mathcal{F} = \bigcup_{\substack{j \geq 0}} \mathcal{F}_j$ with finite arities $j \geq 0$, the set of terms over $\mathcal{F}$ is the smallest set $\mathcal{T}_{\mathcal{F}}$ such that:
\begin{itemize}
    \item $\mathcal{F}_0 \subset \mathcal{T}_{\mathcal{F}}$ (symbols of arity 0 are called constants)
    \item and for any symbol $f$ in $\mathcal{F}_j$ of arity $j>0$ and terms $t_1,\cdots,t_j$ in $\mathcal{T}_{\mathcal{F}}$,\\ $f(t_1,\cdots,t_j) \in \mathcal{T}_{\mathcal{F}}$.
\end{itemize}

For any term $t \in \mathcal{T}_\mathcal{F}$, its set of positions $pos(t) \in \mathcal{P}((\mathbb{N}^+)^*)$ accordingly to the Dewey Decimal Notation \cite{dershowitz_rewrite_systems} is such that: 
\begin{itemize}
    \item for $t$ in $\mathcal{F}^0$, $pos(t) = \{ \varepsilon \}$,
    \item and for $t$ of the form $f(t_1,\cdots,t_j)$ in $\mathcal{T}_\mathcal{F}$, \\
$pos(t) = \{ \varepsilon \} \cup \bigcup_{k \in [1,j]} \{k.p ~|~ p \in pos(t_k)\}$.
\end{itemize}

For a term $t$ in $\mathcal{T}_\mathcal{F}$ and a position $p$ in $pos(t)$, we denote by $t(p)$ the operation symbol at position $p$ within $t$.
Assuming that $t$ may contain several occurrences of a constant $a$, we may use the notation $a@p$ to refer unambiguously to the instance of $a$ at a position $p$ (in a context where $t$ is known).

\paragraph{Binary relations and graphs}

A binary relation $\rightsquigarrow$ on a set $A$ is a subset of $A \times A$, commonly used with an infix notation. For any two relations $\rightsquigarrow$ and $\rightarrow$:
\begin{itemize}
    \item composition is s.t. $\rightsquigarrow \circ \rightarrow = \{(x,z) \in A^2 ~|~ \exists~ y \in A, (x\rightsquigarrow y) \mbox{ and } (y \rightarrow z)\}$,
    \item $\overset{0}{\rightsquigarrow}$ denotes the identity relation $\{(x,x) ~|~ x \in A\}$,
    $\overset{1}{\rightsquigarrow}$ denotes the relation $\rightsquigarrow$ itself,
    and for any $j>1$, $\overset{j}{\rightsquigarrow}$ is the relation $\overset{j-1}{\rightsquigarrow} \circ \rightsquigarrow$,
    \item and $\overset{*}{\rightsquigarrow}$ denotes the reflexive and transitive closure $\bigcup_{j = 0}^{\infty} \overset{j}{\rightsquigarrow}$ of $\rightsquigarrow$.
\end{itemize}

A graph $\mathbb{G}$ is a tuple $(A, \rightsquigarrow)$ where $A$ is its (support) set of vertices and $\rightsquigarrow$ is its transition relation, which is a binary relation on $A$.

\section{Lifeline removal and interaction semantics\label{sec:core}}

\subsection{Interactions\label{sec:interactions}}

Interactions languages \cite{denotational_and_operational_semantics_for_interaction_languages_application_to_trace_analysis} are formal languages that encode sequence diagrams akin to those of UML-SD \cite{UML} or Message Sequence Charts \cite{operational_semantics_for_msc} and their variations \cite{high_level_message_sequence_charts,lscs_breathing_life_into_message_sequence_charts}. Interactions and the associated sequence diagrammatic representations describe the expected behaviors of Distributed Systems (DS).
A DS comprises multiple subsystems potentially distributed across spatial locations.
Such subsystems communicate with each other via asynchronous message passing.
From a black box perspective, the atomic concept to describe the executions of such DS is that of communication {\em actions} occurring on a subsystem's interface.
These actions consist in either the emission or the reception of a message by or from a specific subsystem.

Using the terminology of interaction languages \cite{denotational_and_operational_semantics_for_interaction_languages_application_to_trace_analysis,operational_semantics_for_msc}, a subsystem interface is called a \emph{lifeline} and corresponds to an interaction point on which the subsystem can receive or send some messages. Throughout the paper, we adopt the following notational conventions: 

\begin{definition}
Lifelines belong to the set $\mathcal{L}$ denoting the universe of lifelines, while messages are represented within the universe $\mathcal{M}$.    
\end{definition}

An action occurring on a lifeline is then defined by its kind (emission or reception, identified resp.~by the symbols $!$ and $?$) and the message it conveys. Def.\ref{def:actions_and_traces} formalizes this notion of actions, as well as that of execution {\em traces} which are sequences of observable actions, akin to words which letters are actions.

\begin{definition}\label{def:actions_and_traces}
For a lifeline $\hlf{l} \in \mathcal{L}$, the set $\mathbb{A}_{\hlf{l}}$ of \emph{actions over $\hlf{l}$} is: 
\[\{ \hlf{l} \Delta \hms{m} ~|~ \Delta \in \{!,?\}, ~ \hms{m} \in \mathcal{M} \}\]

\noindent The set of \emph{local traces} over $\hlf{l}$ is $\mathbb{A}_{\hlf{l}}^*$. 

For any set of lifelines $L \subseteq \mathcal{L}$, the set $\mathbb{A}(L)$ of actions over $L$ is the set $\bigcup_{\hlf{l}\in L} \mathbb{A}_{\hlf{l}}$ and the set of \emph{global traces} over $L$ is $\mathbb{A}(L)^*$.

For $a \in \mathbb{A}(L)$ as $\hlf{l}?m$ or $\hlf{l}!m$, $\theta(a)$ refers to the lifeline $\hlf{l}$ on which the action $a$ takes place.
\end{definition}

 An example of interaction is given on Fig.\ref{fig:interaction_example}.
It models the expected behaviors of a DS composed of three remote subsystems, assimilated to their interfaces $\hlf{l_p}$, $\hlf{l_b}$ and $\hlf{l_s}$. This DS implements a simplified publish/subscribe scheme of communications (an alternative to client-server architecture), which is a cornerstone of some protocols used in the IoT such as MQTT \cite{MQTT}. The publisher $\hlf{l_p}$ may publish messages towards the broker $\hlf{l_b}$ which may then forward them to the subscriber $\hlf{l_s}$ if it is already subscribed.

\begin{figure}[ht]
    \centering
    
\begin{subfigure}[t]{.3\linewidth}
    \centering
    \includegraphics[scale=.3]{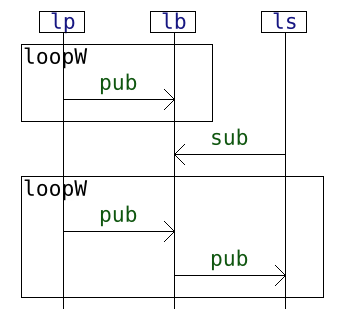}
    \caption{sequence diagram\label{fig:interaction_example_diagram}}
\end{subfigure}
\begin{subfigure}[t]{.42\linewidth}
    \centering
    \scalebox{.75}{
\begin{tikzpicture}[every node/.style = {shape=rectangle, align=center}]
\input{images/int_pbs/term_pbs}
\end{tikzpicture}
}
    \caption{interaction term\label{fig:interaction_example_term}}
\end{subfigure}
\begin{subfigure}[t]{.26\linewidth}
    \centering
    \scalebox{.6}{
\begin{tikzpicture}[every node/.style = {shape=rectangle, align=center}]
\input{images/int_pbs/position_pbs}
\end{tikzpicture}
}
    \caption{positions\label{fig:interaction_example_position}}
\end{subfigure}

    \caption{Example interaction}
    \label{fig:interaction_example}
\end{figure}

Fig.\ref{fig:interaction_example_diagram} corresponds to the sequence 
diagrammatic representation of that interaction.
Each lifeline is depicted by a vertical line labeled by its name at the top. By default, the top-to-bottom direction represents time passing. That is, a communication action depicted above another one on the same lifeline occurs beforehand. Communication actions are represented by horizontal arrows labeled with the action's message.
Whenever an arrow exits (resp.~enters) a lifeline, there is a corresponding emission (resp.~reception) action at that point on the line. 
For example, the horizontal arrow from $\hlf{l_s}$ to the lifeline $\hlf{l_b}$ carrying the message $\hms{\mathtt{sub}}$ indicates that the subscriber emits a subscription message, which the broker then receives.
This behavior corresponds to the strict sequencing of two communication actions: $\hlf{l_s}!\hms{\mathtt{sub}}$ followed by $\hlf{l_b}?\hms{\mathtt{sub}}$.

The diagram's top-to-bottom reading only concerns actions that occur on the same lifeline. For instance, while the $\hlf{l_b}!\hms{\mathtt{pub}}$ at the bottom of the diagram can only occur after the $\hlf{l_b}?\hms{\mathtt{pub}}$ above it, the $\hlf{l_s}!\hms{\mathtt{sub}}$ on the right can occur either before or after any of the $\hlf{l_p}!\hms{\mathtt{pub}}$ on the left.
Hence, we associate this top-to-bottom direction with an operator called weak sequencing (weak in contrast to the strict sequencing associated with the horizontal arrows).

More complex behaviors can be introduced through operators (see combined fragments in UML-SD) drawn in the shape of boxes that frame sub-behaviors of interest. For instance, in Fig.\ref{fig:interaction_example_diagram}, $loop_W$ corresponds to a weakly sequential loop, a repetition of the box content using weak sequencing. From the perspective of the $\hlf{l_b}$ lifeline, this implies that it can observe words of the form $(\hlf{l_b}?\hms{\mathtt{pub}})^*.\hlf{l_b}?\hms{\mathtt{sub}}.(\hlf{l_b}?\hms{\mathtt{pub}}.\hlf{l_b}!\hms{\mathtt{pub}})^*$ i.e. it can receive an arbitrary number of instances of the $\hms{\mathtt{pub}}$ message then one instance of $\hms{\mathtt{sub}}$ and then it can receive and transmit an arbitrary number of $\hms{\mathtt{pub}}$.

An example of a global trace, that is representative of the behaviors specified by the interaction from Fig.\ref{fig:interaction_example} is given below: \\
\centerline{$\hlf{l_s}!\hms{\mathtt{sub}}.\hlf{l_p}!\hms{\mathtt{pub}}.\hlf{l_b}?\hms{\mathtt{sub}}.\hlf{l_b}?\hms{\mathtt{pub}}.\hlf{l_b}!\hms{\mathtt{pub}}.\hlf{l_s}?\hms{\mathtt{pub}}$}

This trace illustrates that the $\hlf{l_p}$ and $\hlf{l_s}$ lifelines can send their respective messages $\hms{\mathtt{pub}}$ and $\hms{\mathtt{sub}}$ in any order since there are no constraints on their ordering. In contrast, the reception of a message necessarily takes place after its emission.
Since the reception of the message $\hms{\mathtt{sub}}$ takes place before that of the $\hms{\mathtt{pub}}$ message, this last message necessarily corresponds to the one occurring in the bottom loop.
This trace is a typical example of a trace {\em accepted} by the interaction in Fig.\ref{fig:interaction_example}, as this trace completely realizes the specified behavior by: 
\begin{itemize}
    \item unfolding zero times the first loop; \item realizing the passing of the message $\hms{\mathtt{sub}}$ between lifelines $\hlf{l_s}$ and $\hlf{l_b}$; 
    \item unfolding one time the second loop. 
\end{itemize}
Let us point out that none of the prefixes of this accepted trace is an accepted trace.

Sequence diagrams, such as the one in Fig.\ref{fig:interaction_example_diagram} can be formalized as terms of an inductive language. 
In this paper, we consider the language from \cite{a_small_step_approach_to_multi_trace_checking_against_interactions,denotational_and_operational_semantics_for_interaction_languages_application_to_trace_analysis}, which we recall in Def.\ref{def:interaction_language}.

\begin{definition}\label{def:interaction_language}
Given $L \subseteq \mathcal{L}$, the set $\mathbb{I}(L)$ of interactions over $L$ is the set of terms built over the following symbols provided with arities in $\mathbb{N}$:
\begin{itemize}
    \item the empty interaction $\varnothing $ and any action $a$ in $\mathbb{A}(L)$  of arity 0;
    \item the three loop operators $loop_S$, $loop_W$ and $loop_P$ of arity 1; 
    \item and the four operators $strict$, $seq$, $par$ and $alt$ of arity 2.
\end{itemize}
\end{definition}

\begin{figure}[ht]
    \centering
\begin{subfigure}[t]{.4\linewidth}
    \centering
\begin{tikzpicture}
\node[draw] (i0) at (0,0) {\includegraphics[scale=.225]{images/int_pbs/fsen_scp_ex_pbs.png}};
\node[draw,below right=-.75cm and .25cm of i0] (i1) {\includegraphics[scale=.225]{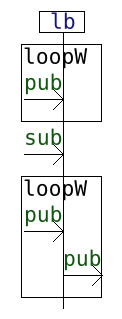}};
\draw (i0.230) edge[bend right=25,->] node[midway,draw,fill=white] {\scriptsize$\lifelineElim_{\{\hlf{l_p},\hlf{l_s}\}}$} (i1.200);
\begin{pgfonlayer}{main}
\fill[red!100,opacity=.4] (.7,1.3) -- (1.1,1.3) -- (1.1,-1.2) -- (.7,-1.2) -- cycle;
\fill[red!100,opacity=.4] (-1.06,1.3) -- (-.66,1.3) -- (-.66,-1.2) -- (-1.1,-1.2) -- cycle;
\end{pgfonlayer};
\end{tikzpicture}
    \caption{...on the diagram\label{fig:lf_removal_diagram}}
\end{subfigure}
\begin{subfigure}[t]{.575\linewidth}
    \centering
    \scalebox{.7}{
\begin{tikzpicture}
\node[draw] (transfo) at (0,0) {\begin{tikzpicture}[every node/.style = {shape=rectangle, align=center}]
\input{images/int_pbs/term_pbs}
\draw[red,thick] (o1111.south east) -- (o1111.north west);
\draw[red,thick] (o1111.south west) -- (o1111.north east);
\draw[red,thick] (o121.south east) -- (o121.north west);
\draw[red,thick] (o121.south west) -- (o121.north east);
\draw[red,thick] (o2111.south east) -- (o2111.north west);
\draw[red,thick] (o2111.south west) -- (o2111.north east);
\draw[red,thick] (o2122.south east) -- (o2122.north west);
\draw[red,thick] (o2122.south west) -- (o2122.north east);
\draw[green,thick] (o111.east) -- (o111.west);
\draw[->,green,thick] ([xshift=5pt,yshift=-2pt] o1112.north) to [bend right=45] ([xshift=-2.5pt] o111.east);
\draw[green,thick] (o12.east) -- (o12.west);
\draw[->,green,thick] ([xshift=5pt,yshift=-2pt] o122.north) to [bend right=45] ([xshift=-2.5pt] o12.east);
\draw[green,thick] (o211.east) -- (o211.west);
\draw[->,green,thick] ([xshift=5pt,yshift=-2pt] o2112.north) to [bend right=45] ([xshift=-2.5pt] o211.east);
\draw[green,thick] (o212.east) -- (o212.west);
\draw[->,green,thick] ([xshift=6pt,yshift=-2pt] o2121.north) to [bend right=25] ([yshift=1.5pt] o212.south);
\node[draw,align=left] (legP) at (-2,-3.1) {
\textcolor{red}{$\blacksquare$} $\lifelineElim_{\{\hlf{l_p},\hlf{l_s}\}}$\\
\textcolor{green}{$\blacksquare$} simplification
};
\end{tikzpicture}};
\node[draw, below=.3cm of transfo] (result) {\begin{tikzpicture}[every node/.style = {shape=rectangle, align=center}]
\node (o) { $seq$ } [sibling distance=2cm,level distance=0.3cm]
  child { node (o1) { $seq$ } [sibling distance=1cm,level distance=0.4cm]
  child { node (o11) {$loop_W$} [sibling distance=.75cm]
    child { node (o111) {$\hlf{l_b}?\hms{\mathtt{pub}}$} }
  }
  child { node (o12) {$\hlf{l_b}?\hms{\mathtt{sub}}$} }
  }
  child { node (o2) {$loop_W$} [level distance=0.4cm]
    child { node (o21) {$seq$} [sibling distance=1.1cm]
      child { node (o211) {$\hlf{l_b}?\hms{\mathtt{pub}}$} }
      child { node (o212) {$\hlf{l_b}!\hms{\mathtt{pub}}$} }
    }
  };
\end{tikzpicture}};
\draw[->] (transfo) -- (result);
\end{tikzpicture}
}
    \caption{...on the term\label{fig:lf_removal_term}}
\end{subfigure}
    \caption{Application of lifeline removal on the example from Fig.\ref{fig:interaction_example}...}
    \label{fig:lf_removal}
\end{figure}

Because the language from Def.\ref{def:interaction_language} contains symbols of arities up to 2, we can represent its terms as binary trees.
Fig.\ref{fig:interaction_example_term} takes advantage of this to represent an encoding of the sequence diagram from Fig.\ref{fig:interaction_example_diagram} as an interaction term.
We use the strict sequencing operator ``$strict$'' to encode message passing as it enforces strict happen-before relations between the actions within its left sub-term of those within its right sub-term.
By contrast, the more generic top-to-bottom order of the diagram is encoded using weak sequencing ``$seq$'' because this operator only enforces a strict order between actions occurring on the same lifeline.
The ``$par$'' and ``$alt$'' operators resp.~correspond to interleaving and non-deterministic choice.
``$loop_S$'', ``$loop_W$'' and ``$loop_P$'' correspond to repetitions (Kleene closures, see \cite{denotational_and_operational_semantics_for_interaction_languages_application_to_trace_analysis}) using resp. $strict$, $seq$ and $par$. Thus, for an interaction $i$, traces accepted by $loop_S(i)$ (resp. $loop_W(i)$, $loop_P(i)$) are either the empty trace or a trace accepted by  $strict(i,loop_S(i))$ (resp. $seq(i,loop_W(i))$, $par(i,loop_P(i))$).

Using terms to denote interactions allows the use of positions to designate occurrences of actions unambiguously.
Fig.\ref{fig:interaction_example_position} represents the positions of all the symbols of the term from Fig.\ref{fig:interaction_example_term}.
For example, the two distinct instances of $\hlf{l_p}!\hms{\mathtt{pub}}$ can be referred to as $\hlf{l_p}!\hms{\mathtt{pub}}@1111$ or $\hlf{l_p}!\hms{\mathtt{pub}}@2111$.

\subsection{Lifeline removal \label{sec:lf-removal}}

Interactions describe coordinated behaviors that take place on different lifelines. If we are only interested in a subset of them,  it is possible to restrict the scope of interactions to lifelines of interest. 
Fig.\ref{fig:lf_removal_diagram} illustrates this using sequence diagrammatic representations on a specific example.
Starting from the diagram of Fig.\ref{fig:interaction_example_diagram}, to project it onto $\{\hlf{l_b}\}$, it suffices to remove everything that concerns the other lifelines (i.e., the complementary $L \setminus \{\hlf{l_b}\}$), as highlighted in red. Removal simply replaces each action occurring on $L \setminus \{\hlf{l_b}\}$ with the empty interaction $\varnothing$.
Such transformations can also be described as a form of horizontal decomposition as opposed to the vertical decomposition induced by the weak sequential operator  $seq$, which schedules actions on a lifeline according to the passing of time.
Such transformations can be implemented via a syntactic operator $\lifelineElim$ (and in particular, $\lifelineElim_{\{\hlf{l_p},\hlf{l_s}\}}$ for the example from Fig.\ref{fig:lf_removal}) on our interaction language, which we define in Def.\ref{def:lifeline_removal_operator_on_interactions}.

\begin{definition}\label{def:lifeline_removal_operator_on_interactions}
Let $L$ and $H$ be two sets of lifelines verifying $H \subseteq L$. We define the removal operation 
 $\lifelineElim_H: \mathbb{I}(L) \rightarrow \mathbb{I}(L \setminus H)$ by: \\
$\lifelineElim_H(i) = \textbf{ match } i \textbf{ with}$
\begin{center}
$\begin{array}{lll}
|~\varnothing & \rightarrow 
& \varnothing
\\
|~a \in \mathbb{A}(L) & 
\rightarrow 
& \textbf{if } \theta(a) \in H \textbf{ then } \varnothing \textbf{ else } a
\\
|~f(i_1,i_2) & \rightarrow 
& f(\lifelineElim_H(i_1),\lifelineElim_H(i_2)) \textbf{ for } f \in \{ strict,seq,alt,par \}
\\
|~loop_k(i_1) & \rightarrow 
& loop_k(\lifelineElim_H(i_1)) \textbf{ for } k \in \{S,W,P\}
\end{array}$ 
\end{center} 
with $i$ an interaction in $ \mathbb{I}(L)$.
\end{definition}

Lifeline removal, as defined in Def.\ref{def:lifeline_removal_operator_on_interactions} in functional style, preserves the term structure of interactions, replacing actions on the removed lifelines with the empty interaction $\varnothing$. 
Fig.\ref{fig:lf_removal_term}, which is the interaction language counterpart to the example of Fig.\ref{fig:lf_removal_diagram}, illustrates the effect of lifeline removal on the term encoding from Fig.\ref{fig:interaction_example_term}.
We may use additional simplification steps involving the empty interaction $\varnothing$ to keep the resulting term simple. 
Indeed, by the very meaning of $\varnothing$ as the empty interaction, $\varnothing$ is a neutral element for scheduling operations. This means that, for $f$ in $\{strict,par,seq\}$, $f(i,\varnothing)$, $f(\varnothing,i)$ and $i$ will share the same set of accepted traces. Likewise, $alt(\varnothing,\varnothing)$, $loop_f(\varnothing)$ for $f$ in $\{S,W,P\}$ will share with $\varnothing$ the same set of accepted traces, i.e. $\{\varepsilon\}$. 
In \cite{denotational_and_operational_semantics_for_interaction_languages_application_to_trace_analysis,finite_automata_synthesis_from_interactions}, we have expressed these properties as rewriting rules\footnote{in this paper, we only consider rules that remove redundant $\varnothing$, as in \cite{finite_automata_synthesis_from_interactions}, while in \cite{denotational_and_operational_semantics_for_interaction_languages_application_to_trace_analysis} additional rules are also considered.} on the set of interaction terms preserving the set of accepted traces.
In the remainder of the paper, we will use such simplification rules primarily for the sake of simplicity.

\begin{figure}[ht]
    \centering

\begin{tikzpicture}
\node[draw] (i0) at (0,0) {\includegraphics[scale=.2]{images/int_pbs/fsen_scp_ex_pbs.png}};
\node[draw,below left=-.75cm and .5cm of i0] (i1) {\includegraphics[scale=.2]{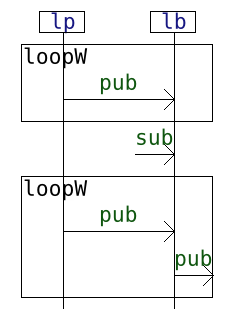}};
\node[draw,below right=-.75cm and .5cm of i0] (i2) {\includegraphics[scale=.2]{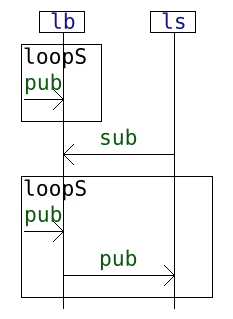}};
\draw (i0.150) edge[->,bend right=25] node[midway,draw,fill=white] {\scriptsize$\lifelineElim_{\{\hlf{l_s}\}}$} (i1.north);
\draw (i0.30) edge[->,bend left=25] node[midway,draw,fill=white] {\scriptsize$\lifelineElim_{\{\hlf{l_p}\}}$} (i2.north);
\node[draw,below right=.25cm and -1cm of i0] (i3) {\includegraphics[scale=.2]{images/int_pbs/fsen_scp_ex_b.png}};
\draw (i0.230) edge[->, bend right=20] node[midway,draw,fill=white] {\scriptsize$\lifelineElim_{\{\hlf{l_p},\hlf{l_s}\}}$} (i3.150);
\draw (i1.south) edge[->,bend right=25] node[midway,draw,fill=white] {\scriptsize$\lifelineElim_{\{\hlf{l_p}\}}$} (i3.210);
\draw (i2.290) edge[->,bend left=25] node[midway,draw,fill=white] {\scriptsize$\lifelineElim_{\{\hlf{l_s}\}}$} (i3.300);
\end{tikzpicture}
    
    \caption{Illustrating algebraic properties of $\lifelineElim$ on the example from Fig.\ref{fig:interaction_example}}
    \label{fig:lifeline_removal_interaction_commut}
\end{figure}

By construction, the syntactic lifeline removal operator on interaction has interesting algebraic properties related to composition: 

\begin{property}
\label{prop:commut_hide}
For any $L \subseteq \mathcal{L}$, $H \subseteq L$, $H' \subseteq L \setminus H$ and $i \in \mathbb{I}(L)$ we have:\\
$\lifelineElim_H \circ \lifelineElim_{H'} (i) = \lifelineElim_{H'} \circ \lifelineElim_H (i) = \lifelineElim_{H \cup H'} (i)$
\end{property}

\begin{proof}
Trivial.
\end{proof}

Prop.\ref{prop:commut_hide} is illustrated on Fig.\ref{fig:lifeline_removal_interaction_commut} with example on Fig.\ref{fig:interaction_example}.
Lastly, lifeline removal preserves the term structure and the existence and position of all actions that do not occur on removed lifelines.

\begin{property}
\label{prop:positions_and_lifeline_removal}
For any $L \subset \mathcal{L}$, any interaction $i \in \mathbb{I}(L)$ and any set of lifelines $\emptyset \subsetneq H \subsetneq L$, we have:
\begin{itemize}
    \item $pos(i) = pos(\lifelineElim_H(i))$
    \item for all $a$ in $\mathbb{A}(L)$ with $\theta(a) \not\in H$, if there exists $p$ in $pos(i)$ with $i(p) = a$ then $\lifelineElim_H(i)(p) = a$
    \item reciprocally, for all $a$ in $\mathbb{A}(L)$, if there exists $p$ in $pos(\lifelineElim_H(i))$ such that $\lifelineElim_H(i)(p) = a$ then $i(p) = a$.
\end{itemize}
\end{property}

\begin{proof}
We reason without any interaction term simplification. As per Def.\ref{def:lifeline_removal_operator_on_interactions}, for any interaction $i$ and set $H \subset L$, $i$ and $\lifelineElim_H(i)$ have the same set of positions and for any action $a \in \mathbb{A}(L)$ s.t.~$\theta(a) \not\in H$, there is a $a@p$ in $i$ iff there is also the same $a@p$ in $\lifelineElim_H(i)$.
\end{proof}

\subsection{Traces and multi-traces\label{sec:multitraces}}

From the observation of an execution of a DS, the behavior of that DS can be characterized by the actions that occurred during the span of that execution and the order of occurrences of these actions.
Still, our ability to order these actions is limited by the existence of clocks \cite{Lamport19b} that the different subsystems of the DS may or may not share.

In this paper, we only consider two simple cases, as illustrated in Fig.\ref{fig:colocalization_schema}, which are based on the example DS described by the interaction from Fig.\ref{fig:interaction_example}.
The publisher $\hlf{l_p}$ is represented as a sensor device, the broker $\hlf{l_b}$ via a cloud, and the subscriber $\hlf{l_s}$ via a smartphone.

If all subsystems share the same clock (Fig.\ref{fig:colocalization_schema_global}), it is possible to reorder all actions to obtain a single global trace. 
By contrast, in the case where every subsystem has its own local clock (Fig.\ref{fig:colocalization_schema_discrete}), this is generally not possible, as we cannot reliably correlate the timestamps of distant clocks \cite{Lamport19b}. Therefore, the behavior of the DS can only be characterized by a structured set of local traces (one per lifeline), which we call a multi-trace \cite{constraint_based_oracles_for_timed_distributed_systems,a_small_step_approach_to_multi_trace_checking_against_interactions}.

\begin{figure}[h]
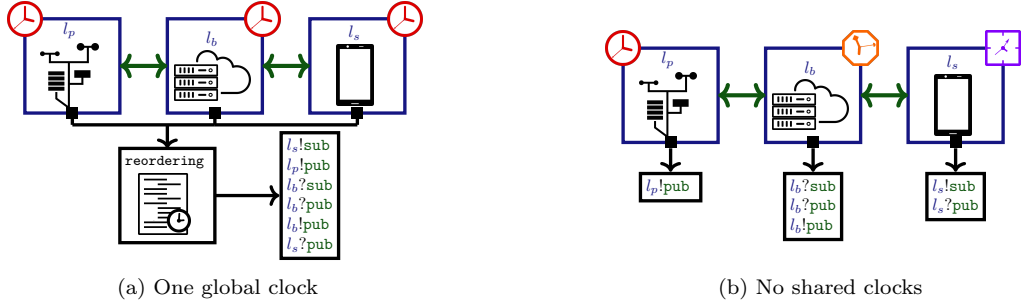

    \centering

\begin{subfigure}{.475\linewidth}
    \centering
    \scalebox{.625}{\input{figures/subfig/colocal/archi_glob}}
    \caption{One global clock\label{fig:colocalization_schema_global}}
\end{subfigure}
\begin{subfigure}{.475\linewidth}
    \centering
    \scalebox{.625}{\input{figures/subfig/colocal/archi_loc}}
    \vspace*{.25cm}
    \caption{No shared clocks\label{fig:colocalization_schema_discrete}}
\end{subfigure}

    \caption{Clocks and characterization of executed behaviors\label{fig:colocalization_schema}}
\end{figure}

We formalize this notion of {\em multi-trace} as a structured set of local traces.
\begin{definition}\label{def:multitrace}
Given $L \subseteq \mathcal{L}$, the set $\mathbb{M}(L)$ of {\em multi-traces} over $L$ is $\prod_{\hlf{l} \in L} \mathbb{A}_{\hlf{l}}^*$.
\end{definition}

For $\mu = (t_{\hlf{l}})_{\hlf{l} \in L}$ in $\mathbb{M}(L)$, $\mu_{|\hlf{l}}$ is the trace component $t_{\hlf{l}} \in \mathbb{A}_{\hlf{l}}^*$.  
We denote by $\varepsilon_L$ the empty multi-trace in $\mathbb{M}(L)$ defined by: $\forall~\hlf{l} \in L, {\varepsilon_L}_{|\hlf{l}} = \varepsilon$.
Additionally, for any $\mu \in \mathbb{M}(L)$, we use the notations $\mu[t]_{\hlf{l}}$ to designate the multi-trace $\mu$ in which the component on $\hlf{l} \in L$ has been replaced by $t \in \mathbb{A}_{\hlf{l}}^*$ and $|\mu|$ to designate the cumulative length $|\mu| = \sum_{\hlf{l} \in L} |\mu_{|\hlf{l}}|$ of $\mu$.

Let us introduce operations to add an action 
to the left (resp.~right) of a multi-trace. For the sake of simplicity, we use the same symbol $\multiAppend$ for these left- and right-concatenation operations: 
\[
\forall a \in \mathbb{A}(L), \forall \mu \in \mathbb{M}(L), \;  a\multiAppend\mu = \mu[a.\mu_{|\theta(a)}]_{\theta(a)} \mbox{~~~and~~~} \mu \multiAppend a= \mu[\mu_{|\theta(a)}.a]_{\theta(a)}\] 
Note that for any $\mu$ and $a$, we have $|\mu \multiAppend a| = |a \multiAppend \mu| = |\mu|+1$.
We extend the operation ``$\multiAppend$'' to sets of multi-traces $M \subset \mathbb{M}(L)$ as follows:
\begin{center}
$a \multiAppend M = \{ a \multiAppend \mu ~|~ \mu \in M \}$ and 
$M \multiAppend a = \{ \mu \multiAppend a ~|~ \mu \in M \}$.
\end{center}

Global traces over $L$ can be rewritten into a multi-trace over $L$  by separating their actions according to their lifelines. For that, we can use the  $\multiAppend$ operator that places the actions in the appropriate local trace:

\begin{definition}
\label{def:multi_trace_projection}
For any $L \subseteq \mathcal{L}$, we define $\muProjection_L : \mathbb{A}(L)^* \rightarrow \mathbb{M}(L)$ as follows:
\begin{itemize}
\item $\muProjection_L(\varepsilon) = \varepsilon_L$
\item $\forall~a \in \mathbb{A}(L),~\forall~t \in \mathbb{A}(L)^*,
\muProjection_L(a.t) = a \multiAppend \muProjection_L(t)$
\end{itemize}
$\muProjection_L$ is canonically extended to sets of traces. 
\end{definition}

Fig.\ref{fig:multitrace_projections} illustrates the use of this operator with our example, which amounts to transforming the global trace of Fig.\ref{fig:colocalization_schema_global} into the multi-trace of Fig.\ref{fig:colocalization_schema_discrete}.
Given a global trace and a multi-trace obtained from observing the same execution of a DS, the multi-trace likely contains less information because of the absence of strict ordering between actions occurring on distinct lifelines.
Consequently, decomposing a global trace into a multi-trace leads to a loss of information: for a given $\mu \in \mathbb{M}(L)$, there can be several $t \in \mathbb{A}(L)^*$ s.t.~$\muProjection_L(t) = \mu$.
The multi-trace of Fig.\ref{fig:colocalization_schema_discrete} involves 3 local traces: 
$\mu_{|\{\hlf{l_p}\}\}} = \hlf{l_p}!\hms{\mathtt{pub}}$ for subsystem $\hlf{l_p}$, 
$\mu_{|\{\hlf{l_b}\}\}} = \hlf{l_b}?\hms{\mathtt{sub}}.\hlf{l_b}?\hms{\mathtt{pub}}.\hlf{l_b}!\hms{\mathtt{pub}}$ 
for subsystem $\hlf{l_b}$,
and 
$\mu_{|\{\hlf{l_s}\}\}} = \hlf{l_s}!\hms{\mathtt{sub}}.\hlf{l_s}?\hms{\mathtt{pub}}$ for $\hlf{l_s}$. 
This multi-trace can be obtained via projecting the global trace from Fig.\ref{fig:colocalization_schema_global} (as illustrated on Fig.\ref{fig:multitrace_projections}), but also via projecting:\\
\noindent 
\centerline{$\hlf{l_p}!\hms{\mathtt{pub}}.\hlf{l_s}!\hms{\mathtt{sub}}.\hlf{l_b}?\hms{\mathtt{sub}}.\hlf{l_b}?\hms{\mathtt{pub}}.\hlf{l_b}!\hms{\mathtt{pub}}.\hlf{l_s}?\hms{\mathtt{pub}}$}

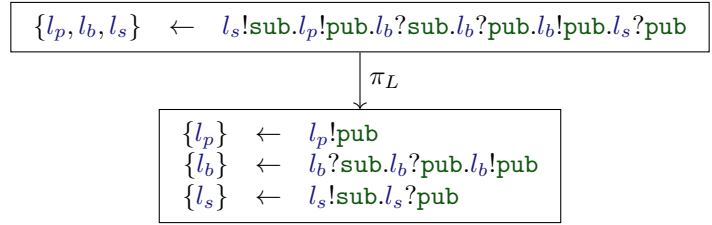
\begin{figure}[h]
    \centering

\begin{tikzpicture}
\node[draw] (ng) {
$
\begin{array}{rcl}
\{\hlf{l_p},\hlf{l_b},\hlf{l_s}\} 
&
\leftarrow
&
\hlf{l_s}!\hms{\mathtt{sub}}.\hlf{l_p}!\hms{\mathtt{pub}}.\hlf{l_b}?\hms{\mathtt{sub}}.\hlf{l_b}?\hms{\mathtt{pub}}.\hlf{l_b}!\hms{\mathtt{pub}}.\hlf{l_s}?\hms{\mathtt{pub}}
\end{array}
$
};
\node[draw, below=.75cm  of ng] (nd) {
$
\begin{array}{rcl}
\{\hlf{l_p}\}
&
\leftarrow 
&
\hlf{l_p}!\hms{\mathtt{pub}}
\\
\{\hlf{l_b}\}
&
\leftarrow 
&
\hlf{l_b}?\hms{\mathtt{sub}}.\hlf{l_b}?\hms{\mathtt{pub}}.\hlf{l_b}!\hms{\mathtt{pub}}
\\
\{\hlf{l_s}\}
&
\leftarrow 
&
\hlf{l_s}!\hms{\mathtt{sub}}.\hlf{l_s}?\hms{\mathtt{pub}}
\end{array}
$
};
\draw (ng) edge[->] node[midway,right] {$\muProjection_L$} (nd);
\end{tikzpicture}
    
    \caption{Multi-trace projection on the example from Fig.\ref{fig:colocalization_schema}}
    \label{fig:multitrace_projections}
\end{figure}


Global traces and multi-traces characterize distributed behaviors. 
Via the definition of algebraic operators, we can thus formalize means to compose and structure distributed behaviors.
In the following, we recall and adapt the operators on global traces from \cite{denotational_and_operational_semantics_for_interaction_languages_application_to_trace_analysis}.

Weak sequencing requires the introduction of a ``conflict'' predicate $\conflictPredicate$ such that for any trace $t$ in $\mathbb{A}(L)^*$ and any lifeline $l$ in $\mathcal{L}$, $t \conflictPredicate l$ signifies that $t$ contains an action occurring on lifeline $l$. It is defined as follows:
\[
\varepsilon \conflictPredicate l = \bot 
~~~~~~~\text{ and }~~~~~~~
(a.t) \conflictPredicate l = (\theta(a)=l) \vee (t \conflictPredicate l)
\]

Below, we define several operators, with two versions, one for global traces and the other for multi-traces. We use the same notation for both cases because, in practice, the context makes it possible to distinguish between them. Let us consider two global traces $t_1$ and $t_2$ in $\mathbb{A}(L)^*$ and two multi-traces $\mu_1$ and $\mu_2$ in $\mathbb{M}(L)$ with $L \subseteq \mathcal{L}$. \\

\noindent$\bullet$ $t_1 \cup t_2$ and $\mu_1 \cup \mu_2$ denote alternative defined as follows: 
{
\[
t_1 \cup t_2 = \{ t_1,~t_2 \} 
~~~~~~~~~~~~~~~~~~~~~
\mu_1 \cup \mu_2 = \{ \mu_1,~\mu_2 \}
\]}
\noindent$\bullet$ $t_1 \opStrictSeq t_2$ and $\mu_1 \opStrictSeq \mu_2$ denote strict sequencing defined as follows:
{
\[
\begin{array}{ccc}
t _1 \opStrictSeq \varepsilon = t_1 
&
~~~~~~~\text{ and }~~~~~~~
&
t_1 \opStrictSeq (a.t_2) = (t_1.a) \opStrictSeq t_2
\\
\mu _1\opStrictSeq \varepsilon_L = \mu _1 
&
~~~~~~~\text{ and }~~~~~~~
&
\mu_1 \opStrictSeq (a \multiAppend \mu_2) = (\mu_1 \multiAppend a) \opStrictSeq \mu_2
\end{array}
\]}
\noindent$\bullet$ $t_1 \opInterleaving t_2$ and $\mu_1 \opInterleaving \mu_2$ denote interleaving defined as follows:
{
\[
\begin{array}{c}
\varepsilon \opInterleaving t_2  = \{ t_2 \}  ~~~~~~~~~~~~~~~~~
t_1 \opInterleaving \varepsilon  = \{ t_1 \} \\
(a_1.t_1)  ~\opInterleaving~  (a_2.t_2) ~ = ~ ( a_1.(t_1 \opInterleaving ( a_2.t_2 ))) ~\cup~ (a_2.((a_1.t_1) \opInterleaving t_2 ))\\
\varepsilon_L \opInterleaving \mu_2  = \{ \mu_2 \}  ~~~~~~~~~~~~~~~~~
\mu_1 \opInterleaving \varepsilon_L  = \{ \mu_1 \} \\
(a_1 \multiAppend \mu_1)  ~\opInterleaving~  (a_2 \multiAppend \mu_2) ~ = ~ ( a_1 \multiAppend (\mu_1 \opInterleaving ( a_2 \multiAppend \mu_2 ))) ~\cup~ (a_2 \multiAppend ((a_1 \multiAppend \mu_1) \opInterleaving \mu_2 ))\\
\end{array}
\]}
\noindent$\bullet$ $t_1 \opWeakSeq t_2$ denotes weak sequencing on global traces defined as follows:
{
\[
\begin{array}{c}
\varepsilon \opWeakSeq t_2  = \{ t_2 \}  ~~~~~~~~~~~~~~~~~
t_1 \opWeakSeq \varepsilon  = \{ t_1 \} \\
(a_1.t_1) \opWeakSeq (a_2.t_2) =
\left\{
\begin{array}{l}
a_1.(t_1 \opWeakSeq ( a_2.t_2 )) 
~~~~~~~~~~~~~~~~~~~~~~~~~~~~~
{
\text{ if } (a_1.t_1) \conflictPredicate \theta(a_2)}
\\
( a_1.(t_1 \opWeakSeq ( a_2.t_2 ))) \cup (a_2.((a_1.t_1) \opWeakSeq \mu_2 ))
~~~~~~~
{
\text{ else}}
\end{array}
\right.
\end{array}
\]}

These operators are classically extended to sets of traces and multi-traces using the same notations (e.g., $M_1 \opInterleaving M_2 = \bigcup_{\mu_1 \in M_1} \bigcup_{\mu_2 \in M_2} \mu_1 \opInterleaving \mu_2$).
The projection operator $\muProjection_L$ preserves $\cup$, $\opStrictSeq$ and $\opInterleaving$.

Alternative $\cup$ corresponds to a non-deterministic choice between behaviors.
Interleaving $\opInterleaving$ allows any interleavings of the actions of its left and right operands.
Strict sequencing $\opStrictSeq$ imposes strict precedence (w.r.t.~``$.$'' and ``$\multiAppend$'' for resp.~traces and multi-traces) of all the actions from its left operand over all the actions from its right operand. 
Weak sequencing $\opWeakSeq$ imposes strict precedence only between actions on the same lifeline. Let us remark that there is no need for a weak sequencing operator for multi-traces, as it would then be identical to the strict sequencing operator. When considering multi-traces, weak and strict sequencing can no longer be distinguished.
Indeed, given $L \in \mathcal{L}$, let us consider three actions $a_1$, $a_2$ and $a_3$ in $\mathbb{A}(L)$ such that $\theta(a_1) \neq \theta(a_2)$ and $\theta(a_2) = \theta(a_3)$ 
Then, we have:
\[
\begin{array}{rcl}
\muProjection_L(a_2 \opWeakSeq a_1.a_3)
&
=
&
\muProjection_L(\{ a_2.a_1.a_3,~a_1.a_2.a_3 \})
\\
&
=
&
\{ (a_1,~a_2.a_3) \}
\\
\muProjection_L(a_2 \opStrictSeq a_1.a_3)
&
=
&
\muProjection_L(\{ a_2.a_1.a_3 \})
\\
&
=
&
\{ (a_1,~a_2.a_3) \}
\\
\end{array}
\]
On this example, after projection, we can no longer distinguish $a_2$ occurring before or after $a_1$.
Formally, this observation is a consequence of Prop.\ref{prop:projection_on_algebraic_operators}.

\begin{property}
\label{prop:projection_on_algebraic_operators}
For any $L \subseteq \mathcal{L}$ and any $t_1$ and $t_2$ in $\mathbb{A}(L)^*$:
\[
\begin{array}{rclcrcl}
\muProjection_L(t_1 \cup t_2) & = & \muProjection_L(t_1) \cup \muProjection_L(t_2)
&~~~~&
\muProjection_L(t_1 \opStrictSeq t_2) & = & \muProjection_L(t_1) \opStrictSeq \muProjection_L(t_2)\\
\muProjection_L(t_1 \opInterleaving t_2) & = & \muProjection_L(t_1) \opInterleaving \muProjection_L(t_2)
&~~~~&
\muProjection_L(t_1 \opWeakSeq t_2) & = & \muProjection_L(t_1) \opStrictSeq \muProjection_L(t_2)
\end{array}
\]
\end{property}

\begin{proof} See \ref{app:section2}.
\end{proof}

Operators $\opStrictSeq$, $\opInterleaving$ and $\opWeakSeq$ being associative, this allows for the definition of repetition operators in the same manner as the Kleene star is defined over the classical concatenation. Given $\diamond \in \{ \opStrictSeq,~\opWeakSeq,~\opInterleaving \}$, the Kleene closure $^{\diamond *}$ is such that for any set of traces or multi-traces $M$ we have:
\[ M^{\diamond *}
= \bigcup_{\substack{j \in \mathbb{N}}} M^{\diamond j} \text{ with } 
\left\{
\begin{array}{l}
M^{\diamond 0}
= \{ \varepsilon \} \text{ or } \{ \varepsilon_L \} 
\\
\text{and }
M^{\diamond j}
=
M \diamond M^{\diamond (j-1)}
\text{ for } j > 0
\end{array}
\right.
\]

\subsection{Lifeline removal on multi-traces \label{sec:lf_tr_removal}}

Just as we defined a lifeline removal operator $\lifelineElim$ on interactions in Def.\ref{def:lifeline_removal_operator_on_interactions}, we now define a similar operator on multi-traces using the same notation in Def.\ref{def:lifeline_removal_operator_on_multitraces}.

\begin{definition}\label{def:lifeline_removal_operator_on_multitraces}
The function $\lifelineElim_H: \mathbb{M}(L) \rightarrow \mathbb{M}(L\setminus H)$, defined for any $L \in \mathcal{L}$ and $H \subset L$, is such that:
\[
\forall \mu \in \mathbb{M}(L),~
\lifelineElim_H(\mu) = (\mu_{|l})_{l \in L \setminus H}
\]
\end{definition}

It simply consists in removing the local traces of all the lifelines in the set $H \subseteq L$ of lifelines we want to remove.
The function $\lifelineElim_H$ is canonically extended to sets of multi-traces. 
Prop.\ref{lem:elimination_append} relates the $\multiAppend$ concatenation operation with the removal operation $\lifelineElim$.

\begin{property}
\label{lem:elimination_append}
For any $L \in \mathcal{L}$, $H \subseteq L$, $\mu \in \mathbb{M}(L)$ and $a \in \mathbb{A}(L)$:
\begin{itemize}
    \item if $\theta(a) \in H$ then $\lifelineElim_H(a \multiAppend \mu) = \lifelineElim_H(\mu)$ and $\lifelineElim_H(\mu \multiAppend a) = \lifelineElim_H(\mu)$
    \item else $\lifelineElim_H(a \multiAppend \mu) = a \multiAppend \lifelineElim_H(\mu)$ and $\lifelineElim_H(\mu \multiAppend a) = \lifelineElim_H(\mu) \multiAppend a$
\end{itemize}
\end{property}

\begin{proof}
Trivial.
\end{proof}

$\mathbb{M}(L)$ fitted with the set of algebraic operators $\mathcal{F} = \{\cup, \opStrictSeq, \opInterleaving, ~^{\opStrictSeq *},~^{\opInterleaving *} \}$ is an $\mathcal{F}$-algebra of signature $L \subseteq \mathcal{L}$.
For any $H \subseteq L$, the lifeline removal operator $\lifelineElim_H$ preserves the algebraic structures between the $\mathcal{F}$-algebras of signatures $L$ and $L \setminus H$.
In particular, this implies Prop.\ref{prop:multi_trace_elimination_preserves_sched}.

\begin{property}
\label{prop:multi_trace_elimination_preserves_sched}
For any $L \in \mathcal{L}$, any $H \subseteq L$, any $\mu_1$ and $\mu_2$ in $\mathbb{M}(L)$, for any $\diamond \in \{\cup,\opStrictSeq,\opInterleaving\}$, we have:
\[
\lifelineElim_H(\mu_1 \diamond \mu_2) = \lifelineElim_H(\mu_1) \diamond \lifelineElim_H(\mu_2)
\]
\end{property}

\begin{proof} See \ref{app:section2}.
\end{proof}

The result from Prop.\ref{prop:multi_trace_elimination_preserves_sched} can be extended to sets of multi-traces. Repetitions of those scheduling algebraic operators with their Kleene closures are also preserved by $\lifelineElim$.

\subsection{Multi-trace semantics of interactions \label{sec:semantics}}

 In Def.\ref{def:interaction_semantics}, we provide interactions with denotational semantics in terms of multi-traces. Generally speaking, a denotational semantics consists in associating each syntactic operator of the language with an algebraic counterpart in the carrier sets, in our case, in the set of multi-traces.

\begin{definition}
\label{def:interaction_semantics}
For any $L \subseteq \mathcal{L}$, we define $\sigma_L : \mathbb{I}(L) \rightarrow {\mathcal P}(\mathbb{M}(L))$ as follows:
{\small
\[
\begin{array}{lcr}
\sigma_L(\varnothing) = \{\varepsilon_L\}
&
~~~~~~~~~~
&
\forall~a \in \mathbb{A}(L),~\sigma_L(a) = \{ a \multiAppend \varepsilon_L \}
\end{array}
\]
}
For any $i_1$ and $i_2$ in $\mathbb{I}(L)$, any $(k,\diamond) \in \{(S,\opStrictSeq),~(W,\opStrictSeq),~(P,\opInterleaving)\}$:
{\small
\[
\begin{array}{c}
\begin{array}{lllclll}
\sigma_L(alt(i_1,i_2)) & = & \sigma_L(i_1) \cup \sigma_L(i_2)
&
~~~~~~~
&
\sigma_L(strict(i_1,i_2)) & = & \sigma_L(i_1) \opStrictSeq \sigma_L(i_2)
\\
\sigma_L(seq(i_1,i_2)) & = & \sigma_L(i_1) \opStrictSeq \sigma_L(i_2)
&
~~~~~~~
&
\sigma_L(par(i_1,i_2)) & = & \sigma_L(i_1) \opInterleaving \sigma_L(i_2)
\end{array}
\\
\sigma_L(loop_k(i)) = \sigma_L(i)^{\diamond *}
\end{array}
\]
}
\end{definition}

Def.\ref{def:interaction_semantics}
associates an operation carrying on multi-traces to each language operator ($alt$, $strict$, $seq$, $par$, $loop_k$) of $\mathbb{I}(L)$. For example, the operator $alt$ is interpreted as the union operator $\cup$ in sets, indicating that a multi-trace of an interaction of the form $alt(i_1,i_2)$ is either a multi-trace of $i_1$ or a multi-trace or $i_2$. So, Def.\ref{def:interaction_semantics} defines the set of multi-traces accepted by an interaction $i$ inductively on the term structure of $i$ in terms of operations on multi-traces. A curiosity of Def.\ref{def:interaction_semantics} is that the two operators $strict$ and $seq$ have identical semantic interpretations, that is the strict sequencing operation ``$\opStrictSeq$''. This follows directly from the observation made in Sec.\ref{sec:multitraces} that the two operators, strict and weak sequencing, are distinct for global traces but indistinguishable for multi-traces.

In this paper, we are interested in interaction-based runtime verification, that is, by the analysis of multi-traces, built by grouping all remote local traces of a DS. In this context, multi-trace semantics like that of Def.\ref{def:interaction_semantics} are clearly necessary. Nevertheless, in the literature, e.g. in \cite{UMLInteractionsMeetSM17,denotational_and_operational_semantics_for_interaction_languages_application_to_trace_analysis}, denotational semantics of interactions are classically given in terms of global traces. Syntactic operators ($\varnothing$, $a$, $alt$, $strict$, $seq$, $par$, $loop_k$) of the interaction language are them interpreted with operators on global traces rather than multi-traces (resp. $\{\varepsilon\}$, $\{a\}$, $\cup$, $\opStrictSeq$, $\opWeakSeq$, $\opInterleaving$, appropriate Kleene operators). Let us denote by $\sigma : \mathbb{I}(L) \rightarrow {\mathcal P}(\mathbb{A}(L)^*)$ such denotational semantics defined for global traces (and comprehensively defined in~\cite{denotational_and_operational_semantics_for_interaction_languages_application_to_trace_analysis}).
Thanks to Prop.\ref{prop:projection_on_algebraic_operators}, $\sigma_L$ and $\sigma$ are directly related via the projection operator $\muProjection_L$ as follows:
\[
\forall~i \in \mathbb{I}(L),~ \sigma_L(i) = \muProjection_L(\sigma(i))
\]
The fact that we map the weak sequencing $seq$ symbol of the interaction language to the strict sequencing operator $\opStrictSeq$ on multi-traces (and, by extension $loop_W$ to $~^{\opStrictSeq *}$) enables us to leverage Prop.\ref{prop:multi_trace_elimination_preserves_sched} while being proven sound thanks to Prop.\ref{prop:projection_on_algebraic_operators}.

Th.\ref{th:semantics_of_lifeline_removal_in_interactions} relates the semantics of an interaction $i \in \mathbb{I}(L)$ w.r.t.~that of $\lifelineElim_H(i)$ for any $L \subseteq \mathcal{L}$ and $H \subseteq L$.

\begin{theorem}
\label{th:semantics_of_lifeline_removal_in_interactions}
For any $L \subseteq \mathcal{L}$, any $H \subseteq L$ and any $i \in \mathbb{I}(L)$, we have:
\[
\sigma_{L\setminus H}(\lifelineElim_H(i)) = \lifelineElim_H(\sigma_L(i))
\]
\end{theorem}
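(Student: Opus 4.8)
The plan is to prove Theorem~\ref{th:semantics_of_lifeline_removal_in_interactions} by structural induction on the interaction term $i \in \mathbb{I}(L)$, using the fact that both $\sigma_L$ (Def.~\ref{def:interaction_semantics}) and $\lifelineElim_H$ on interactions (Def.~\ref{def:lifeline_removal_operator_on_interactions}) are defined inductively on the term structure, and that $\lifelineElim_H$ on multi-traces commutes with the algebraic operators $\cup$, $\opStrictSeq$, $\opInterleaving$ and their Kleene closures (Prop.~\ref{prop:multi_trace_elimination_preserves_sched} and the remark that follows it). Throughout I reason modulo the $\varnothing$-simplification rules, which by Def.~\ref{def:lifeline_removal_operator_on_interactions} is legitimate since they preserve accepted (multi-)traces.

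For the base cases: if $i = \varnothing$, both sides equal $\{\varepsilon_{L \setminus H}\}$, since $\lifelineElim_H(\varnothing) = \varnothing$, $\sigma_{L\setminus H}(\varnothing) = \{\varepsilon_{L\setminus H}\}$, and $\lifelineElim_H(\{\varepsilon_L\}) = \{\varepsilon_{L\setminus H}\}$ by Def.~\ref{def:lifeline_removal_operator_on_multitraces}. If $i = a$ for some $a \in \mathbb{A}(L)$, there are two subcases. If $\theta(a) \in H$, then $\lifelineElim_H(a) = \varnothing$, so the left side is $\{\varepsilon_{L\setminus H}\}$; and $\sigma_L(a) = \{a \multiAppend \varepsilon_L\}$, so by Prop.~\ref{lem:elimination_append} (first case, since $\theta(a)\in H$) the right side is $\lifelineElim_H(\{a \multiAppend \varepsilon_L\}) = \{\lifelineElim_H(\varepsilon_L)\} = \{\varepsilon_{L\setminus H}\}$. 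If $\theta(a) \notin H$, then $\lifelineElim_H(a) = a$, so the left side is $\{a \multiAppend \varepsilon_{L\setminus H}\}$; and by Prop.~\ref{lem:elimination_append} (second case) the right side is $a \multiAppend \lifelineElim_H(\{\varepsilon_L\}) = \{a \multiAppend \varepsilon_{L\setminus H}\}$, matching.

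For the inductive step, consider $i = f(i_1,i_2)$ with $f \in \{strict, seq, alt, par\}$ and assume the claim holds for $i_1$ and $i_2$. By Def.~\ref{def:lifeline_removal_operator_on_interactions}, $\lifelineElim_H(f(i_1,i_2)) = f(\lifelineElim_H(i_1), \lifelineElim_H(i_2))$, and by Def.~\ref{def:interaction_semantics} each $f$ is interpreted by one of $\cup$, $\opStrictSeq$, $\opInterleaving$ (with $strict$ and $seq$ both mapped to $\opStrictSeq$). So $\sigma_{L\setminus H}(f(\lifelineElim_H(i_1),\lifelineElim_H(i_2))) = \sigma_{L\setminus H}(\lifelineElim_H(i_1)) \diamond \sigma_{L\setminus H}(\lifelineElim_H(i_2))$, which by the induction hypothesis equals $\lifelineElim_H(\sigma_L(i_1)) \diamond \lifelineElim_H(\sigma_L(i_2))$, and by Prop.~\ref{prop:multi_trace_elimination_preserves_sched} (extended to sets of multi-traces) this equals $\lifelineElim_H(\sigma_L(i_1) \diamond \sigma_L(i_2)) = \lifelineElim_H(\sigma_L(f(i_1,i_2)))$. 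The loop case $i = loop_k(i_1)$ is analogous: $\lifelineElim_H$ commutes with $loop_k$ syntactically, $\sigma$ interprets $loop_k$ by the Kleene closure $^{\diamond *}$, and the remark following Prop.~\ref{prop:multi_trace_elimination_preserves_sched} states these closures are preserved by $\lifelineElim_H$, so we conclude via the induction hypothesis. The main obstacle — really the only nontrivial point — is ensuring the commutation of $\lifelineElim_H$ with the Kleene closures is properly justified; this reduces to an easy induction on the closure index $j$ using Prop.~\ref{prop:multi_trace_elimination_preserves_sched} together with the base fact $\lifelineElim_H(\{\varepsilon_L\}) = \{\varepsilon_{L\setminus H}\}$, which can be folded into the same structural induction or cited from the remark already made in the text.
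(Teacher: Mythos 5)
Your proof is correct and follows essentially the same route as the paper's: structural induction on $i$, with the base cases split on whether $\theta(a)\in H$, and the composite cases discharged by combining the syntactic commutation of $\lifelineElim_H$ with the term constructors, the inductive hypothesis, and Prop.\ref{prop:multi_trace_elimination_preserves_sched} (extended to sets and to Kleene closures). Your closing remark that the Kleene-closure commutation deserves its own small induction on the closure index is a fair point the paper leaves implicit by citing the remark after Prop.\ref{prop:multi_trace_elimination_preserves_sched}.
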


\begin{proof}
See \ref{app:section2}.
\end{proof}

Thanks to $\sigma_L$ being related to the denotational style $\sigma$ of \cite{denotational_and_operational_semantics_for_interaction_languages_application_to_trace_analysis} via $\sigma_L = \muProjection_L \circ \sigma$, we can also leverage the operational style formulation of the semantics from \cite{denotational_and_operational_semantics_for_interaction_languages_application_to_trace_analysis}.
The definition of such semantics, in the style of Plotkin \cite{a_structural_approach_to_operational_semantics}, relies on two predicates denoted by $\downarrow$ and $\rightarrow$ and respectively called termination and execution relation.
For any interaction $i$, we say that $i$ terminates if $i \downarrow$, which implies that $\varepsilon_L \in \sigma_L(i)$. For any action $a$, any position $p$ with $i(p) = a$ and interaction $i'$, we say that $i'$ is a derivative \cite{derivative_Brzozowski64} of $i$ after the execution of $a@p$ iff $i \xrightarrow{a@p} i'$. In particular, this means that all the multi-traces of the form $a \multiAppend \mu'$ with $\mu' \in \sigma_L(i')$ are in $\sigma_L(i)$.

\begin{property}
\label{prop:operational_formulation_multitrace}
There exist a predicate $\downarrow \subseteq \mathbb{I}(L)$ and a relation $\rightarrow \subseteq \mathbb{I}(L) \times (\mathbb{A}(L)\times\{1,2\}^*) \times \mathbb{I}(L)$ such that:
\begin{itemize}
    \item for any interactions $i$ and $i'$, any action $a$ and position $p$, if $i \xrightarrow{a@p} i'$, then $a = i(p)$ and $i'$ is unique
    \item for any $i \in \mathbb{I}(L)$, $\sigma(i) \subset \mathbb{A}(L)^*$ is a set of traces such that $\sigma_L(i) = \muProjection_L(\sigma(i))$ and such that, for any $t \in \mathbb{A}(L)^*$, the statement $t \in \sigma(i)$ holds iff it can be proven using the following two rules:
\end{itemize}

\begin{center}
\begin{minipage}{3cm}
\begin{prooftree}
\AxiomC{$i \downarrow$\vphantom{$\xrightarrow{a@p}$}}
\UnaryInfC{$\varepsilon \in \sigma(i)$}
\end{prooftree}
\end{minipage}
\begin{minipage}{4cm}
\begin{prooftree}
\AxiomC{$\mu \in \sigma(i')$}
\AxiomC{$i \xrightarrow{a@p} i'$}
\BinaryInfC{$a.t \in \sigma(i)$}
\end{prooftree}
\end{minipage}
\end{center}

\end{property}

\begin{proof}
Definitions of $\downarrow$ and $\rightarrow$ as well as proof for $\sigma$ on global trace available in \cite{denotational_and_operational_semantics_for_interaction_languages_application_to_trace_analysis}. 
In \ref{app:opsem_lfrem}, we give the set of predicates ($\downarrow$, $\collidesLf$, $\isPruneOf{l}$, $\xrightarrow{a@p}$) defining the operational semantics, with the notations from~\cite{denotational_and_operational_semantics_for_interaction_languages_application_to_trace_analysis} and with additional annotations for position traceability. The annotation of the positions $p$ comes at no cost in the inductive definition of $\rightarrow$ from \cite{denotational_and_operational_semantics_for_interaction_languages_application_to_trace_analysis}.  The reader is referred to \cite{denotational_and_operational_semantics_for_interaction_languages_application_to_trace_analysis} for comments and justifications of the various rules.
\end{proof}

The algebraic characterization of Def.\ref{def:interaction_semantics} underpins results involving the use of the $\lifelineElim$ function (namely Th.\ref{th:semantics_of_lifeline_removal_in_interactions}) while the operational characterization of Prop.\ref{prop:operational_formulation_multitrace} is required in the definition and proof of algorithms that involves the execution of interactions.
In this paper, the definition of operational semantics is not essential for the paper and is therefore only given in \ref{app:opsem_lfrem}. It suffices to consider their existence (Prop.\ref{prop:operational_formulation_multitrace}). 
In addition, we will use the notation $i \xrightarrow{a} i'$ (resp. $i\not\xrightarrow{a}$) when there exists (resp. does not exist) a position $p \in pos(i)$ and an interaction $i'$ s.t. $i \xrightarrow{a@p} i'$. 
In the following, for the sake of concision, we may interchangeably use the notations $i \xrightarrow{a} i'$ and $i \xrightarrow{a@p} i'$ depending on whether or not the knowledge of the specific position $p$ is useful.

The $\rightarrow$ execution relation of the operational formulation also manifests interesting properties w.r.t.~the lifeline removal operator $\lifelineElim$.
Prop.\ref{prop:execution_and_lifeline_removal} then relates follow-up interactions from interactions $i$ and $\lifelineElim_H(i)$ that result from the execution of the same action $a$ at the same position $p$.
Roughly speaking, this property refers to the preservation of ``executions'' by the lifeline removal operator, provided the lifeline of the executed action is not removed.

\begin{property}
\label{prop:execution_and_lifeline_removal}
For any $L \subset \mathcal{L}$, any interaction $i \in \mathbb{I}(L)$, any set of lifelines $ H \subset L$, any action $a \in \mathbb{A}(L)$ s.t.~ $\theta(a) \not\in H$ and any position $p \in pos(i)$:
\[
(\exists~i' \in \mathbb{I}(L),~i \xrightarrow{a@p} i') ~\Rightarrow~ (\lifelineElim_H(i) \xrightarrow{a@p} \lifelineElim_H(i'))
\]
\end{property}

\begin{proof}
This is implied by Prop.\ref{prop:positions_and_lifeline_removal} and via reasoning by structural induction on the rules of the structural operational semantics from \cite{denotational_and_operational_semantics_for_interaction_languages_application_to_trace_analysis}. See \ref{app:opsem_lfrem} for details.
\end{proof}

\section{Partial observation: solving membership for multi-trace prefixes\label{sec:multipref}}

\subsection{Partial observation and multi-prefixes\label{ssec:partial_obs_multipref}}

Offline Runtime Verification (ORV) \cite{runtime_verification_for_decentralised_and_distributed_systems_FrancalanzaPS2018} designates a two steps process which consists 
\begin{enumerate}
    \item in collecting (multi-)traces that are observations of some executions of a system and 
    \item in analyzing these (multi-)traces against a formal specification of the system to detect potential deviations from its expected behaviors.
\end{enumerate}
In this aspect, ORV is related to passive testing \cite{passive_conformance_testing_of_service_choreographies} (in so far as passive tests amounts to a-posteriori trace analysis) and the membership problem \cite{pattern_matching_and_membership_for_hierarchical_message_sequence_charts} (i.e.~solving whether or not a trace belongs to a set of traces that corresponds to the semantics of a specification).

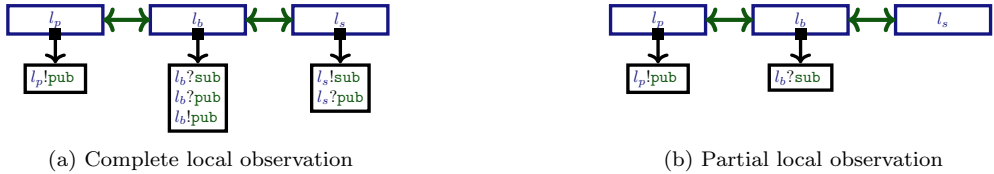
\begin{figure}[h]
    \centering
    
\begin{subfigure}[b]{.475\textwidth}
    \centering
    \scalebox{.625}{\begin{tikzpicture}
\tikzstyle{subsys}=[draw,hibou_col_lf,rectangle,line width=2pt,minimum height=.6cm,minimum width=2cm,inner sep=0,outer sep=0]
\tikzstyle{element_blk}=[draw,black,rectangle,line width=2pt,minimum height=2cm,minimum width=2cm,inner sep=0,outer sep=0]
\node[subsys] (A) at (-3,0) {
    $\hlf{l_p}$
};
\node[subsys] (B) at (0,0) {
    $\hlf{l_b}$
};
\node[subsys] (C) at (3,0) {
    $\hlf{l_s}$
};
\draw[<->,line width=2.5pt,hibou_col_ms] (A) -- (B);
\draw[<->,line width=2.5pt,hibou_col_ms] (C) -- (B);
\node[rectangle, fill=black] (Alog) at (A.south) {};
\node[rectangle, fill=black] (Blog) at (B.south) {};
\node[rectangle, fill=black] (Clog) at (C.south) {};
\node[draw,black,rectangle,line width=2pt,align=left,below=.5cm of Alog] (Atra) {
$\hlf{l_p}!\hms{\mathtt{pub}}$
};
\node[draw,black,rectangle,line width=2pt,align=left,below=.5cm of Blog] (Btra) {
$\hlf{l_b}?\hms{\mathtt{sub}}$\\
$\hlf{l_b}?\hms{\mathtt{pub}}$\\
$\hlf{l_b}!\hms{\mathtt{pub}}$
};
\node[draw,black,rectangle,line width=2pt,align=left,below=.5cm of Clog] (Ctra) {
$\hlf{l_s}!\hms{\mathtt{sub}}$\\
$\hlf{l_s}?\hms{\mathtt{pub}}$
};
\draw[->,line width=2pt,black] (Alog) -- (Atra);
\draw[->,line width=2pt,black] (Blog) -- (Btra);
\draw[->,line width=2pt,black] (Clog) -- (Ctra);
\end{tikzpicture}}
    \caption{Complete local observation\label{fig:observation_complete}}
\end{subfigure} 
\begin{subfigure}[b]{.475\textwidth}
    \centering
    \scalebox{.625}{\begin{tikzpicture}
\tikzstyle{subsys}=[draw,hibou_col_lf,rectangle,line width=2pt,minimum height=.6cm,minimum width=2cm,inner sep=0,outer sep=0]
\tikzstyle{element_blk}=[draw,black,rectangle,line width=2pt,minimum height=2cm,minimum width=2cm,inner sep=0,outer sep=0]
\node[subsys] (A) at (-3,0) {
    $\hlf{l_p}$
};
\node[subsys] (B) at (0,0) {
    $\hlf{l_b}$
};
\node[subsys] (C) at (3,0) {
    $\hlf{l_s}$
};
\draw[<->,line width=2.5pt,hibou_col_ms] (A) -- (B);
\draw[<->,line width=2.5pt,hibou_col_ms] (C) -- (B);
\node[rectangle, fill=black] (Alog) at (A.south) {};
\node[rectangle, fill=black] (Blog) at (B.south) {};
\node[draw,black,rectangle,line width=2pt,align=left,below=.5cm of Alog] (Atra) {
$\hlf{l_p}!\hms{\mathtt{pub}}$
};
\node[draw,black,rectangle,line width=2pt,align=left,below=.5cm of Blog] (Btra) {
$\hlf{l_b}?\hms{\mathtt{sub}}$
};
\draw[->,line width=2pt,black] (Alog) -- (Atra);
\draw[->,line width=2pt,black] (Blog) -- (Btra);
\node[opacity=0,draw,black,rectangle,line width=2pt,align=left,below=.5cm of Blog] (BtraPhantom) {
$\hlf{l_b}?\hms{\mathtt{sub}}$\\
$\hlf{l_b}?\hms{\mathtt{pub}}$\\
$\hlf{l_b}!\hms{\mathtt{pub}}$
};
\end{tikzpicture}}
    \caption{Partial local observation\label{fig:observation_partial}}
\end{subfigure} 

    \caption{Complete and partial observation for the example from Fig.\ref{fig:colocalization_schema_discrete}\label{fig:observation}}
\end{figure}

Yet, in the first step of collecting the input traces that are to be fed to the analysis algorithm, we are not likely guaranteed to observe the executed behavior in its entirety.
Indeed, it might be so that some subsystems cannot be equipped with observation devices. Moreover, due to the absence of synchronization between the local observations, the different logging processes might cease at uncorrelated moments.

For example, if we consider our example behavior from Fig.\ref{fig:colocalization_schema}, depending on the (monitoring) architecture which allows the observation of the corresponding DS, the observation could yield, among others: the multi-trace from Fig.\ref{fig:observation_complete} or the one from Fig.\ref{fig:observation_partial}.
While the former corresponds to a complete observation of our behavior, for the latter, subsystem $\hlf{l_s}$ was not observed and observation ceased too early on subsystem $\hlf{l_b}$.
Indeed, if the observer on $\hlf{l_b}$ kept observing the execution after $\hlf{l_b}?\hms{\mathtt{sub}}$, it would, after an unspecified amount of time, have observed $\hlf{l_b}?\hms{\mathtt{pub}}$ and $\hlf{l_b}!\hms{\mathtt{pub}}$.

In this paper, the notion of partial observation of a behavior is only limited to the two aforementioned cases i.e., \textbf{(1)} some subsystems not being observed at all or \textbf{(2)} some subsystems not being observed long enough.
This notion of partial observation coincides with the observation of prefixes (in the sense of multi-traces) of the multi-traces that characterize the corresponding behaviors completely. We call such prefixes {\em multi-prefixes} and we define them in Def.\ref{def:multitrace_prefix}.

\begin{definition}
\label{def:multitrace_prefix}
For any $L \subseteq \mathcal{L}$ and any $\mu \in \mathbb{M}(L)$, the set of all {\em multi-prefixes} of $\mu$ is denoted by $\overline{\mu} = \{ \mu' ~|~\mu' \in \mathbb{M}(L), \forall~l \in L, \mu'_{|l} \in \overline{\mu_{|l}}\}$.
\end{definition}

If we denote by $\mu$ and $\mu'$ the multi-traces resp.~represented on Fig.\ref{fig:observation_complete} and Fig.\ref{fig:observation_partial}, we have that $\mu' \in \overline{\mu}$ because for each of the three lifelines $\hlf{l} \in \{\hlf{l_p},\hlf{l_b},\hlf{l_s}\}$ we have $\mu'_{|\hlf{l}} \in \overline{\mu_{|\hlf{l}}}$.

Let us remark that $\mu'$ is a prefix of a multi-trace $\mu$ (i.e.~$\mu' \in \overline{\mu}$) iff there exists $\mu''$ verifying $\mu' \opStrictSeq \mu'' = \mu$.
We extend the $\overline{\phantom{\mu}}$ notation to sets of multi-traces as follows: $\overline{M} = \bigcup_{\mu \in M} \overline{\mu}$ for any $M \subseteq \mathbb{M}(L)$.

\begin{property}
\label{prop:multiprefix_and_projection}
We have:
\[
\begin{array}{lrcl}
\forall~L \subseteq \mathcal{L},~\forall~t \in \mathbb{A}(L)^*,
&
\muProjection_L(\overline{t})
&
\subseteq
&
\overline{\muProjection_L(t)}
\\
\exists~L \subseteq \mathcal{L},~\exists~t \in \mathbb{A}(L)^*,
&
\overline{\muProjection_L(t)}
&
\not\subset 
&
\muProjection_L(\overline{t})
\end{array}
\]
\end{property}

\begin{proof}
For the first item, let us reason by induction:
\[
\begin{array}{rcl}
\muProjection_L(\overline{\varepsilon})
&
=
&
\muProjection_L(\{\varepsilon\}) = \{\varepsilon_L\} = \overline{\{\varepsilon_L\}} = \overline{\muProjection_L(\varepsilon)}
\\
\muProjection_L(\overline{a.t})
&
=
&
\muProjection_L(\{\varepsilon\} \cup a.\overline{t}) = \{\varepsilon_L \} \cup  a \multiAppend \muProjection_L(\overline{t}) = \{\varepsilon_L \} \cup  a \multiAppend \overline{\muProjection_L(t)} \\
& = & \overline{a \multiAppend \muProjection_L(t)} = \overline{\muProjection_L(a.t)} 
\end{array}
\]
For the second item, let us consider $L = \{\hlf{l_1},\hlf{l_2}\}$ and $t = \hlf{l_1}!\hms{m}.\hlf{l_2}?\hms{m}$. Then:
\[
\begin{array}{rcl}
\muProjection_L(\overline{t})
&
=
&
\muProjection_L(\{\varepsilon,~~\hlf{l_1}!\hms{m},~~\hlf{l_1}!\hms{m}.\hlf{l_2}?\hms{m}\})
=
\{\varepsilon_L,~~(\hlf{l_1}!\hms{m},~\varepsilon),~~(\hlf{l_1}!\hms{m},~\hlf{l_2}?\hms{m})\}
\\
\overline{\muProjection_L(t)}
&
=
&
\overline{(\hlf{l_1}!\hms{m},~\hlf{l_2}?\hms{m})}
=
\{\varepsilon_L,~~(\hlf{l_1}!\hms{m},~\varepsilon),~~(\varepsilon,~\hlf{l_2}?\hms{m}),~~(\hlf{l_1}!\hms{m},~\hlf{l_2}?\hms{m})\}
\end{array}
\]
\end{proof}

Prop.\ref{prop:multiprefix_and_projection}, 
points out that not all multi-prefixes of a multi-trace correspond to the projection of a prefix of a corresponding global trace. With Prop.\ref{prop:lifeline_elimination_and_multi_prefixes}, we relate $\lifelineElim$ to multi-prefixes.

\begin{property}
\label{prop:lifeline_elimination_and_multi_prefixes}
For any multi-trace $\mu \in \mathbb{M}(L)$, any set of multi-traces $M \subseteq \mathbb{M}(L)$ and any set of lifelines $H \subset L$:
\[
\begin{array}{c}
\lifelineElim_H(\overline{M}) = \overline{\lifelineElim_H(M)}\\
\text{and}\\
\left(
\begin{array}{c}
( \lifelineElim_H(\mu) \in \lifelineElim_H(M) )~
\wedge~( \forall~l \in H,~ \mu_{|l} = \varepsilon )
\end{array}
\right)
\Rightarrow 
( \mu \in \overline{M} )
\end{array}
\]
\end{property}

\begin{proof}
The first point in trivial. 
For the second, if $\lifelineElim_H(\mu) \in \lifelineElim_H(M)$ this means that $(\mu_{|l})_{l \in L \setminus H} \in \lifelineElim_H(M)$. 
Then, there exists a multi-trace $\mu^0$ in $M$ and $|H|$ trace components $(t_l^0 \in \mathbb{A}_l^*)_{l \in H}$
such that $\forall~l \in L \setminus H$, $\mu_{|l}^0 = \mu_{|l}$ and $\forall~l \in H$, $\mu_{|l}^0 = t_l^0$. 
Let us then consider the multi-trace $\mu^1$ such that $\forall~l \in L \setminus H$, $\mu_{|l}^1 = \varepsilon$ and $\forall~l \in H$, $\mu_{|l}^1 = t_l^0$. Then, by construction, of $\mu^0$ and $\mu^1$, we get $\mu^0 = \mu \opStrictSeq \mu^1$ and hence $\mu$ is a prefix (in the sense of multi-traces) of $\mu^0 \in M$. Therefore $\mu \in \overline{M}$.
\end{proof}

\subsection{Offline Runtime Verification from interactions}

In \cite{a_small_step_approach_to_multi_trace_checking_against_interactions} we proposed an algorithm to check whether or not a multi-trace $\mu$ belongs to the semantics of an interaction $i$.
Its key principle is to find a globally ordered behavior specified by $i$ (via the $\rightarrow$ execution relation) that matches $\mu$ i.e., an accepted global trace $t$ that can be projected into $\mu$.

To do so, it relies on a rule $(i,a \multiAppend \mu') \leadsto (i',\mu')$ with $i \xrightarrow{a} i'$ in which an action $a$ is simultaneously consuming from the multi-trace and executed in the interaction.
Because these actions always match the head of a local trace component of the multi-trace, if successive applications of this rule result in emptying the multi-trace (i.e., reaching the empty multi-trace), then the concatenation of these actions yields a global trace that projects into the initial multi-trace.
Moreover, because these actions correspond to actions that are successively executed on the initial interaction and its successive derivatives, their concatenation corresponds to a prefix (in the sense of global traces) of a trace in the semantics of the initial interaction. In addition, following Prop.\ref{prop:operational_formulation_multitrace}, if the last derivative terminates, this means that their concatenation belongs to the semantics of the initial interaction.

\begin{figure}[h]
    \centering

\begin{subfigure}{.975\textwidth}
\centering
\scalebox{.725}{\input{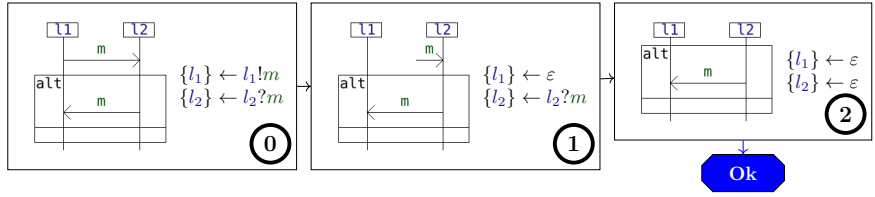}}
\caption{Completely observed correct behavior}
\label{fig:expla_ana_complete}
\end{subfigure}

\vspace*{.25cm}

\begin{subfigure}{.975\textwidth}
\centering
\scalebox{.725}{\input{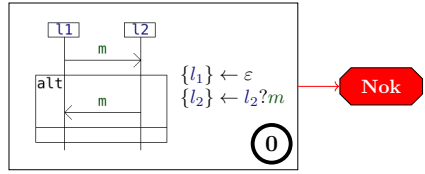}}
\caption{Partially observed correct behavior}
\label{fig:expla_ana_partial}
\end{subfigure}
    
    \caption{Principle and limitations of the algorithm from \cite{a_small_step_approach_to_multi_trace_checking_against_interactions}}
    \label{fig:expla_ana}
\end{figure}

Fig.\ref{fig:expla_ana_complete} illustrates the principle of that algorithm on a simple example. It consists of a proof which shows that the multi-trace $\mu_0 = (\hlf{l_1}!\hms{m},~\hlf{l_2}?\hms{m})$ over $L = \{\hlf{l_1},\hlf{l_2}\}$ belongs to the semantics of:\\
\noindent\centerline{
$
i_0 = seq(
          strict(\hlf{l_1}!\hms{m},\hlf{l_2}?\hms{m}),
          alt(
              strict(\hlf{l_2}!\hms{m},\hlf{l_1}?\hms{m}),
              \varnothing
          )
      )
$
}
\\
\noindent On Fig.\ref{fig:expla_ana_complete}, each square annotated with a circled number (e.g. \textcircled{0}) contains an interaction drawn on the left and a multi-trace on the right.
Fig.\ref{fig:expla_ana_complete} represents a graph in which vertices correspond to tuples $(i,\mu)$ (here, the three annotated squares corresponding to vertices $(i_0,\mu_0)$, $(i_1,\mu_1)$ and $(i_2,\mu_2)$) and transitions correspond to elementary steps of the algorithm from \cite{a_small_step_approach_to_multi_trace_checking_against_interactions}. The exploration of this graph, from the starting vertex $(i_0,\mu_0)$, allows determining whether or not $\mu_0 \in \sigma_L(i_0)$. In this example, the fact that vertex $(i_2,\mu_2)$ is reachable from $(i_0,\mu_0)$ and that $i_2$ accepts the empty multi-trace proves that $\mu_0 \in \sigma_L(i_0)$. This directly follows from Prop.\ref{prop:operational_formulation_multitrace}.

This algorithm solves the membership problem for $\sigma_L(i)$, for any $i \in \mathbb{I}(L)$. However, its practical goal is to analyze multi-traces that correspond to executions of a DS under test. But, as discussed in Sec.\ref{ssec:partial_obs_multipref}, in all generality, we have no guarantee of having observed an execution in its entirety. For that reason, following the definition of partial observation in Sec.\ref{ssec:partial_obs_multipref}, a sound Offline Runtime Verification algorithm from interaction should rather solve the membership problem for $\overline{\sigma_L(i)}$.
This would avoid false negatives arising from interpreting a consequence of partial observation as the system under test deviating from its specification.

However, adapting the algorithm from \cite{a_small_step_approach_to_multi_trace_checking_against_interactions} to that new problem is not trivial. 
Intuitively, one could do away with the verification of termination so that we may be able to identify prefixes of accepted behaviors.
Yet, those prefixes would then only correspond to projections of prefixes of a global trace in $\sigma(i)$. However, Prop.\ref{prop:multiprefix_and_projection} states not all multi-prefixes of a multi-trace correspond to the projection of a prefix of a corresponding global trace.
This is illustrated on Fig.\ref{fig:expla_ana}, which corresponds to the counter example from the proof of Prop.\ref{prop:multiprefix_and_projection}.
On Fig.\ref{fig:expla_ana_partial}, we can see that although $(\varepsilon,~\hlf{l_2}?\hms{m})$ is a multi-prefix of $(\hlf{l_1}!\hms{m},~\hlf{l_2}?\hms{m})$, the algorithm cannot progress 
to state that the multi-trace is an accepted multi-prefix.

\begin{figure}[h]
    \centering
    \scalebox{.725}{\input{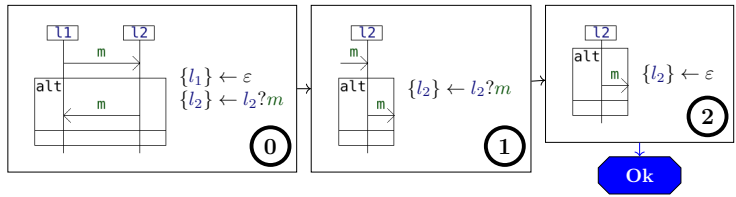}}
    \caption{Application of an algorithm leveraging $\lifelineElim$}
    \label{fig:expla_ana_hiding}
\end{figure}

In the following, we leverage the use of $\lifelineElim$, and in particular Th.\ref{th:semantics_of_lifeline_removal_in_interactions} to define an algorithm solving membership for any $\overline{\sigma_L(i)}$.
To be able to pursue the analysis once a situation such that the one from Fig.\ref{fig:expla_ana_partial} is reached, we use $\lifelineElim$ to remove lifelines on which no action is observed from both the interaction $i$ and the multi-trace $\mu$ that constitute the vertex $(i,\mu)$ of the analysis graph.
The two types of partial observation (unobserved subsystems and early interruption of observation) are approached in the same manner, noting in particular that an empty local trace can be seen both as missing and incomplete.
Fig.\ref{fig:expla_ana_hiding} illustrates the use of this new algorithm on the example from Fig.\ref{fig:expla_ana_partial}.
Once $\lifelineElim_{\{\hlf{l_1}\}}$ is applied (from \textcircled{0} to \textcircled{1}), the execution of $\hlf{l_2}?\hms{m}$ (from \textcircled{1} to \textcircled{2}) becomes possible, and the multi-trace $\mu_0$ is identified as belonging to $\overline{\sigma_L(i_0)}$.

\subsection{Analysis graph\label{ssec:graph}}

Our algorithm relies on exploring an analysis graph, which we define in Def.\ref{def:search_graph}.
Let us consider a set of vertices of the form $(i,\mu) \in \mathbb{I}(L) \times \mathbb{M}(L)$, for any $L \subseteq \mathcal{L}$, as well as a specific verdict vertex $\macroOKVerdict$.
Let us denote:
\[\mathbb{V} = \{\macroOKVerdict\} \cup ( \; \bigcup_{ L \subseteq {\mathcal{L}}}  \mathbb{I}(L) \times \mathbb{M}(L) \; )\]
We consider a graph $\mathbb{G}$ whose vertices are in $\mathbb{V}$ and whose set of arcs is defined by three binary relations:
\begin{itemize}
    \item $\graphRelationExec \subseteq \mathbb{V} \times \mathbb{V}$, denoting an ``execution'' rule that transforms a vertex $(i,a \multiAppend \mu)$ into a vertex $(i',\mu)$ by consuming action $a$ at the head of the $(a \multiAppend \mu)_{|\theta(a)}$ local component of the multi-trace and executing it on the interaction $i$ via the choice of a position $p$ such that $i \xrightarrow{a@p} i'$
    \item $\graphRelationHide \subseteq \mathbb{V} \times \mathbb{V}$, denoting a ``removal'' rule that transforms a vertex $(i,\mu)$, given a non empty set $H \subseteq  \{l \in L~|~ \mu_{|l} = \varepsilon\}$  into $(\lifelineElim_H(i),\lifelineElim_H(\mu))$
    \item $\graphRelationPass \subseteq \mathbb{V} \times \mathbb{V}$, denoting a ``ok'' rule that links all vertices of the form $(i,\varepsilon_L)$ to $\macroOKVerdict$
\end{itemize}

These three relations are defined as follows.

\begin{definition}
\label{def:search_graph}
We define $\graphRelationExec$, $\graphRelationHide$ and $\graphRelationPass$, three binary relations over $\mathbb{V}$ as follows: for any $L \subset \mathcal{L}$, any vertices $(i,\mu) \in \mathbb{I}(L) \times \mathbb{M}(L)$ and $(i',\mu') \in \mathbb{V}$, we have:
\[
\begin{array}{lcl}
(i,\mu)
\graphRelationPass
Ok
&
\Leftrightarrow 
&
\mu = \varepsilon_L
\\ 
(i,\mu)
\graphRelationHide
(i',\mu')
&
\Leftrightarrow 
&
\exists~H \in \mathcal{P}(L) \setminus \{\emptyset,L\},
\left\{
\begin{array}{ll}
&\forall~l \in H,~\mu_{|l} = \varepsilon\\
\wedge&
\mu' = \lifelineElim_{H}(\mu)\\
\wedge&
i' = \lifelineElim_{H}(i)
\end{array}
\right.
\\ 
(i,\mu)
\graphRelationExec
(i',\mu')
&
\Leftrightarrow 
&
\exists~a \in \mathbb{A}(L),~
(\mu = a \multiAppend \mu')\wedge (i \xrightarrow{a} i')
\\ 
\end{array}
\]

\noindent
$\mathbb{G} = (\mathbb{V},\leadsto)$ is the graph where $ \leadsto = (\graphRelationPass \cup \graphRelationHide \cup \graphRelationExec)$ defines the set of 
 its arcs.
\end{definition}

Given an origin vertex $(i,\mu)$ defined up to $L$, $\graphRelationExec$ and $\graphRelationHide$ specify arcs of the form $(i,\mu) \leadsto (i',\mu')$ with $i'$ and $\mu'$ defined on the same set of lifelines, which is either $L$ for $\graphRelationExec$ or $L \setminus H$ for a certain $H$ for $\graphRelationHide$.
The application of $\graphRelationExec$ corresponds to the simultaneous consumption of an action at the head of a component of $\mu$ and the execution of a matching action in $i$ while the application of $\graphRelationHide$ corresponds to the removal (from both $i$ and $\mu$) of lifelines $l$ for which the corresponding component $\mu_{|l}$ is empty.


Prop.\ref{prop:confluence_lifeline_removal} states a confluence property of graph $\mathbb{G}$ that is related to the commutative applications of $\graphRelationHide$ and $\graphRelationExec$.
It states that if, from a given vertex $(i,\mu)$, we can reach $\macroOKVerdict$ by any given means, then, if we can also apply $\graphRelationHide$ w.r.t.~a lifeline $h$ s.t., $\mu_{|h} = \varepsilon$ so that $(i, \mu) \graphRelationHide (\lifelineElim_{\{h\}}(i),\lifelineElim_{\{h\}}(\mu))$, then we can also reach $\macroOKVerdict$ from $(\lifelineElim_{\{h\}}(i),\lifelineElim_{\{h\}}(\mu))$.

\begin{property}
\label{prop:confluence_lifeline_removal}
For any $L \subset \mathcal{L}$, any $i \in \mathbb{I}(L)$, any $\mu \in \mathbb{M}(L)$ and any $h \in L$ s.t.~$\mu_{|h} = \varepsilon$, we have\footnote{$\overset{*}{\leadsto}$ is the reflexive and transitive closure of $\leadsto$ as defined in Sec.\ref{sec:prelim}.}:

\begin{prooftree}
\AxiomC{$(i, \mu) \overset{*}{\leadsto} \macroOKVerdict$}
\AxiomC{$(i, \mu) \graphRelationHide (\lifelineElim_{\{h\}}(i),\lifelineElim_{\{h\}}(\mu))$}
\LeftLabel{}
\RightLabel{}
\BinaryInfC{$(\lifelineElim_{\{h\}}(i),\lifelineElim_{\{h\}}(\mu)) \overset{*}{\leadsto} \macroOKVerdict$}
\end{prooftree}
\end{property}

\begin{proof}
Let us suppose $(i, \mu) \overset{*}{\leadsto} \macroOKVerdict$ with $\mu_{|h} = \varepsilon$ and  $(i, \mu) \graphRelationHide (\lifelineElim_{\{h\}}(i),\lifelineElim_{\{h\}}(\mu))$.

We will reason by induction on the derivation length of $(i, \mu) \overset{*}{\leadsto} \macroOKVerdict$.

Let us analyze according to the nature of the first derivation step:

\begin{itemize}
\item $(i, \mu) \overset{*}{\leadsto} \macroOKVerdict$, with $(i, \mu) \graphRelationPass \macroOKVerdict$ as first step. This means that $\mu = \varepsilon_L$. Therefore $\lifelineElim_{\{h\}}(\mu) = \varepsilon_{L \setminus \{h\}}$ and  
$(\lifelineElim_{\{h\}}(i),\lifelineElim_{\{h\}}(\mu))  \graphRelationPass \macroOKVerdict$. As a result, $(\lifelineElim_{\{h\}}(i),\lifelineElim_{\{h\}}(\mu)) \overset{*}{\leadsto} \macroOKVerdict$.
\item $(i, \mu) \overset{*}{\leadsto} \macroOKVerdict$, with $(i, \mu) \graphRelationHide (\lifelineElim_H(i),\lifelineElim_H(\mu))$ as first step, then:
\begin{itemize}
    \item The conclusion is straightforward if $H = \{h\}$.
    \item If $h \in H$ (with $H$ not reduced to the singleton $\{h\}$), then:\\
$(\lifelineElim_{\{h\}}(i),
\lifelineElim_{\{h\}}(\mu)) \graphRelationHide 
(\lifelineElim_{H \setminus \{h\}}(\lifelineElim_{\{h\}}(i)),
\lifelineElim_{H \setminus \{h\}}(\lifelineElim_{\{h\}}(\mu)))$\\
and because $\lifelineElim_{H \setminus \{h\}} \circ \lifelineElim_{\{h\}} = \lifelineElim_{H}$ (Prop.\ref{prop:commut_hide}), we have by transitivity:\\ $(\lifelineElim_{\{h\}}(i),
\lifelineElim_{\{h\}}(\mu)) \overset{*}{\leadsto} \macroOKVerdict$.
    \item If $h \not\in H$, we remark that:\\
        $(\lifelineElim_{\{h\}}(i),\lifelineElim_{\{h\}}(\mu)) \graphRelationHide (\lifelineElim_{H}(\lifelineElim_{\{h\}}(i)),\lifelineElim_{H}(\lifelineElim_{\{h\}}(\mu)))$\\
        and because we have decremented the derivation length by applying a first time $\graphRelationHide$, we can apply the induction hypothesis so that\\
        $(\lifelineElim_{H \cup \{h\}}(i),\lifelineElim_{H \cup \{h\}}(\mu)) \overset{*}{\leadsto} \macroOKVerdict$\\
        we then conclude by transitivity
\end{itemize}
\item $(i, \mu) \overset{*}{\leadsto} \macroOKVerdict$, with $(i, \mu) \graphRelationExec
(i',\mu')$ as first step. This means that there exists $a$ in $\mathbb{A}(L)$ such that $\mu =  a \multiAppend \mu'$ and $i \xrightarrow{a} i'$.  We have $(i', \mu') \overset{*}{\leadsto} \macroOKVerdict$ with a derivation length to $\macroOKVerdict$ shorter than that of $(i,\mu)$. As $\mu_{|h} = \varepsilon$,  $\mu'_{|h} = \varepsilon$ and we have $(i',\mu') \graphRelationHide (\lifelineElim_{\{h\}}(i'),\lifelineElim_{\{h\}}(\mu'))$. By induction hypothesis, we obtain 
$(\lifelineElim_{\{h\}}(i'),\lifelineElim_{\{h\}}(\mu')) \overset{*}{\leadsto} \macroOKVerdict 
 (*)$. From $i \xrightarrow{a} i'$, we can deduce by Prop.\ref{prop:execution_and_lifeline_removal} that $ \lifelineElim_{\{h\}}(i) \xrightarrow{a} 
\lifelineElim_{\{h\}}(i')$, the action execution taking place at the same position in $i$ and $\lifelineElim_{\{h\}}(i)$. Therefore 
$(\lifelineElim_{\{h\}}(i),\lifelineElim_{\{h\}}(\mu)) \graphRelationExec (\lifelineElim_{\{h\}}(i'),\lifelineElim_{\{h\}}(\mu')) (**)$. From $(*)$ and $(**)$, we have $(\lifelineElim_{\{h\}}(i),\lifelineElim_{\{h\}}(\mu)) \overset{*}{\leadsto} \macroOKVerdict $.
\end{itemize}
\end{proof}

\subsection{Conformity and verdict\label{ssec:algo}}

We show in Th.\ref{th:multipref_equates_pass} that proving $\mu \in \overline{\sigma_L(i)}$, amounts to exhibiting a path in $\mathbb{G}$ starting from $(i,\mu)$ and leading to the verdict $\macroOKVerdict$. 

\begin{theorem}
\label{th:multipref_equates_pass}
For any $i \in \mathbb{I}(L)$ and any $\mu \in \mathbb{M}(L)$:
\[
\left(
\begin{array}{c}
\mu \in \overline{\sigma_L(i)}\vphantom{\overset{*}{\leadsto}}
\end{array}
\right)
\Leftrightarrow 
\left(
\begin{array}{c}
(i,\mu) \overset{*}{\leadsto} \macroOKVerdict
\end{array}
\right)
\]
\end{theorem}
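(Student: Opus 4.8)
The statement is an equivalence, so I would prove the two implications separately, and both by induction — the $\Leftarrow$ direction by induction on the length of the derivation $(i,\mu) \overset{*}{\leadsto} \macroOKVerdict$, and the $\Rightarrow$ direction by a well-founded induction on a measure of $(i,\mu)$ combined with the operational characterization of $\sigma$ from Prop.\ref{prop:operational_formulation_multitrace}.

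\textbf{The $\Leftarrow$ direction.} Assume $(i,\mu) \overset{*}{\leadsto} \macroOKVerdict$ and induct on the derivation length. The final step must be a $\graphRelationPass$ step, so consider the first step:
\begin{itemize}
\item If the first step is $\graphRelationPass$, then $\mu = \varepsilon_L$; since $\varepsilon_L \in \sigma_L(i)$ holds exactly when $i \downarrow$ — wait, more carefully, we only need $\varepsilon_L \in \overline{\sigma_L(i)}$, and the empty multi-trace is a prefix of every multi-trace, so as long as $\sigma_L(i) \neq \emptyset$ this holds; and $\sigma_L(i)$ is always nonempty for interaction terms. Hence $\mu \in \overline{\sigma_L(i)}$.
\item If the first step is $\graphRelationExec$, i.e. $(i,\mu) \graphRelationExec (i',\mu')$ with $\mu = a \multiAppend \mu'$ and $i \xrightarrow{a@p} i'$, then by induction $\mu' \in \overline{\sigma_{L}(i')}$, so there is $\mu''$ with $\mu' \opStrictSeq \mu'' \in \sigma_L(i')$. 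By Prop.\ref{prop:operational_formulation_multitrace}, $a \multiAppend (\mu' \opStrictSeq \mu'') \in \sigma_L(i)$; and $a \multiAppend(\mu'\opStrictSeq\mu'') = (a\multiAppend\mu')\opStrictSeq\mu'' = \mu\opStrictSeq\mu''$ (pushing the head action through strict sequencing on the appropriate component), so $\mu \in \overline{\sigma_L(i)}$.
\item If the first step is $\graphRelationHide$ w.r.t.\ $H$, i.e. $(i,\mu)\graphRelationHide(\lifelineElim_H(i),\lifelineElim_H(\mu))$ with $\mu_{|l}=\varepsilon$ for all $l \in H$, then by induction $\lifelineElim_H(\mu) \in \overline{\sigma_{L\setminus H}(\lifelineElim_H(i))}$. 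By Th.\ref{th:semantics_of_lifeline_removal_in_interactions}, $\sigma_{L\setminus H}(\lifelineElim_H(i)) = \lifelineElim_H(\sigma_L(i))$, and by Prop.\ref{prop:lifeline_elimination_and_multi_prefixes} (first point) $\overline{\lifelineElim_H(\sigma_L(i))} = \lifelineElim_H(\overline{\sigma_L(i)})$. So $\lifelineElim_H(\mu) \in \lifelineElim_H(\overline{\sigma_L(i)})$; since moreover $\mu_{|l} = \varepsilon$ for all $l \in H$, the second point of Prop.\ref{prop:lifeline_elimination_and_multi_prefixes} (applied with $M = \overline{\sigma_L(i)}$, using that $\overline{\overline{\sigma_L(i)}} = \overline{\sigma_L(i)}$) gives $\mu \in \overline{\sigma_L(i)}$.
\end{itemize}

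\textbf{The $\Rightarrow$ direction.} Assume $\mu \in \overline{\sigma_L(i)}$: there is $\nu$ with $\mu \opStrictSeq \nu \in \sigma_L(i)$. I would induct on $|\mu| + |\nu|$ (the cumulative length). If $|\mu| = 0$ then $\mu = \varepsilon_L$ and $(i,\mu)\graphRelationPass\macroOKVerdict$, done. Otherwise some local component $\mu_{|l}$ is nonempty; let $a$ be its head, so $\mu = a \multiAppend \mu'$ with $\theta(a) = l$. The trick is to find the right global trace witnessing $\mu \opStrictSeq \nu \in \sigma_L(i)$. Here is where the two kinds of partial observation interact: it may be that $a$ is \emph{not} executable as the first action of $i$ (the situation of Fig.\ref{fig:expla_ana_partial}), because some other lifeline $h$ — with $\mu_{|h} = \varepsilon$ — carries an action that, in every global linearization realizing $\mu\opStrictSeq\nu$, must come before $a$. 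The resolution is: whenever $a$ is not directly executable, there must exist such an obstructing lifeline $h$ with $\mu_{|h} = \varepsilon$, and we apply $\graphRelationHide$ on $\{h\}$, reducing to $(\lifelineElim_{\{h\}}(i), \lifelineElim_{\{h\}}(\mu))$; by Th.\ref{th:semantics_of_lifeline_removal_in_interactions} and Prop.\ref{prop:lifeline_elimination_and_multi_prefixes}, $\lifelineElim_{\{h\}}(\mu) \in \overline{\sigma_{L\setminus\{h\}}(\lifelineElim_{\{h\}}(i))}$, and one can argue that after removing enough such lifelines, some head action becomes executable (an obstruction-free linearization of $\lifelineElim_{H}(\mu) \opStrictSeq \lifelineElim_H(\nu)$ exists once all empty-trace lifelines blocking progress are removed). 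Then, using the operational rule of Prop.\ref{prop:operational_formulation_multitrace} to peel off that first action $a$ as $i \xrightarrow{a@p} i'$ with $\mu' \opStrictSeq \nu' \in \sigma_L(i')$ for the corresponding residual $\nu'$, we apply $\graphRelationExec$ and invoke the induction hypothesis on $(i', \mu')$, whose measure has strictly decreased.

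\textbf{Main obstacle.} The delicate point is the $\Rightarrow$ direction: showing that from any $(i,\mu)$ with $\mu \in \overline{\sigma_L(i)}$ and $\mu \neq \varepsilon_L$, \emph{some} progress step ($\graphRelationExec$ or $\graphRelationHide$) is available that keeps us inside the invariant and strictly decreases the measure. The subtlety is that neither step is unconditionally available — $\graphRelationExec$ needs a head action executable in $i$, $\graphRelationHide$ needs an empty-trace lifeline — and one must show that at least one of them always applies when $\mu \neq \varepsilon_L$. This requires a careful analysis of linearizations: picking a global trace $t \in \sigma(i)$ projecting onto $\mu\opStrictSeq\nu$, then reasoning about which actions of $t$ lie "before" the actions of $\mu$ versus "inside" $\nu$, and arguing that any action of $t$ preceding the first action of $\mu$ must occur on a lifeline whose $\mu$-component is empty (otherwise it would itself be, or precede, the head of that component). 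I expect the bookkeeping of "residuals" $\nu'$ when peeling an action, and proving the measure $|\mu|+|\nu|$ genuinely decreases along the chosen step (in particular that $\graphRelationHide$ strictly decreases it, which is fine since it drops a whole — possibly nonempty in $\nu$ — component), to be the parts needing the most care; the paper likely relegates a portion of this to the appendix, leaning on Prop.\ref{prop:operational_formulation_multitrace}, Th.\ref{th:semantics_of_lifeline_removal_in_interactions} and Prop.\ref{prop:lifeline_elimination_and_multi_prefixes}.
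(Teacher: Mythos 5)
Your $\Leftarrow$ direction is correct and essentially matches the paper's, modulo organization: by inducting on derivation length and casing on the \emph{actual} first step you avoid the confluence property (Prop.\ref{prop:confluence_lifeline_removal}) that the paper invokes to commute an arbitrary first step past a chosen $\graphRelationHide$ on a single lifeline $h$; both routes work, and the chain Th.\ref{th:semantics_of_lifeline_removal_in_interactions} / Prop.\ref{prop:lifeline_elimination_and_multi_prefixes} in your removal case is exactly the paper's.

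The $\Rightarrow$ direction, however, has two genuine problems. First, your measure $|\mu|+|\nu|$ does not decrease under $\graphRelationHide$ when the removed lifeline $h$ satisfies $\nu_{|h}=\varepsilon$ as well as $\mu_{|h}=\varepsilon$ (your parenthetical ``possibly nonempty in $\nu$'' concedes precisely the case where it fails); the paper's measure $|\mu|+|L|+1$ fixes this because removal strictly decreases $|L|$ while execution strictly decreases $|\mu|$. Second, you leave the progress argument — that some rule always applies when $\mu\neq\varepsilon_L$ — as an open ``obstacle'' requiring an analysis of obstructing lifelines and linearizations. The paper dissolves this by reversing your order of operations: it cases first on whether \emph{some} component of $\mu$ is empty. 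If yes, apply $\graphRelationHide$ unconditionally (no executable action needed) and recurse via Th.\ref{th:semantics_of_lifeline_removal_in_interactions} and Prop.\ref{prop:lifeline_elimination_and_multi_prefixes}. If no component of $\mu$ is empty, take any witness $\mu\opStrictSeq\mu_+\in\sigma_L(i)$; Prop.\ref{prop:operational_formulation_multitrace} hands you \emph{some} $a$ with $\mu\opStrictSeq\mu_+=a\multiAppend\mu_*$ and $i\xrightarrow{a}i'$, and since $\mu_{|\theta(a)}\neq\varepsilon$ this $a$ is automatically the head of a component of $\mu$ rather than of $\mu_+$, so $\graphRelationExec$ applies — no reasoning about which actions ``block'' which is ever needed. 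You should restructure along these lines rather than trying to pick the head of a nonempty component first and then clear obstructions.
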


\begin{proof}
To make expressions easier to read, we use the notation $\overline{\sigma_L}(i)$ to denote the set $\overline{\sigma_L(i)}$.
Let us reason by induction on the measure $|(i,\mu)|$ where the measure on vertices $v\in\mathbb{V}$ of $\mathbb{G}$ is defined as follows: 
\[
|v|=
\left\{
\begin{array}{ll}
     0 & \text{if }v = \macroOKVerdict ~\text{   (by convention)} \\
     |\mu|+|L| + 1 & \text{if }v = (i,\mu)\text{ with } L\subseteq \mathcal{L},\ i \in \mathbb{I}(L),\ \mu \in \mathbb{M}({L})
\end{array}
\right.
\]

\noindent $\bullet$ Case $\mu = \varepsilon_L$, the result follows directly from the facts that for any interaction $i$, we have $\mu \in  \overline{\sigma_L}(i)$ and $(i,\varepsilon_L) \graphRelationPass \macroOKVerdict$.

\noindent $\bullet$ Case $\mu \neq \varepsilon_L$ such that there exists $h$ in $L$ verifying $\mu_{|h} = \varepsilon$

Then we have $(i,\mu) \graphRelationHide (\lifelineElim_{\{h\}}(i),\lifelineElim_{\{h\}}(\mu))$.
\begin{itemize}
    \item[$\Rightarrow$] Let us suppose $\mu \in \overline{\sigma_L}(i)$.
\[
\begin{array}{rclr}
\mu \in \overline{\sigma_L}(i)
&
\Rightarrow
&
\lifelineElim_{\{h\}}(\mu) \in \lifelineElim_{\{h\}}(\overline{\sigma_L}(i))
&
\text{\scriptsize application of $\lifelineElim_{\{h\}}$}
\\
&
\Rightarrow
&
\lifelineElim_{\{h\}}(\mu) \in \overline{\lifelineElim_{\{h\}}(\sigma_L(i))}
&
~~\text{\scriptsize 1st point of Prop.\ref{prop:lifeline_elimination_and_multi_prefixes}}
\\
&
\Rightarrow
&
\lifelineElim_{\{h\}}(\mu) \in \overline{\sigma_{L \setminus \{h\}}(\lifelineElim_{\{h\}}(i))}
&
\text{\scriptsize Th.\ref{th:semantics_of_lifeline_removal_in_interactions}}
\\
&
\Rightarrow
&
(\lifelineElim_{\{h\}}(i),\lifelineElim_{\{h\}}(\mu) \overset{*}{\leadsto} \macroOKVerdict
&
\text{\scriptsize induction}
\end{array}
\]
    Hence, by $(i,\mu) \graphRelationHide (\lifelineElim_{\{h\}}(i),\lifelineElim_{\{h\}}(\mu))$ and  $(\lifelineElim_{\{h\}}(i),\lifelineElim_{\{h\}}(\mu) \overset{*}{\leadsto} \macroOKVerdict$, we obtain $(i,\mu) \overset{*}{\leadsto} \macroOKVerdict$.
    \item[$\Leftarrow$] Case $(i,\mu) \overset{*}{\leadsto} \macroOKVerdict$. As per  Prop.\ref{prop:confluence_lifeline_removal}, we have $(\lifelineElim_{\{h\}}(i),\lifelineElim_{\{h\}}(\mu)) \overset{*}{\leadsto} \macroOKVerdict$. By the induction hypothesis (the measure decremented by one),
    $\lifelineElim_{\{h\}}(\mu) \in \overline{\sigma_{L \setminus \{h\}}}(\lifelineElim_{\{h\}}(i))$. By Th.\ref{th:semantics_of_lifeline_removal_in_interactions}, we have  $\overline{\sigma_{L \setminus \{h\}}}(\lifelineElim_{\{h\}}(i))   = \overline{\lifelineElim_{\{h\}}(\sigma_L(i))}$. By the first point of Prop.\ref{prop:lifeline_elimination_and_multi_prefixes}, $\overline{\lifelineElim_{\{h\}}(\sigma_L(i))} = \lifelineElim_{\{h\}}(\overline{\sigma_L(i)})$. Via the second point of Prop.\ref{prop:lifeline_elimination_and_multi_prefixes}, as $\mu_{|h} = \varepsilon$ and $\lifelineElim_{\{h\}}(\mu) \in \lifelineElim_{\{h\}}(\overline{\sigma_L(i)})$,
  $\mu \in \overline{\sigma_L}(i)$.
\end{itemize}

\noindent $\bullet$ Case $\mu \neq \varepsilon_L$ such that  there are no lifeline $h$ with $\mu_{|h} = \varepsilon$
\begin{itemize}
    \item[$\Rightarrow$] if $\mu \in \overline{\sigma_L}(i)$, then there exists $\mu_+$ s.t. $\mu \opStrictSeq \mu_+ \in \sigma_L(i)$. Then, because $\mu \opStrictSeq \mu_+ \neq \varepsilon_L$, as per Prop.\ref{prop:operational_formulation_multitrace} there exist $a$, $i'$ and $\mu_*$ s.t. $\mu \opStrictSeq \mu_+ = a \multiAppend \mu_*$ and $i \xrightarrow{a} i'$ and $\mu_* \in \sigma_L(i')$.
    Then, because there is no empty trace component on $\mu$, action $a$ belongs to $\mu$. 
    Therefore there exists $\mu'$ such that $\mu = a \multiAppend \mu'$ and hence $(a \multiAppend \mu') \opStrictSeq \mu_+ = a \multiAppend \mu_*$ and thus $\mu' \opStrictSeq \mu_+ = \mu_* \in \sigma_L(i')$. Hence $\mu' \in \overline{\sigma_L}(i')$. We can therefore apply the induction hypothesis because $|(i',\mu')| = |(i,\mu)| - 1$ so that $(i',\mu') \overset{*}{\leadsto} \macroOKVerdict$. Moreover, we have $i \xrightarrow{a} i'$ and $\mu = a \multiAppend \mu'$, which allows us to apply rule $\graphRelationExec$ so that we have $(i,\mu) \graphRelationExec (i',\mu')$ and therefore, by transitivity, $(i,\mu) \overset{*}{\leadsto} \macroOKVerdict$.
    \item[$\Leftarrow$] if  $(i,\mu) \overset{*}{\leadsto} \macroOKVerdict$, then, given that we cannot apply $\graphRelationHide$, the  first transition is an application of $\graphRelationExec$ i.e., there exist $a$, $i'$ and $\mu'$ s.t. $i \xrightarrow{a} i'$ and $\mu = a \multiAppend \mu'$ and $(i,\mu) \graphRelationExec (i',\mu') \overset{*}{\leadsto} \macroOKVerdict$. Then, because $|(i',\mu')| = |(i,\mu)| - 1$, by the induction hypothesis, we have $\mu' \in \overline{\sigma_L}(i')$ which implies the existence of $\mu_+'$ such that $\mu' \opStrictSeq \mu_+' \in \sigma_L(i')$. 
    Additionally, the fact that $i \xrightarrow{a} i'$ and $\mu' \opStrictSeq \mu_+' \in \sigma_L(i')$ implies, as per Prop.\ref{prop:operational_formulation_multitrace}, that $a \multiAppend (\mu' \opStrictSeq \mu_+') \in \sigma_L(i)$. In particular, this implies that $\mu = a \multiAppend \mu' \in \overline{\sigma_L}(i)$.
\end{itemize}

\end{proof}

\subsection{Complexity\label{ssec:complexity}}

In \cite{a_small_step_approach_to_multi_trace_checking_against_interactions}, the problem of determining whether or not $\mu \in \sigma_{L}(i)$ has been proven to be NP-Hard via a reduction of the 1-in-3\,SAT problem (inspired by \cite{realizability_and_verification_of_msc_graphs}). 
In this paper, we reduce the more general 3SAT satisfiability problem to prove (see Prop.\ref{prop:multipref_nphard_complexity}) the NP-hardness of determining whether or not $\mu \in \overline{\sigma_{|L}(i)}$.

Let $X=\{x_1,\ldots,x_n\}$ be a finite set of Boolean variables. A literal $\ell$ is a Boolean variable $x\in X$ or its negation $\neg{x}$. A 3\,CNF (Conjunctive Normal Form) formula is an expression of the form $\phi = C_1 \wedge \ldots \wedge C_j \wedge \ldots \wedge  C_k $ with every clause $C_j$ being a disjunction of three distinct literals. On the left of Fig.\ref{fig:3sat_red} is given, as an example, such a Boolean expression $\phi$. The 3\,SAT problem is then to determine whether or not $\phi$ is satisfiable (whether or not there exists a variable assignment which sets all clauses in $\phi$ to $true$).

\begin{property}
\label{prop:multipref_nphard_complexity}
The problem of determining whether or not $\mu \in \overline{\sigma_{|L}(i)}$ is NP-hard.
\end{property}

\begin{proof}
Given a 3\,CNF formula $\phi$, with $|X| = n$ variables and $k$ clauses, we consider a set of lifeline $L=\{l_1,\ldots,l_k\}$ (a lifeline per clause), a unique message $m$, and the multi-trace $\mu_{\scriptscriptstyle \text{3\,SAT}}=(l_1?m,\ldots,l_k?m)$.

For any literal $\ell$, we build a multi-trace $\mu_{\ell}$ such that for any $l_j \in L$, if $\ell$ occurs in clause $C_j$ then $\mu_{\ell|l_j} = l_j?m$ and otherwise $\mu_{\ell|l_j} = \varepsilon$.
That is, every positive (resp. negative) occurrence of a variable $x$ in a clause $C_j$ is represented by an action $l_j?m$ in $\mu_x$ (resp. in $\mu_{\neg{x}}$). 
Let us then consider the set of multi-traces $T= (\{\mu_{x_1}\} \cup \{\mu_{ \neg{x_1}}\} );(\{\mu_{x_2}\} \cup \{\mu_{ \neg{x_2}}\} );\ldots;(\{\mu_{x_n}\} \cup \{\mu_{ \neg{x_n}}\} )$.
Every $\mu \in T$ corresponds to a variable assignment of the 3\,SAT problem. Indeed, to build $\mu$ either $\mu_{x}$ or $\mu_{\neg{x}}$ is selected (via $\cup$) in the definition of $T$, and not both. As $T$ is built using the sequencing (via $;$) of such alternatives for all variables, multi-traces in $T$ simulate all possible variable assignments (the search space for satisfying $\phi$).  
Because every clause contains three literals, one of which must be set to $true$, there is at least one literal $\ell \in \{x, \neg{x}\}$ in $C_j$ set to $true$. Hence $l_j?m \in \overline{\mu_{|l_j}}$. We remark that  $\mu_{|l_j}$ can be a sequence of such emissions $l_j?m$ if more than one literal is set to true in $C_j$. This reasoning can be applied to all the clauses i.e. $\forall j \in [1,k]$, $l_j?m \in \overline{\mu_{|l_j}}$ which implies that $\mu_{\scriptscriptstyle \text{3\,SAT}} \in \overline{\mu}$ and hence $\mu_{\scriptscriptstyle \text{3\,SAT}} \in \overline{T}$.
Given that $T$ is equivalent to the semantics of an interaction $i$ of the form $seq(alt(i_{x_1},i_{\neg{x_1}}),seq(alt(i_{x_2},i_{\neg{x_2}}),\cdots,alt(i_{x_n},i_{\neg{x_n}})\cdots))$, with, for any literal $\ell$, $i_\ell$ being the sequencing of all $l_j?m$ such that $\ell$ appears in $C_j$, solving the 3\,SAT problem equates to solving $\mu_{\scriptscriptstyle \text{3\,SAT}} \in \overline{\sigma_{|L}(i)}$.
\end{proof}

\begin{figure}[ht]
    \centering
\begin{tikzpicture}
\node[ 
line width=1.25pt] (problem_3sat) at (0,0) {
    $\begin{aligned}
    ~~~& ( \neg x_1 ~\vee \neg x_2 ~\vee \neg x_3 )\\
    \wedge~ & ( \neg x_1 ~\vee \phantom{\neg}x_2 ~\vee \phantom{\neg}x_3 )\\
    \wedge~ & ( \phantom{\neg}x_1 ~\vee \neg x_1 ~\vee \phantom{\neg}x_2 )\\
    \wedge~ & ( \phantom{\neg}x_2 ~\vee \phantom{\neg}x_3 ~\vee \neg x_3 )
    \end{aligned}$
};
\node[
line width=1.25pt,below right=-2.5cm and .25cm of problem_3sat] (problem_multipref) {
    \begin{tikzpicture}
    \node (int) {\includegraphics[scale=.3]{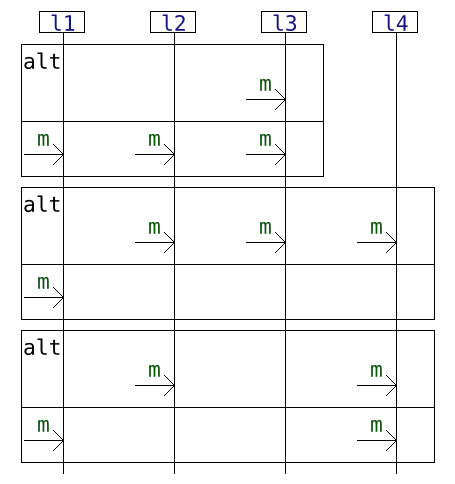}};
    \node[right=-.25cm of int,align=left] (mu) {
    $\{\hlf{l_1}\} \leftarrow \hlf{l_1}?\hms{m}$\\
    $\{\hlf{l_2}\} \leftarrow \hlf{l_2}?\hms{m}$\\
    $\{\hlf{l_3}\} \leftarrow \hlf{l_3}?\hms{m}$\\
    $\{\hlf{l_4}\} \leftarrow \hlf{l_4}?\hms{m}$
    };
    \end{tikzpicture}
};
\node[draw,rectangle,red,below=1cm of problem_3sat] (red) {
    Reduction
};
\draw[dashed,red,->] (problem_3sat) -- (red);
\draw[dashed,red,->] (red) -- (problem_multipref.west |- red);
\end{tikzpicture}
    \caption{Principle of 3\;SAT reduction}
    \label{fig:3sat_red}
\end{figure}

Hence, we have provided a polynomial reduction of 3\,SAT to the problem of recognizing multi-prefixes of accepted multi-traces. The reduction of the problem on the left of Fig.\ref{fig:3sat_red} is represented on the right of Fig.\ref{fig:3sat_red}, via drawing the resulting interaction and multi-trace.
The problem has 3 variables and 4 clauses. In the corresponding interaction, lifeline $l_1$ corresponds to the first clause and we see that it has a $l_1?m$ in the right branch of the first alternative, corresponding to $\neg x_1$, the right branch of the second for $\neg x_2$ and the right branch of the third for $\neg x_3$. The same applies to $l_2$, $l_3$ and $l_4$.

The underlying problem being NP-Hard, our approach should ideally use additional techniques to help reduce the average complexity.
The algorithm which we propose relies on an exploration of $\mathbb{G}$ from a certain vertex $(i,\mu)$.
The part of $\mathbb{G}$ that is reachable from $(i,\mu)$ constitutes its search space. 
In Sec.\ref{sec:graph_size}, we propose techniques that reduce the size of this search space.

\section{Techniques to reduce the search space\label{sec:graph_size}}

Given a vertex $(i,\mu)$, the set of vertices reachable from $(i,\mu)$ is finite:

\begin{property}
\label{prop:finitesearchspace}
For any $L \subseteq \mathcal{L}$, $\mu \in \mathbb{M}(L)$ and $i \in \mathbb{I}(L)$, the set\\$\{ v \in \mathbb{V} ~|~ (i,\mu) \overset{*}{\leadsto} v \}$ is finite. 
\end{property}

\begin{proof}
It follows from the following two observations
\textbf{(1)} any path in that sub-graph is finite
and \textbf{(2)} there is a finite number of paths.

The first point \textbf{(1)} can be proven by reasoning on the measure $|v|$ of vertices $ v \in \mathbb{V}$.
For any transition $(i,\mu) \leadsto (i',\mu')$ in $\mathbb{G}$ other than $\graphRelationPass$, we have $|(i',\mu')| \leq |(i,\mu)| - 1$ whether the rule that is applied is $\graphRelationExec$ or $\graphRelationHide$. More precisely, the number of actions decreases by one for $\graphRelationExec$  and the number of lifelines strictly decreases for $\graphRelationHide$.

The second point \textbf{(2)} comes from the fact that for any vertex $(i,\mu)$, there exists a finite number of outgoing transitions.
There are at most $|i|$ possible applications of $\graphRelationExec$, $|i|$ being the number of actions in $i$.
There is at most $2^{|L|}  - 2$ possible applications of $\graphRelationHide$, which correspond to all possible subsets of $L$ different of $\emptyset$ and $L$.
There is at most $1$ possible application of $\graphRelationPass$.
\end{proof}

Given Prop.\ref{prop:finitesearchspace} and Th.\ref{th:multipref_equates_pass}, exploring graph $\mathbb{G}$ to find a path $(i,\mu) \overset{*}{\leadsto} \macroOKVerdict$ is a sound decision procedure to determine whether or not $\mu \in \overline{\sigma_L(i)}$. Indeed, while the former ensures it returns in finite time, the latter ensures its correctness.

The definition of graph $\mathbb{G}$ in Def.\ref{def:search_graph} allows for many possible interleavings of the applications of rules $\graphRelationExec$ and $\graphRelationHide$. Let us define two variants $\shortColViolet{\rightsquigarrow'_r}\, \subset \graphRelationHide$ and $\shortColOrange{\rightsquigarrow'_e}\, \subset \graphRelationExec$ of the removal and execution rules as follows:
\[
\begin{array}{lcl}
(i,\mu)
\,\shortColViolet{\rightsquigarrow'_r}\,
(\lifelineElim_{H}(i),\lifelineElim_{H}(\mu))
&
\text{iff} 
&
H = \{ l \in L ~|~ \mu_{|l} = \varepsilon \}
\\ 
(i,a \multiAppend \mu)
\,\shortColOrange{\rightsquigarrow'_e}\,
(i',\mu)
&
\text{iff} 
&
\forall~l \in L,~\mu_{|l} \neq \varepsilon
\end{array}
\]

This corresponds to:
\begin{itemize}
    \item rationalizing the use of $\graphRelationHide$ by enforcing the removal, in a single step, of all the lifelines $l \in L$ such that $\mu_{|l} = \varepsilon$
    \item prioritizing the evaluation of $\graphRelationHide$ over $\graphRelationExec$ by preventing the application of $\graphRelationExec$ as long as it is possible to apply $\graphRelationHide$
\end{itemize}

Prop.\ref{prop:commut_hide} (commutativity of lifeline removal) and Prop.\ref{prop:confluence_lifeline_removal} (confluence w.r.t.~the use of $\graphRelationHide$) ensure that this simplification preserves the existence of paths $(i,\mu) \overset{*}{\leadsto} \macroOKVerdict$.
Because this replacing of $\graphRelationHide$ by $\shortColViolet{\rightsquigarrow'_r}$ and of $\graphRelationExec$ by $\shortColOrange{\rightsquigarrow'_e}$ comes at no cost and improves graph size, we take it for granted in the remainder of the paper and override the $\graphRelationHide$ and $\graphRelationExec$ notations.

\begin{figure}[ht]
    \centering
    \scalebox{.65}{\input{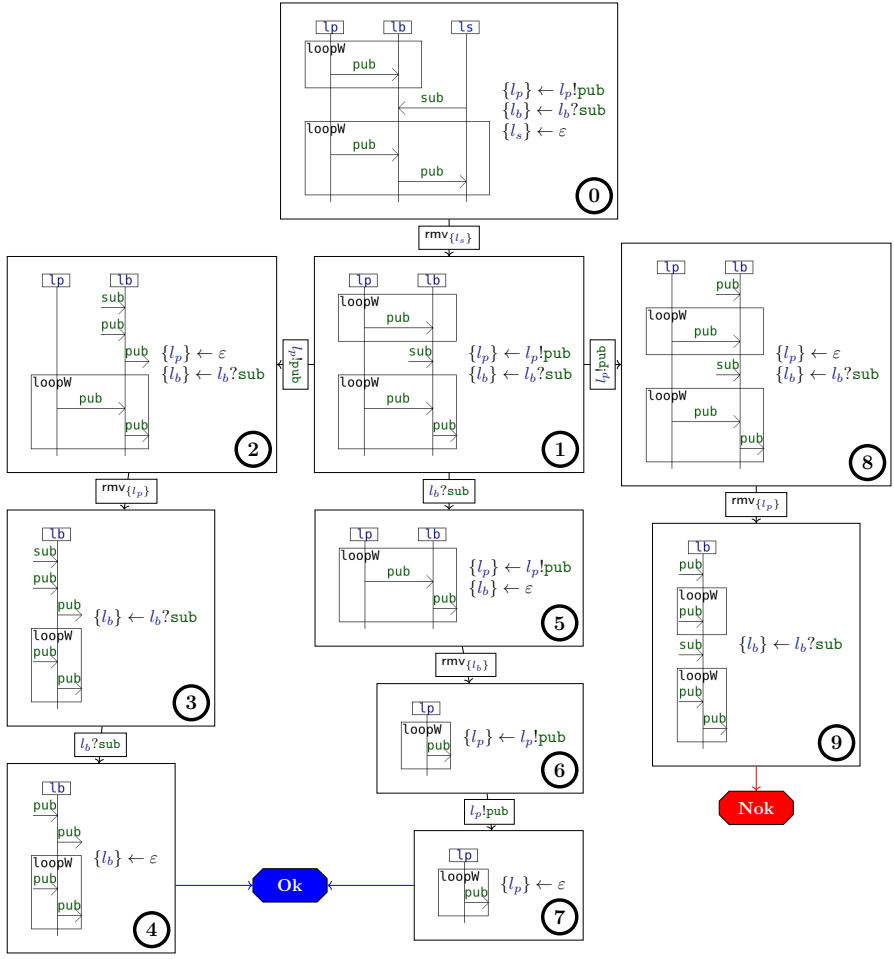}}
    \caption{Part of $\mathbb{G}$ reachable from $(i_0,\mu_0)$ with $i_0$ from Fig.\ref{fig:interaction_example} and $\mu_0$ from Fig.\ref{fig:observation_partial}}
    \label{fig:fsen_ex_lfrem_ana}
\end{figure}

Fig.\ref{fig:fsen_ex_lfrem_ana} represents the part of $\mathbb{G}$ that is reachable via this reduced version of $\leadsto$ from the vertex $(i_0,\mu_0)$ defined over $L=\{\hlf{l_p},\hlf{l_b},\hlf{l_s}\}$ with $i_0$ our example interaction from Fig.\ref{fig:interaction_example} and $\mu_0$ the partially observed behavior from Fig.\ref{fig:observation_partial}.
As in Fig.\ref{fig:expla_ana}, each square annotated with a circled number contains an interaction drawn on the left and a multi-trace on the right.
Starting from the $(i_0,\mu_0)$ vertex (denoted by \textcircled{0}), only $\graphRelationHide$ is applicable.
It consists of removing lifeline $\hlf{l_s}$ which is not/no longer observed.
The application of $\graphRelationHide$ yields the vertex $(i_1,\mu_1)$ (denoted by \textcircled{1}).
From there, only $\graphRelationExec$ is applicable, but there are 3 possible applications, yielding three distinct vertices \textcircled{2}, \textcircled{5} and \textcircled{8}. Indeed, we can match the $\hlf{l_b}?\hms{sub}$ at the beginning of $\mu_{1|\hlf{l_b}}$ with the only $\hlf{l_b}?\hms{sub}$ of $i_1$ and we can also match the $\hlf{l_p}!\hms{pub}$ at the beginning of $\mu_{1|\hlf{l_p}}$ with either of the two $\hlf{l_p}!\hms{pub}$ in $i_1$.
Verdict $\macroOKVerdict$ is represented by the blue vertex at the bottom left of Fig.\ref{fig:fsen_ex_lfrem_ana}. It is reachable from \textcircled{4} and \textcircled{7} via $\graphRelationPass$ and thus we can conclude that $\mu_0 \in \overline{\sigma_L(i_0)}$ via Th.\ref{th:multipref_equates_pass}.
For the sake of pedagogy, we represent a second abstract verdict $\macroKOVerdict$ in red (on the bottom right of Fig.\ref{fig:fsen_ex_lfrem_ana}) to highlight $(i,\mu)$ vertices that have no successors and therefore cannot lead to $\macroOKVerdict$.

In the following, we present two additional techniques, which both involve the lifeline removal operator $\lifelineElim$ (in different manners and distinctly from its use in $\graphRelationHide$) and which aim is to reduce the size of the search space further:
\begin{itemize}
    \item a Partial Order Reduction technique (POR), which mitigates concurrency between the different local traces that arise when using $\graphRelationExec$
    \item a technique of ``LOCal analyses'' (LOC), which cuts parts of the graph reachable from vertices $(i,\mu)$ that cannot be in $\overline{\sigma_L(i)}$. 
\end{itemize}

\subsection{Partial Order Reduction via one-unambiguous actions\label{ssec:partial_order_reduction}}

Partial Order Reduction (POR) \cite{combining_partial_order_reduction_with_on_the_fly_model_checking,partial_order_methods_for_model_checking_from_linear_time_to_branching_time} refers to a family of techniques that are widely used in model checking to tackle state explosion.
In a few words, POR leverages the absence of effect that various interleavings of atomic operations may have on the properties of a system to restrict the exploration to a single linearization of these operations.
In our context, POR aims to restrict the possible applications of $\graphRelationExec$.

Various POR approaches are based on the comparison of pairs of atomic transitions (see e.g., ``accordance'' relations between transitions from \cite{eliminating_redundant_interleavings_during_concurrent_program_verification,partial_order_reduction_for_parity_games_and_parameterized_boolean_equation_systems}).
However, this does not work well on our problem of multi-prefix membership from interactions.
Intuitively (and erroneously), given a set of lifelines $L$, for any two actions $\shortColCyan{a_1}$ and $\shortColBrown{a_2}$ of $\mathbb{A}(L)$ such that $\theta(\shortColCyan{a_1}) \neq \theta(\shortColBrown{a_2})$, when analyzing from $(i, \shortColCyan{a_1} \multiAppend \shortColBrown{a_2} \multiAppend \mu)$, given $i \in \mathbb{I}(L)$ and $\mu \in \mathbb{M}(L)$, if, every vertex $(i',\mu')$ that can be reached by executing at first $\shortColBrown{a_2}$ and then $\shortColCyan{a_1}$ can also be reached by executing at first $\shortColCyan{a_1}$ and then $\shortColBrown{a_2}$, then we might as well only keep a single linearization by only considering executing $\shortColCyan{a_1}$ first.

\begin{figure}[ht]
\centering
\scalebox{.725}{\input{images/por3/por3}}
\captionsetup{singlelinecheck=off}
\caption{
Here both $\shortColBrown{l_1?m_2}$ and $\shortColCyan{l_2?m_1}$ are \shortColRed{one-ambiguous}
\\
It is recommended that : $\bullet$ both $\shortColBrown{l_1?m_2}$ and $\shortColCyan{l_2?m_1}$ should be explored\\
\phantom{It is recommended that : }$\bullet$ both \textcircled{1} and \textcircled{2} should be kept~~~~~~~~~~~~~~
}
\label{fig:por3}
\end{figure}

The example from Fig.\ref{fig:por3} demonstrates that this intuition is wrong.
The vertices that can be reached by executing $\shortColCyan{l_2?m_1}$ at first and then $\shortColBrown{l_1?m_2}$ are \textcircled{3} and \textcircled{4} hatched in cyan. On the other hand, only \textcircled{4}, which is also hatched in brown, is reachable by executing at first $\shortColBrown{l_1?m_2}$ and then $\shortColCyan{l_2?m_1}$.
Here, all vertices that are reachable by executing at first $\shortColBrown{l_1?m_2}$ and then $\shortColCyan{l_2?m_1}$ are also reachable by executing at first $\shortColCyan{l_2?m_1}$ and then $\shortColBrown{l_1?m_2}$.
Yet, only by executing $\shortColBrown{l_1?m_2}$ at first, one can reach the $\macroOKVerdict$.

We propose another solution for POR  based on a notion of ``one-ambiguity'' of actions that we adapt from \cite{one_unambiguous_regular_languages,deterministic_regular_expressions_in_linear_time,learning_deterministic_regular_expressions_for_the_inference_of_schemas_from_xml_data} to our context in Def.\ref{def:one_unambiguous_action}.

\begin{definition}
\label{def:one_unambiguous_action}    
For any $L \subset \mathcal{L}$, any interaction $i \in \mathbb{I}(L)$ and any action $a \in \mathbb{A}(L)$, we say that $a$ is one-unambiguous in $i$, which we denote as ``$a \oneUnambiguous i$'', iff:
\[
\exists!~p \in \{1,2\}^* ~s.t.~ \exists~i' \in \mathbb{I}(\{\theta(a)\}) ~s.t. ~\lifelineElim_{L \setminus \{\theta(a)\}}(i) \xrightarrow{a@p} i'
\]
\end{definition}
If $a$ is not one-unambiguous in $i$, we say it is {\em one-ambiguous} in $i$. 

$a \oneUnambiguous i$ 
signifies that when restricted to the lifeline $\theta(a)$ alone, a unique instance of action $a$ is immediately executable. 
To be able to know this, we use the $\lifelineElim$ operator to remove the constraints imposed by the behavior of the other lifelines (i.e., we do not have to wait for the execution of third party actions on other lifelines to be able to execute instances of $a$).
In other words, an action $a$ is one-unambiguous in $i$ if, whatever occurs on the other lifelines, there is always in $i$ a single unique manner to interpret it as a local behavior on $\theta(a)$.

One-unambiguity of $a$ in $i$ implies that, if $a$ is executable in $i$, then there is only one unique manner to interpret it, i.e., one unique follow-up $i'$ such that $i \xrightarrow{a} i'$ (see Prop.\ref{prop:one_ambiguous_unicity_before_projection}).

\begin{property}
\label{prop:one_ambiguous_unicity_before_projection}
For any $i \in \mathbb{I}(L)$ and any $a \in \mathbb{A}(L)$ we have:
\[
\left( 
(a \oneUnambiguous i) \wedge (\exists~i' \in \mathbb{I}(L),~i \xrightarrow{a} i')
\right) 
\Rightarrow
(\exists!~i' \in \mathbb{I}(L),~i \xrightarrow{a} i')
\]
\end{property}

\begin{proof}
As,by hypothesis, there exists $i'$ in $\mathbb{I}(L)$ with $i \xrightarrow{a} i'$, 
let us denote $p$ the position ensuring $i \xrightarrow{a@p} i'$.
Then per Prop.\ref{prop:execution_and_lifeline_removal},
$
\lifelineElim_{L \setminus \{\theta(a)\}}(i) \xrightarrow{a@p} \lifelineElim_{L \setminus \{\theta(a)\}}(i')
$.

On the other hand, by definition of $a \oneUnambiguous i$ (Def.\ref{def:one_unambiguous_action}), we have
\[
\exists!~p \in \{1,2\}^* ~s.t.~ \exists~i'' \in \mathbb{I}(\{\theta(a)\}) ~s.t. ~\lifelineElim_{L \setminus \{\theta(a)\}}(i) \xrightarrow{a@p} i''
\]

The unicity of $p$ implies
the unicity of $i'$.
\end{proof}

If from a vertex $(i,\mu)$, we have an action $a$, an interaction $i'$ and a multi-trace $\mu'$ such that $\mu = a \multiAppend \mu'$, $i \xrightarrow{a} i'$ and $a 
 \oneUnambiguous i$, then it is safe to select $(i',\mu')$ as a unique successor to $(i,\mu)$, thus ignoring all the other $(i'',\mu'')$ s.t., $(i,\mu) \graphRelationExec (i'',\mu'')$.
We prove this with Prop.\ref{prop:correctness_partial_order_reduction}, which states a confluence property of graph $\mathbb{G}$.

\begin{property}
\label{prop:correctness_partial_order_reduction}
For any $L \subset \mathcal{L}$, any interactions $i$ and $i'$ in $\mathbb{I}(L)$, any multi-trace $\mu \in \mathbb{M}(L)$ and action $a \in \mathbb{A}(L)$, we have the following property on graph $\mathbb{G}$:

\begin{scprooftree}{.9}
\AxiomC{$(i, a \multiAppend \mu) \overset{*}{\leadsto} \macroOKVerdict$}
\AxiomC{$(i, a \multiAppend \mu) \graphRelationExec (i',\mu)$}
\LeftLabel{}
\RightLabel{$a \oneUnambiguous i$}
\BinaryInfC{$(i',\mu) \overset{*}{\leadsto} \macroOKVerdict$}
\end{scprooftree}

\end{property}

\begin{proof}

As per Prop.\ref{prop:one_ambiguous_unicity_before_projection}, the fact that $(i, a \multiAppend \mu) \graphRelationExec (i',\mu)$ and that $a \oneUnambiguous i$ implies that $i'$ is the unique interaction such that $i \xrightarrow{a} i'$.

As per Prop.\ref{prop:operational_formulation_multitrace} and because there can be no other source of $a$ actions, we get:
$\forall~t \in \mathbb{A}(L)^*$,  $(a.t \in \sigma(i))$ iff $(t \in \sigma(i'))$. By using $\muProjection_L$, we obtain:
$\forall~\mu' \in \mathbb{M}(L)$,  $(a \multiAppend \mu' \in \sigma_L(i))$ iff $(\mu' \in \sigma_L(i'))$.

When considering prefixes, we get:
$\forall~\mu' \in \mathbb{M}(L)$, $(a \multiAppend \mu' \in \overline{\sigma_L(i)})$ iff $(\mu' \in \overline{\sigma_L(i')})$. On the other hand, $(i,a \multiAppend \mu) \overset{*}{\leadsto} \macroOKVerdict$ implies, as per Th.\ref{th:multipref_equates_pass}, that $a \multiAppend \mu \in \overline{\sigma_L(i)}$.
Combining the two previous points, we have: $\mu \in \overline{\sigma_L(i')}$. As per Th.\ref{th:multipref_equates_pass}, we get:
$(i',\mu) \overset{*}{\leadsto} \macroOKVerdict$.

\end{proof}

Reasoning on one-unambiguous actions
on the example from Fig.\ref{fig:por3}, we conclude that it is not safe to ignore either \textcircled{1} or \textcircled{2}.
We represent on the top left corner of Fig.\ref{fig:por3}, the determination of which consumable (via $\graphRelationExec$) actions are one-unambiguous in $i_0$.
By projecting $i_0$ on $\hlf{l_1}$ (via $\lifelineElim_{\hlf{l_2}}$), we remark that there are two immediately executable instances of $\shortColBrown{l_1?m_2}$ and, as a result, $\shortColBrown{l_1?m_2}$ is not one-unambiguous. 
Similarly, when projecting $i_0$ on $\hlf{l_2}$ (via $\lifelineElim_{\hlf{l_1}}$), we observe two immediately executable instances of $\shortColCyan{l_2?m_1}$ and, as a result, $\shortColCyan{l_2?m_1}$ is not one-unambiguous.

\begin{figure}[ht]
\centering
\scalebox{.725}{\input{images/por1/por1}}
\caption{
Here $\shortColCyan{l_2!m_1}$ is \shortColGreen{one-unambiguous} while $\shortColBrown{l_1?m_1}$ is \shortColRed{one-ambiguous}
\\
It is recommended that : $\bullet$ $\shortColCyan{l_2!m_1}$ should be explored over $\shortColBrown{l_1?m_1}$\\
\phantom{It is recommended that : }$\bullet$ \textcircled{2} should be ignored and \textcircled{1} kept~~~~
}
\label{fig:por1}
\end{figure}

On Fig.\ref{fig:por1} and Fig.\ref{fig:por2}, we provide two other examples in which the one-ambiguity criterion allows the selection of a unique linearization.
On Fig.\ref{fig:por1}, $\shortColBrown{l_1?m_1}$ is ambiguous while we have $\shortColCyan{l_2!m_1} \oneUnambiguous i_0$. Hence, it is safe to execute $\shortColCyan{l_2!m_1}$ before any other action and retain only this single linearization (thus ignoring vertex \textcircled{2} which is greyed-out).
On Fig.\ref{fig:por2}, $\shortColCyan{l_1?m_2}$ is one-unambiguous and $\shortColBrown{l_2?m_1}$ is not. This then justifies the choice of $\shortColCyan{l_1?m_2}$ over $\shortColBrown{l_2?m_1}$.

\begin{figure}[ht]
\centering
\scalebox{.725}{\input{images/por2/por2}}
\caption{
Here $\shortColCyan{l_1?m_2}$ is \shortColGreen{one-unambiguous} while $\shortColBrown{l_2?m_1}$ is \shortColRed{one-ambiguous}
\\
It is recommended that : $\bullet$ $\shortColCyan{l_1?m_2}$ should be explored over $\shortColBrown{l_2?m_1}$\\
\phantom{It is recommended that : }$\bullet$ \textcircled{2} should be ignored and \textcircled{1} kept~~~~
}
\label{fig:por2}
\end{figure}

The notion of one-unambiguous actions is reminiscent of works on the use of ``one-unambiguous''\footnote{one-unambiguous regular expressions are also called ``deterministic'' (in this precise context) because their associated Glushkov automaton is deterministic} regular expressions for SGML\footnote{Standard Generalized Markup Language} and later DTD\footnote{Document Type Definition} and XML\footnote{eXtensible Markup Language} schemas \cite{one_unambiguous_regular_languages,deterministic_regular_expressions_in_linear_time,learning_deterministic_regular_expressions_for_the_inference_of_schemas_from_xml_data}. 
Indeed, these schemas (which are used to format standardized text documents) are described by regular expressions that are required to be one-unambiguous.
In this context, a regular expression is one-unambiguous \cite{one_unambiguous_regular_languages} if, without peaking ahead at the input word, one can match each letter of that word unambiguously to a specific instance of that letter in the regular expression (uniquely defined via its position in the regular expression). For instance (example from \cite{learning_deterministic_regular_expressions_for_the_inference_of_schemas_from_xml_data}), $e=(a+b)^*a$ is not one-unambiguous because the first symbol in $aaa$ could be matched to either the first or the second $a$ in $e$ (and it is impossible to know which one is correct without peaking ahead). By contrast, $b^*a(v^*a)^*$ is one-unambiguous because only the first $a$ is a possible match.

In order to include POR to the definition of the graph, we define a variant $\graphRelationPORExec \subset \graphRelationExec$ such that for any interactions $i \in \mathbb{I}(L)$ and multi-trace $\mu \in \mathbb{M}(L)$:
\begin{itemize}
    \item if there exist $a \in \mathbb{A}(L)$ and $i' \in \mathbb{I}(L)$ s.t.~$\mu = a \multiAppend \mu'$, $i \xrightarrow{a} i'$ and $a \oneUnambiguous i$ then there is a unique such $a$, $i'$ and $\mu'$ such that $(i,\mu) \graphRelationPORExec (i',\mu')$. In particular, this implies there are no other $(i'',\mu'')$ such that $(i,\mu) \graphRelationPORExec (i'',\mu'')$, even if there are one-unambiguous actions other than $a$ in $i$;

    \item if not, then $\forall~(i',\mu') \in \mathbb{V}$ s.t.~$(i,\mu) \graphRelationExec (i',\mu')$, we also have $(i,\mu) \graphRelationPORExec (i',\mu')$.
\end{itemize}

We then define $\leadsto^{\mathbf{POR}} = (\graphRelationPass \cup \graphRelationHide \cup \graphRelationPORExec)$.

\subsection{Local analyses\label{ssec:local_analyses}}

In the following, we refer to the exploration of $\mathbb{G}$ to find a path $(i,\mu) \overset{*}{\leadsto} \macroOKVerdict$ as a ``global analysis''. 
We define its success as the discovery of such a path. If $\mathbb{G}$ is explored entirely and no such path is found, then the analysis fails because it implies that no such path exists.
As per Th.\ref{th:multipref_equates_pass} this implies that a global analysis algorithm solves the following membership problem: whether or not $\mu \in \overline{\sigma_L(i)}$.

``{\em Local analyses}'' involve examining the execution of individual lifelines $l\in L$ to verify the validity of components $\mu_{|l}$ within a multi-trace $\mu$, with regards to corresponding restricted local views of the interaction $i$. 
Concretely, these local views are obtained via removing from the interaction $i$ all the actions that do not occur on $l$ i.e., they correspond to $\lifelineElim_{L \setminus \{l\}}(i)$.

Local analyses can be leveraged during a global analysis to reduce the size of its search space (i.e., the parts of $\mathbb{G}$ that need to be explored).
Let us indeed illustrate this via the example on Fig.\ref{fig:locana_principle_glocal}.

\begin{figure}[h]
    \centering

\scalebox{.725}{\input{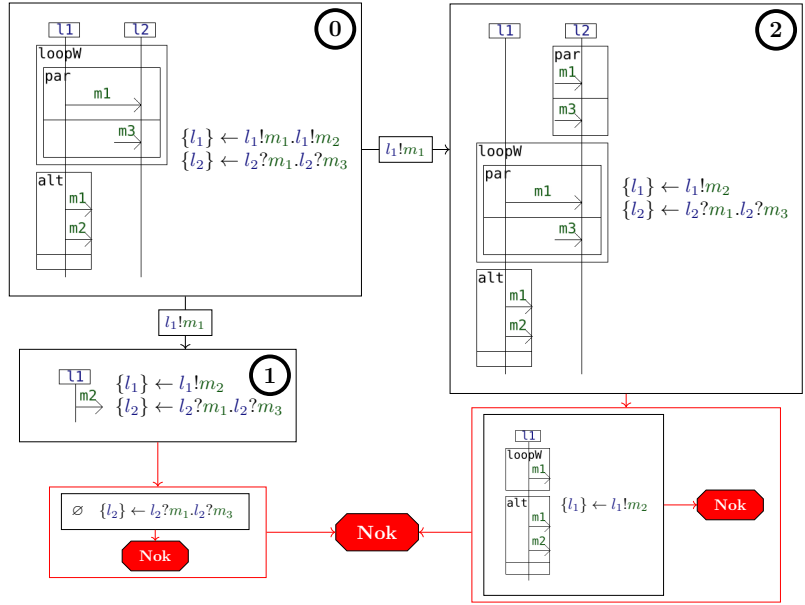}}
\caption{Local analyses to predict non-membership}
\label{fig:locana_principle_glocal}
\end{figure}

Given $L_0 = \{\hlf{l_1},\hlf{l_2}\}$, let us consider the interaction $i_0$ and multi-trace $\mu_0$ drawn on the top left of Fig.\ref{fig:locana_principle_glocal}.
Fig.\ref{fig:locana_principle_glocal} describes a global analysis (solving whether or not $\mu_0 \in \overline{\sigma_{L_0}(i_0)}$) that is complemented with local analyses.

Starting from the initial vertex $(i_0,\mu_0)$ of $\mathbb{G}$, denoted by \textcircled{0}, only $\graphRelationExec$ can be applied and the only possible matching action in $\mu_0$ is the $\hlf{l_1}!\hms{m_1}$. 
There are then two possible applications of $\graphRelationExec$, matching $\hlf{l_1}!\hms{m_1}$ from $\mu_0$ with either occurrence of $\hlf{l_1}!\hms{m_1}$ in $i_0$.

In the case where we match the action underneath the $alt$, via $(i_0,\mu_0) \graphRelationExec (i_1,\mu_1)$ 
(denoted by 
\textcircled{0} and 
\textcircled{1} on Fig.\ref{fig:locana_principle_glocal}), although we could keep exploring the graph by applying $\graphRelationExec$ from 
\textcircled{1} (via executing $\hlf{l_1}!\hms{m_2}$), a local 
analysis on \textcircled{1} (framed in red on the bottom left of Fig.\ref{fig:locana_principle_glocal}) allows us to stop the exploration beyond \textcircled{1}.
Indeed, this local analysis reveals an issue with the component $\mu_{1|\hlf{l_2}}$ of $\mu_1$ on $\hlf{l_2}$, which is not a prefix of a behavior that is accepted by the corresponding local view of $i_1$ i.e., $\lifelineElim_{\{\hlf{l_1}\}}(i_1)$.

Similarly, in the case where the $\hlf{l_1}!\hms{m_1}$ underneath the $loop_W$ is matched, yielding $(i_0,\mu_0) \graphRelationExec (i_2,\mu_2)$, a local analysis on \textcircled{2} reveals a similar problem with $\mu_{2|\hlf{l_1}}$.
This avoids further exploration of the graph (i.e., executing $\hlf{l_2}?\hms{m_1}$ and then $\hlf{l_2}?\hms{m_3}$).

In the following, we will reason on multi-traces that are reduced to a single lifeline.
These multi-traces can be equated to traces (the $\multiAppend$ operator then equating classical concatenation).
For the sake of simplicity, for any multi-trace $\mu \in \mathbb{M}(L)$ and any lifeline $l \in L$, we use the notation $\mu_{|l}$ of the trace component of $\mu$ on $l$ to denote the single lifeline multi-trace $\lifelineElim_{L \setminus \{l\}}(\mu)$.
For any $\depthAnaLoc \in \mathbb{N}$, we denote by $\mu \lbrack 0..\depthAnaLoc \rbrack$ the multi-trace such that $\forall~l \in L$, $(\mu \lbrack 0..\depthAnaLoc \rbrack)_{|l} = \mu_{|l}\lbrack 0..\depthAnaLoc \rbrack$.
This implies that $\mu \lbrack 0..\depthAnaLoc \rbrack \in \overline{\mu}$ and $\lifelineElim_{L \setminus \{l\}}(\mu \lbrack 0..\depthAnaLoc \rbrack) = \mu_{|l}\lbrack 0..\depthAnaLoc \rbrack$

Th.\ref{th:semantics_of_lifeline_removal_in_interactions} states that, for any subset $L \subseteq \mathcal{L}$, any interaction $i \in \mathbb{I}(L)$, and any proper subset $H \subsetneq L$, we have $\sigma_{L \setminus H}(\lifelineElim_H(i)) = \lifelineElim_H(\sigma_L(i))$. 
In particular, for any multi-trace $\mu \in \overline{\sigma_L(i)}$ and any lifeline $l \in L$, it implies that:

\[
\begin{array}{ll}
&  
\lifelineElim_{L \setminus \{l\}}(\mu) \in \overline{\sigma_{\{l\}}(\lifelineElim_{L \setminus \{l\}}(i))}
\\[.3cm]
\text{and } \forall~\depthAnaLoc \in \mathbb{N},
&
\mu_{|l}\lbrack 0..\depthAnaLoc \rbrack \in \overline{\sigma_{\{l\}}(\lifelineElim_{L \setminus \{l\}}(i))}
\end{array}
\]

Conversely, given a $\depthAnaLoc \in \mathbb{N}$, if there exists a lifeline $l \in L$ such that $\mu_{|l}\lbrack 0..\depthAnaLoc \rbrack \not\in \overline{\sigma_{\{l\}}(\lifelineElim_{L \setminus \{l\}}(i))}$, then it implies that $\mu \not\in \overline{\sigma_L(i)}$.

Def.\ref{def:ana_local} defines local analyses as conducting these verifications on each lifeline $l \in L$ individually up to a specified depth $\depthAnaLoc \in \mathbb{N}$. 
This depth corresponds to a look-ahead length, where non-membership verification is performed only for a prefix $\mu \lbrack 0..\depthAnaLoc \rbrack$, the first $\depthAnaLoc$ actions of each trace component $\mu_{|l}$ being analyzed.
Parameterizing $\depthAnaLoc$ can help mitigate the cost of performing the local analyses.

\begin{definition}
\label{def:ana_local}
For any $L \subseteq \mathcal{L}$ and $p\in \mathbb{N}$, the predicate $\anaLocal_{L,\depthAnaLoc} \subseteq \mathbb{I}(L) \times \mathbb{M}({L})$ is such that for any $i \in \mathbb{I}(L)$ and $\mu \in \mathbb{M}(L)$:
\[
\anaLocal_{L,\depthAnaLoc}(i,\mu)
\Leftrightarrow 
\left(
\forall~l \in L,~
(\lifelineElim_{L \setminus \{l\}}(i),\mu_{|l}[0..\depthAnaLoc]) \overset{*}{\leadsto} \macroOKVerdict
\right)
\]
\[
\anaLocal_{L,\infty}(i,\mu)
\Leftrightarrow 
\left(
\forall~l \in L,~
(\lifelineElim_{L \setminus \{l\}}(i),\mu_{|l}) \overset{*}{\leadsto} \macroOKVerdict
\right)
\]
\end{definition}

Given a vertex $(i, \mu)$, the failure of any local analysis (i.e., having $\neg \anaLocal_{L,\depthAnaLoc}(i, \mu)$ or $\neg \anaLocal_{L,\infty}(i, \mu)$) implies, by the contrapositive of Prop.\ref{prop:local_analyses_deviation}, the failure of the global analysis. Thus, exploring the sub-graph reachable from $(i,\mu)$ would serve no purpose.

\begin{property}
\label{prop:local_analyses_deviation}
For $L \subseteq \mathcal{L}$, $\depthAnaLoc \in \mathbb{N}$, $i \in \mathbb{I}(L)$ and $\mu \in \mathbb{M}(L)$, we have:
\[ 
\left(
(i,\mu) \overset{*}{\leadsto} \macroOKVerdict
\right)
\Rightarrow
\anaLocal_{L,\depthAnaLoc}(i,\mu) 
\]
\end{property}

\begin{proof}
Implied by Th.\ref{th:semantics_of_lifeline_removal_in_interactions} and the fact that $\mu_{|l}[0..\depthAnaLoc] \in \overline{\lifelineElim_{L \setminus \{l\}}(\mu)}$ for any $l \in L$, $\mu \in \mathbb{M}(L)$ and $\depthAnaLoc \in \mathbb{N}$.
\end{proof}

By the way, Prop.\ref{prop:local_analyses_deviation} ensures $
\left(
(i,\mu) \overset{*}{\leadsto} \macroOKVerdict
\right)
\Rightarrow
\anaLocal_{L,\infty}(i,\mu) 
$.

In the example from Fig.\ref{fig:locana_principle_glocal}, via the use of local analyses, we reduced the size of the search space from $7$ vertices to $3$. 
However, this gain can be much more consequent, as can be shown by generalizing the example.
Indeed, let us consider $i^* = seq(loop_W(strict(\hlf{l_1}!\hms{m_1},\hlf{l_2}?\hms{m_1})),alt(seq(\hlf{l_1}!\hms{m_1},\hlf{l_1}!\hms{m_2}),\varnothing))$ and, for any $n\geq 2$, $i^\dag_n=seq(\hlf{l_2}!\hms{m_2},seq(\cdots,seq(\hlf{l_2}!\hms{m_{n-1}},\hlf{l_2}!\hms{m_n})\cdots))$, let us then consider $i_n = seq(i^*,i^\dag_n)$ and $\mu_n = (\hlf{l_1}!\hms{m_1}.\hlf{l_1}!\hms{m_2},~\hlf{l_2}?\hms{m_1}.\hlf{l_2}!\hms{m_2}.\cdots.\hlf{l_2}!\hms{m_n})$. 
Here, we have $\mu_n \not\in \overline{\sigma_L(i_n)}$ but to prove it without using local analyses, we need to explore $n + 4$ vertices.
With local analyses (applied to the equivalent of \textcircled{1} and \textcircled{2} of Fig.\ref{fig:locana_principle_glocal}), whichever is the value of $n$, we only need to explore $3$ vertices.

However, the success of all local analyses on a vertex $(i, \mu)$ doesn't guarantee global analysis success. For instance let us consider:

\centerline{$\mu=(\hlf{l_1}!\hms{m},~\hlf{l_2}?\hms{m},~\hlf{l_3}?\hms{m})$ and $i=alt(strict(\hlf{l_1}!\hms{m},\hlf{l_2}?\hms{m}),strict(\hlf{l_1}!\hms{m},\hlf{l_3}?\hms{m}))$ }
\noindent All three components $\mu_{|\hlf{l}}$ of the multi-trace precisely belong to the semantics $\sigma(\lifelineElim_{L \setminus {\hlf{l}}}(i))$ of the interaction reduced to that same lifeline. Yet, the multi-trace itself is not accepted because only one of the two receptions can occur. 

As expected, local analyses alone cannot substitute a global analysis.
However, in specific cases, leveraging local analyses can enhance the performance of global analyses.
This depends on the interaction $i$, the analyzed multi-trace $\mu$, and the exploration method of the graph $\mathbb{G}$ reachable from $(i,\mu)$. This is particularly true in cases where $\mu \not\in \overline{\sigma_L(i)}$, as we have illustrated in Fig.\ref{fig:locana_principle_glocal}.

\section{Experimental evaluation\label{sec:experiments}}

We have implemented our approach in the HIBOU tool \cite{hibou_label} (version 0.8.7). It allows the parameterized exploration of graph $\mathbb{G}$ from any vertex $(i,\mu)$.

\subsection{Experimental evaluation of graph sizes \label{sec:size-graph}}

In the following, we aim to estimate the size of the graph $\mathbb{G}$. As input data, we build a benchmark as follows:
\begin{itemize}
    \item We randomly generate $100$ distinct interaction terms with $5$ lifelines and $6$ messages. This random generation proceeds inductively, drawing a random symbol. The process ends if that symbol is a constant ($\varnothing$ or an action). Otherwise, it is called recursively to generate its subterms. We require a minimum term depth of $6$ and a minimum total number of symbols (after term simplification) of $20$. 
    \item For each of these $100$ interactions, we generate:
    \begin{itemize}
        \item $240$ random accepted multi-traces of sizes between $1$ and $30$
        \item $240$ random multi-prefixes, one for each of the $240$ multi-traces
        \item $240$ ``noise'' mutants, obtained via inserting a random action in one of the $240$ multi-prefixes
        \item $240$ ``swap action'' mutants, obtained via interverting the order of two actions on a local trace component of one of the $240$ multi-prefixes
        \item $240$ ``swap component'' mutants, obtained via interverting two components on the same lifeline from two distinct multi-prefixes
    \end{itemize}
\end{itemize}

In practice, we have generated $114~794$ unique $(i,\mu)$ initial vertices.
The metric which we evaluate is the size of the following four sets of vertices, for any of the $114~794$ initial vertices $(i,\mu)$:
\[ \mathtt{graph}(i,\mu) =
\{ (i',\mu') \in \mathbb{V} ~|~ (i,\mu) \overset{*}{\leadsto} (i',\mu') \} \]
\[ \mathtt{graph}^{\textbf{P}}(i,\mu) 
=
\{ (i',\mu') \in \mathbb{V} ~|~ (i,\mu) \overset{*}{\leadsto}\vphantom{\leadsto}\hspace*{-.1cm}^{\textbf{POR}} (i',\mu') \}
\]

\[
\mathtt{graph}_{\textbf{L}}(i,\mu) = \{(i,\mu)\} \cup 
\bigcup_{(i',\mu') \in I} \mathtt{graph}_{\textbf{L}}(i',\mu') \]

\noindent
with $I = \{ (i',\mu') ~|~(i,\mu) \leadsto (i',\mu') \wedge
\anaLocal_{L,\infty}(i,\mu) \}$.

\[
\mathtt{graph}_{\textbf{L}}^{\textbf{P}}(i,\mu) 
= \{(i,\mu)\} \cup 
\bigcup_{(i',\mu') \in J} \mathtt{graph}_{\textbf{L}}^{\textbf{P}}(i',\mu') 
\]

\noindent with $J = \{ (i',\mu') ~|~(i,\mu)  \leadsto^{\textbf{POR}} (i',\mu')
 \wedge
 \anaLocal_{L,\infty}(i,\mu) \}$.

These metrics correspond to measuring the size of the sub-graph of $\mathbb{G}$ that is reachable from $(i,\mu)$ using 4 distinct restrictions of the traversal rules.
For $\mathtt{graph}(i,\mu)$, the baseline $\leadsto$ rule-set is used.
For $\mathtt{graph}^{\textbf{P}}(i,\mu)$, we use partial order reduction via $\leadsto^{\textbf{POR}}$ from Sec.\ref{ssec:partial_order_reduction}. The two other methods are variants of these two first that use local analyses as presented in Sec.\ref{ssec:local_analyses}. 
We use $\infty$ as a look-ahead depth to fully analyse all local traces when performing a local analysis.

All the details on these experiments and the means to reproduce them are given in \cite{hibou_lfrem_exp_graph_size}.
We performed the experiments on an Intel(R) Core(TM) i5-6360U CPU (2.00GHz) with 8GB RAM with HIBOU version 0.8.7.
We set a $3$ seconds timeout for the traversal of the graph.

\begin{figure}[ht]
    \centering
\scalebox{.875}{
\begin{tabular}{|l|r|r|r|r|r|r|r|r|}
\hline 
\multirow{2}{*}{}
&
\multirow{2}{*}{\textbf{ACPT}}
&
\multirow{2}{*}{\textbf{PREF}}
&
\multicolumn{2}{c|}{\textbf{NOIS}}
&
\multicolumn{2}{c|}{\textbf{SACT}}
&
\multicolumn{2}{c|}{\textbf{SCMP}}
\\
\cline{4-9} 
&
&
&
$\macroOKVerdict$
&
$\macroKOVerdict$
&
$\macroOKVerdict$
&
$\macroKOVerdict$
&
$\macroOKVerdict$
&
$\macroKOVerdict$
\\
\hline 
{\footnotesize \textbf{TOTAL}}
&
22498
&
23059
&
2533
&
20571
&
16690
&
6376
&
22451
&
616
\\
\hline 
{\footnotesize $\mathtt{timeout}$}
&
1369
&
148
&
26
&
196
&
129
&
15
&
118
&
11
\\
\hline 
{\footnotesize $\mathtt{timeout}_{\mathbf{LOC}}$}
&
1308
&
143
&
23
&
1
&
126
&
0
&
112
&
3
\\
\hline 
{\footnotesize $\mathtt{timeout}^{\mathbf{POR}}$}
&
30
&
1
&
1
&
60
&
0
&
6
&
0
&
1
\\
\hline 
{\footnotesize $\mathtt{timeout}_{\mathbf{LOC}}^{\mathbf{POR}}$}
&
9
&
1
&
0
&
0
&
1
&
0
&
1
&
0
\\
\hline 
\end{tabular}
  }  
    \caption{Occurrences of timeouts when computing graph size}
    \label{fig:graph_size_timeouts}
\end{figure}

Fig.\ref{fig:graph_size_timeouts} presents results related to the timeouts in a table.
Its columns correspond to the type of multi-trace that is involved, full multi-traces, multi-prefixes, the three kinds of mutants (further divided into those that are still within the multi-prefix semantics of the corresponding interaction and those that are not).
The first row counts the total number of data points (whether or not there has been a timeout using any method).
The second row counts the number of timeouts when exploring $\mathtt{graph}(i,\mu)$ while for the third, fourth and fifth it is when exploring respectively $\mathtt{graph}_{\mathbf{L}}(i,\mu)$, $\mathtt{graph}^{\mathbf{P}}(i,\mu)$ and $\mathtt{graph}_{\mathbf{L}}^{\mathbf{P}}(i,\mu)$.

We observe that the overall number of timeouts decreases whenever we use the partial order reduction technique (two last rows), whether or not we use local analyses.
The number of timeouts decreases when using local analyses only when considering multi-traces that are $\macroKOVerdict$. This is particularly the case when considering \textbf{NOIS} $\macroKOVerdict$ multi-traces, whether or not we also use partial order reduction.

\begin{figure}[ht]

\resizebox{\textwidth}{!}{
\begin{tabular}{|c|c|c|}
\cline{2-3}
\multicolumn{1}{c|}{}
&
{\footnotesize$\neg$\textbf{POR}}
&
{\footnotesize\textbf{POR}}
\\
\cline{1-1}
\multicolumn{1}{|c}{
\raisebox{9\normalbaselineskip}[0pt][0pt]{\rotatebox[origin=c]{90}{{\footnotesize$\neg$\textbf{LOC}}}}
}
&
\includegraphics[scale=1]{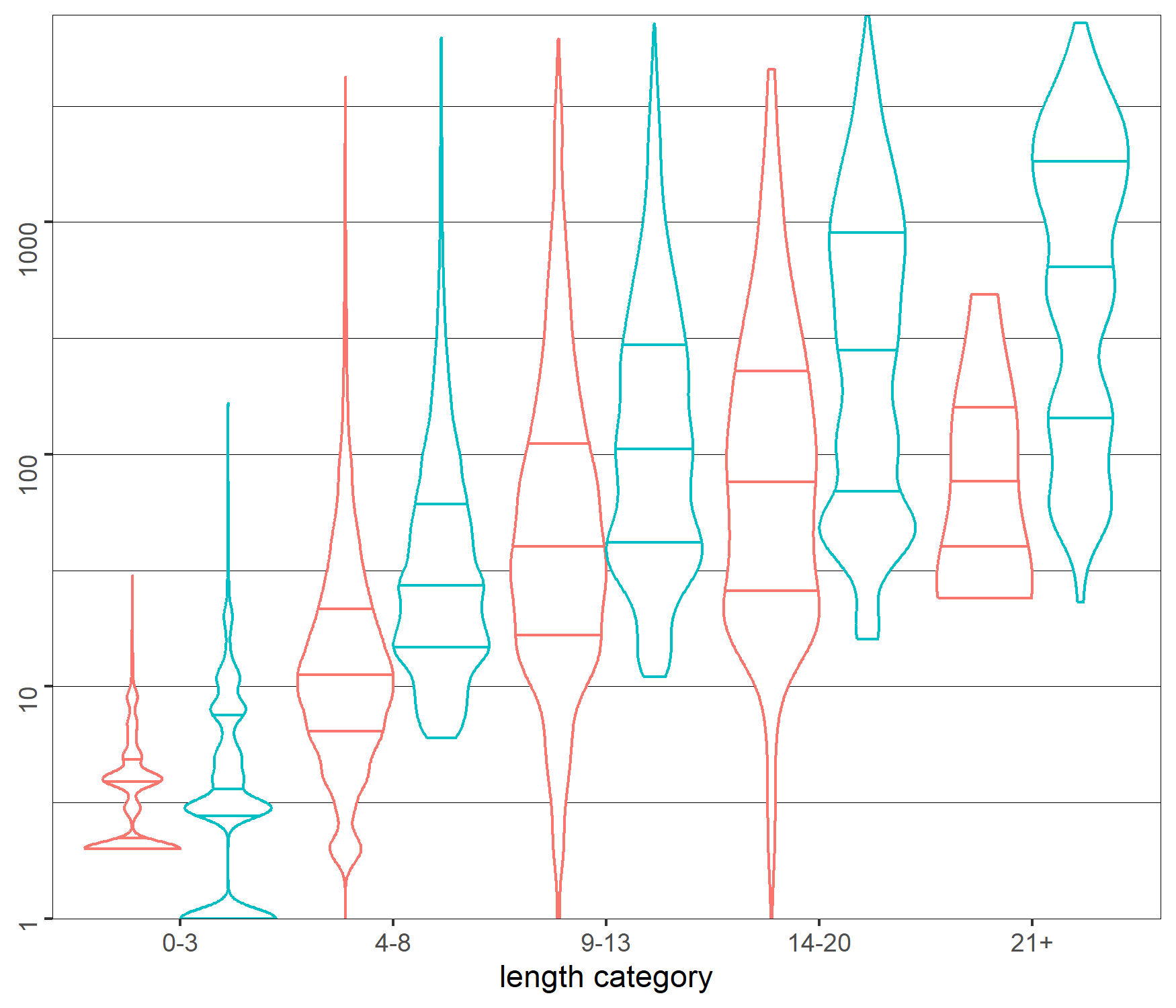}
&
\includegraphics[scale=1]{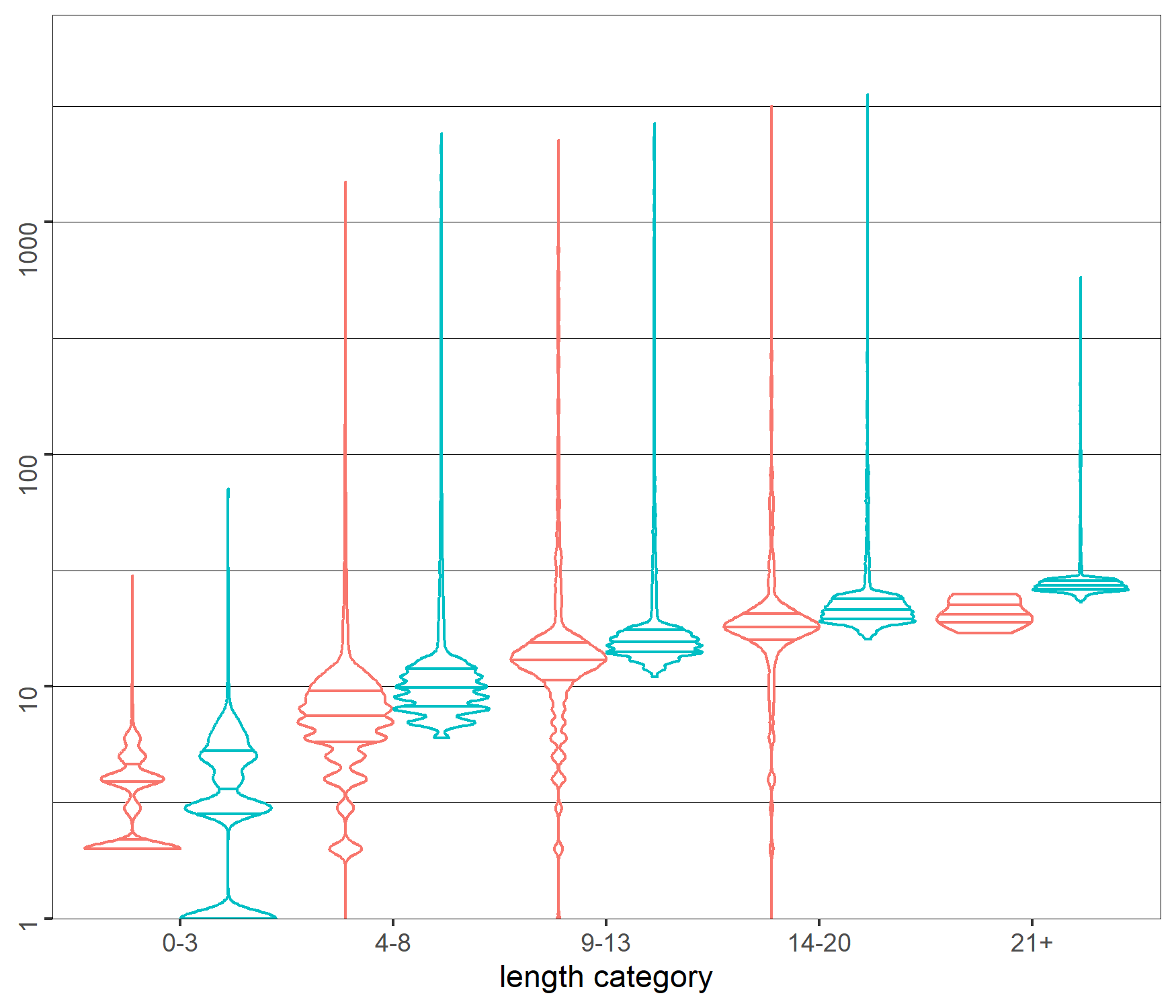}
\\
\hline 
\multicolumn{1}{|c}{
\raisebox{9\normalbaselineskip}[0pt][0pt]{\rotatebox[origin=c]{90}{{\footnotesize\textbf{LOC}}}}
}
&
\includegraphics[scale=1]{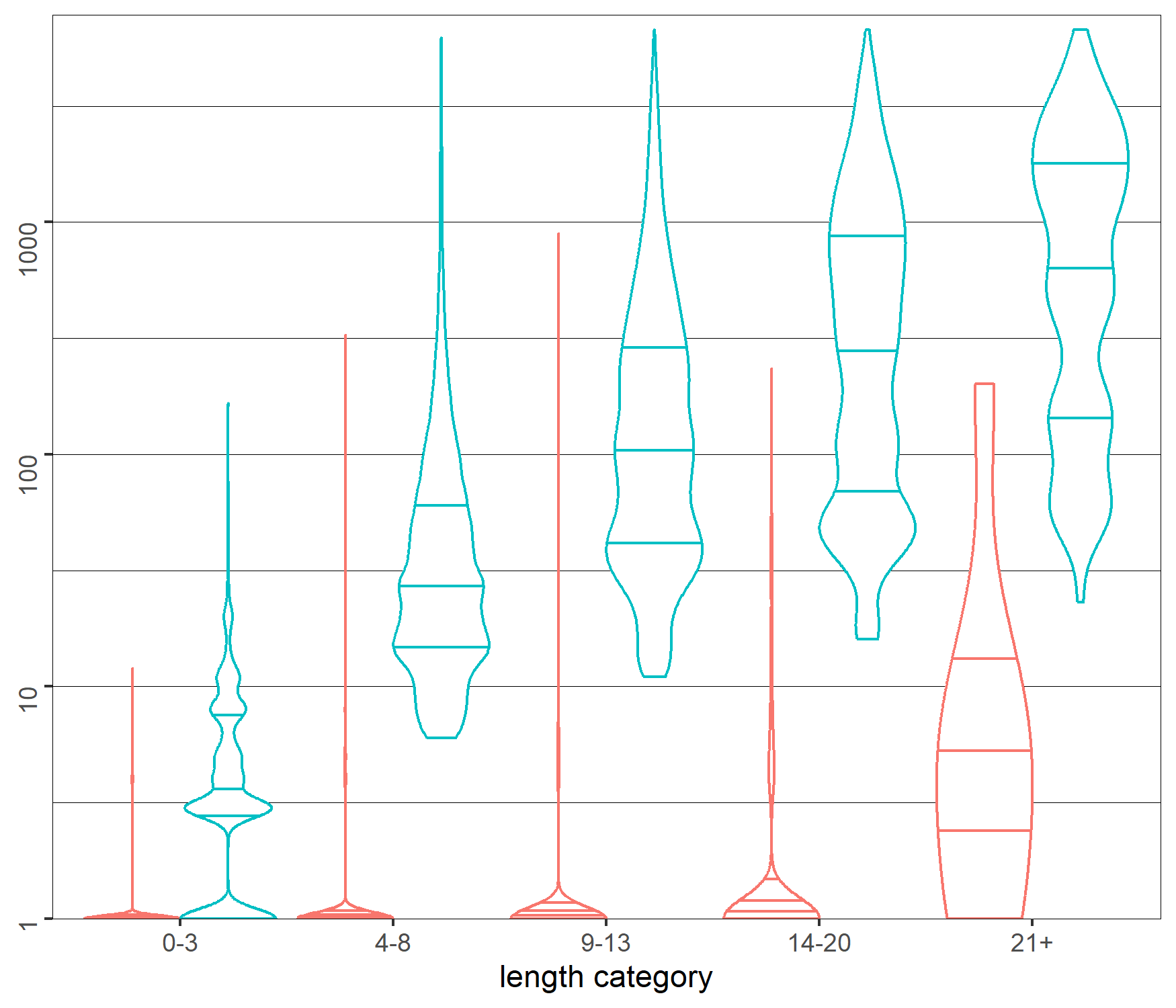}
&
\includegraphics[scale=1]{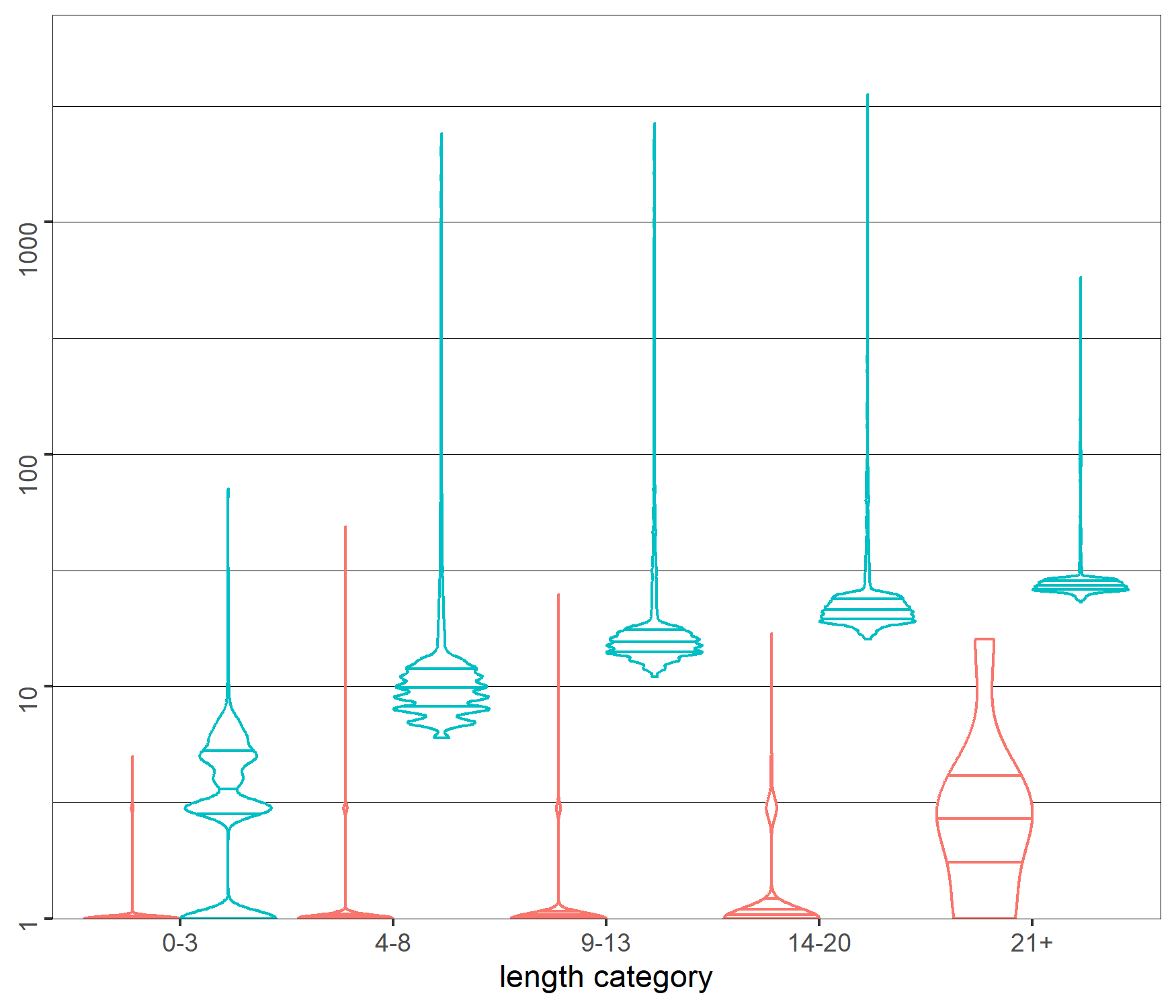}
\\
\hline 
\end{tabular}
}
    
    \caption{Effect of partial order reduction \& local analysis on graph size ($log_{10}$ scale)}
    \label{fig:graph_size_experiments}
\end{figure}

Fig.\ref{fig:graph_size_experiments} compares the distributions of graph sizes on our dataset depending on the method that is used (with or without POR and LOC) and the kind of multi-trace (by length and $\macroOKVerdict$ or $\macroKOVerdict$ verdict).
To have a fair comparison, we only consider $(i,\mu)$ that didn't return a timeout with any of the four methods.
This amounts to $112~755$ unique vertices $(i,\mu)$.

Fig.\ref{fig:graph_size_experiments} presents the results on a double-entry table.
The two columns $\neg \textbf{POR}$ and $\textbf{POR}$ correspond to the non-use and use of the Partial Order Reduction technique from Sec.\ref{ssec:partial_order_reduction} while the two rows $\neg \textbf{LOC}$ and $\textbf{LOC}$ refer to the Local Analysis technique from Sec.\ref{ssec:local_analyses}. Hence, each table cell corresponds to one of the previous $|\mathtt{graph}^X_Y|$ metrics.

We represent statistical distributions of graph sizes from our dataset on each cell via violin plots \cite{violin_plots}.
The $y$ axis corresponds to the graph size and is in logarithmic (base 10) scale.
These distributions are further divided according to whether or not the multi-trace is a correct prefix (via the blue and red colors) and according to the size of the multi-trace (we gather them into groups per length category, which correspond to the $x$ axis).
Derived from classical box-plots, these diagrams also represent the density around values in the fashion of kernel density plots \cite{violin_plots}.
The wider the violin is at a certain $y$ value, the more there are data points in its neighborhood\footnote{we use an additional scaling option so that all violins have the same maximal width, see \url{https://ggplot2.tidyverse.org/reference/geom_violin.html}}.
The 3 horizontal lines correspond, from bottom to top, to the 1st quartile, median and 3rd quartile.

We observe that the effect of POR is to reduce graph size globally. 
The maximal values are always reduced, but the statistical effect is most important on the median value of graph sizes, which seem to follow a linear curve (in $log_{10}$ scale) with POR and an exponential curve without.
This change in the shape of the distribution is most visible without local analyses and on $\macroOKVerdict$ traces with local analyses (they appear packed at the bottom, around the median value).

The effect of LOC is very impressive when considering $\macroKOVerdict$ verdict multi-traces. This is not surprising in the case of \textbf{NOIS} $\macroKOVerdict$ multi-traces as there can be an immediate failure of local analysis at the start, from the initial $(i,\mu)$ vertex, the graph then being of size $1$.
However, a distinct effect can be observed for longer multi-traces (length category 14-20 and 21$+$) as the graph size is consequently reduced.
Indeed, LOC enables avoiding parts of the graph following a wrong choice that has been made (e.g., the choice of a branch of an alternative, of instantiating a loop, etc.).

\subsection{Experimental evaluation of analysis time}


In all generality, the search graph $\mathbb{G}$ is not known in advance, and exploring it in its entirety is often unnecessary.
Indeed, it suffices to find a single path from $(i,\mu)$ to $\macroOKVerdict$ to prove that $\mu \in \overline{\sigma_L(i)}$. Upon reaching $\macroOKVerdict$, the algorithm can immediately stop, further graph exploration being unnecessary.
As a result, the order of graph traversal plays a crucial role, and the time performances of the algorithm depend on implementation-dependent search strategies and heuristics.
Of course, this only concerns the analysis of correct multi-prefixes, because, to prove that $\mu \not\in \overline{\sigma_L(i)}$, we still need to explore the whole graph.

\begin{figure}[ht]
    \centering
\scalebox{.875}{
\begin{tabular}{|l|r|r|r|r|r|r|r|r|}
\hline 
\multirow{2}{*}{}
&
\multirow{2}{*}{\textbf{ACPT}}
&
\multirow{2}{*}{\textbf{PREF}}
&
\multicolumn{2}{c|}{\textbf{NOIS}}
&
\multicolumn{2}{c|}{\textbf{SACT}}
&
\multicolumn{2}{c|}{\textbf{SCMP}}
\\
\cline{4-9} 
&
&
&
$\macroOKVerdict$
&
$\macroKOVerdict$
&
$\macroOKVerdict$
&
$\macroKOVerdict$
&
$\macroOKVerdict$
&
$\macroKOVerdict$
\\
\hline 
{\footnotesize \textbf{TOTAL}}
&
23114
&
23114
&
2543
&
20565
&
16738
&
6376
&
22498
&
616
\\
\hline 
{\footnotesize $\mathtt{timeout}$}
&
192
&
1
&
1
&
174
&
1
&
15
&
3
&
10
\\
\hline 
{\footnotesize $\mathtt{timeout}_{\mathbf{LOC}}$}
&
0
&
0
&
0
&
13
&
0
&
0
&
0
&
3
\\
\hline 
{\footnotesize $\mathtt{timeout}^{\mathbf{POR}}$}
&
37
&
0
&
0
&
52
&
0
&
6
&
0
&
0
\\
\hline 
{\footnotesize $\mathtt{timeout}_{\mathbf{LOC}}^{\mathbf{POR}}$}
&
0
&
0
&
0
&
0
&
0
&
0
&
0
&
0
\\
\hline 
\end{tabular}
}    
    \caption{Occurrences of timeouts for analyses}
    \label{fig:time_timeouts}
\end{figure}

In the following, we use the same benchmark as in Sec.\ref{sec:size-graph}. Still, instead of exploring the graph in its entirety, we traverse it using a simple Depth First Search with a stopping criterion upon finding $\macroOKVerdict$.
This allows us to focus on the time required for the analysis and contrast results from Sec.\ref{sec:size-graph}.
Details on this second set of experiments are available in \cite{hibou_lfrem_exp_ana_time}.

Using the same $3$ seconds timeout, we observe, in Fig.\ref{fig:time_timeouts}, different results than in Fig.\ref{fig:graph_size_timeouts} concerning the occurrences of timeouts. Indeed, the fact that we do not need to compute the whole graph greatly improves performances for $\macroOKVerdict$ multi-traces, which mechanically reduces the number of timeouts (especially for \textbf{ACPT}).

\begin{figure}[ht]

\resizebox{\textwidth}{!}{
\begin{tabular}{|c|c|c|}
\cline{2-3}
\multicolumn{1}{c|}{}
&
{\footnotesize$\neg$\textbf{POR}}
&
{\footnotesize\textbf{POR}}
\\
\cline{1-1}
\multicolumn{1}{|c}{
\raisebox{9\normalbaselineskip}[0pt][0pt]{\rotatebox[origin=c]{90}{{\footnotesize$\neg$\textbf{LOC}}}}
}
&
\includegraphics[scale=1]{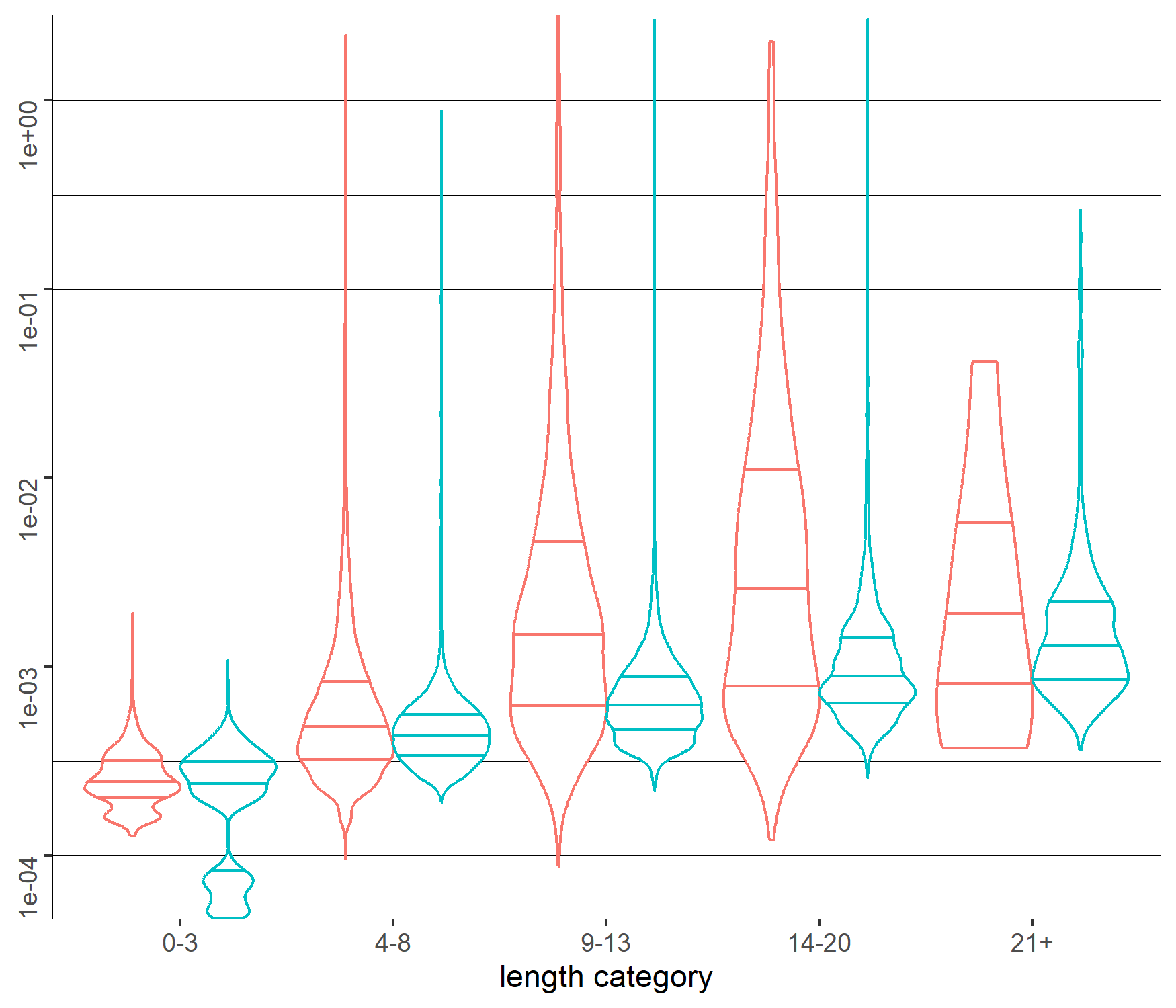}
&
\includegraphics[scale=1]{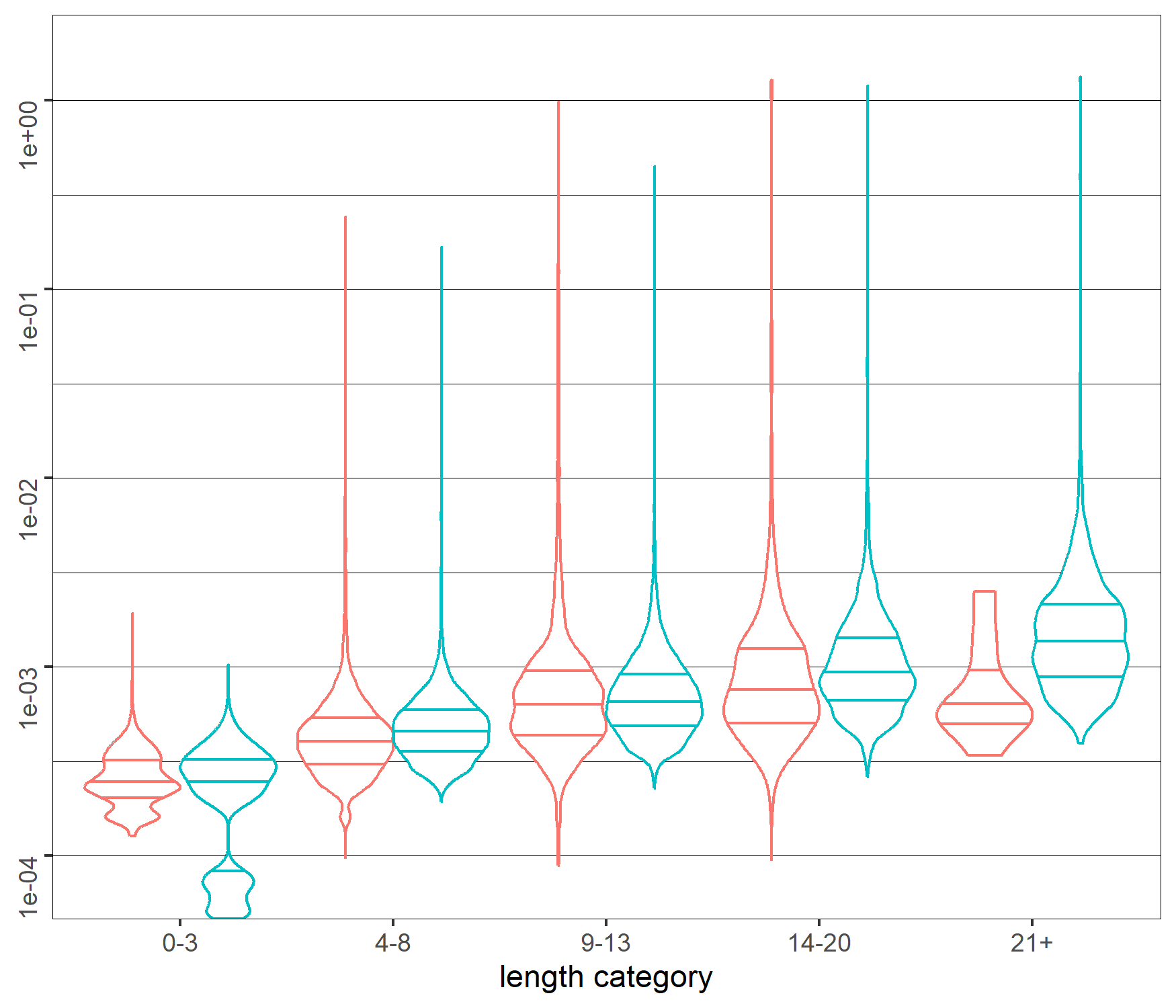}
\\
\hline 
\multicolumn{1}{|c}{
\raisebox{9\normalbaselineskip}[0pt][0pt]{\rotatebox[origin=c]{90}{{\footnotesize\textbf{LOC}}}}
}
&
\includegraphics[scale=1]{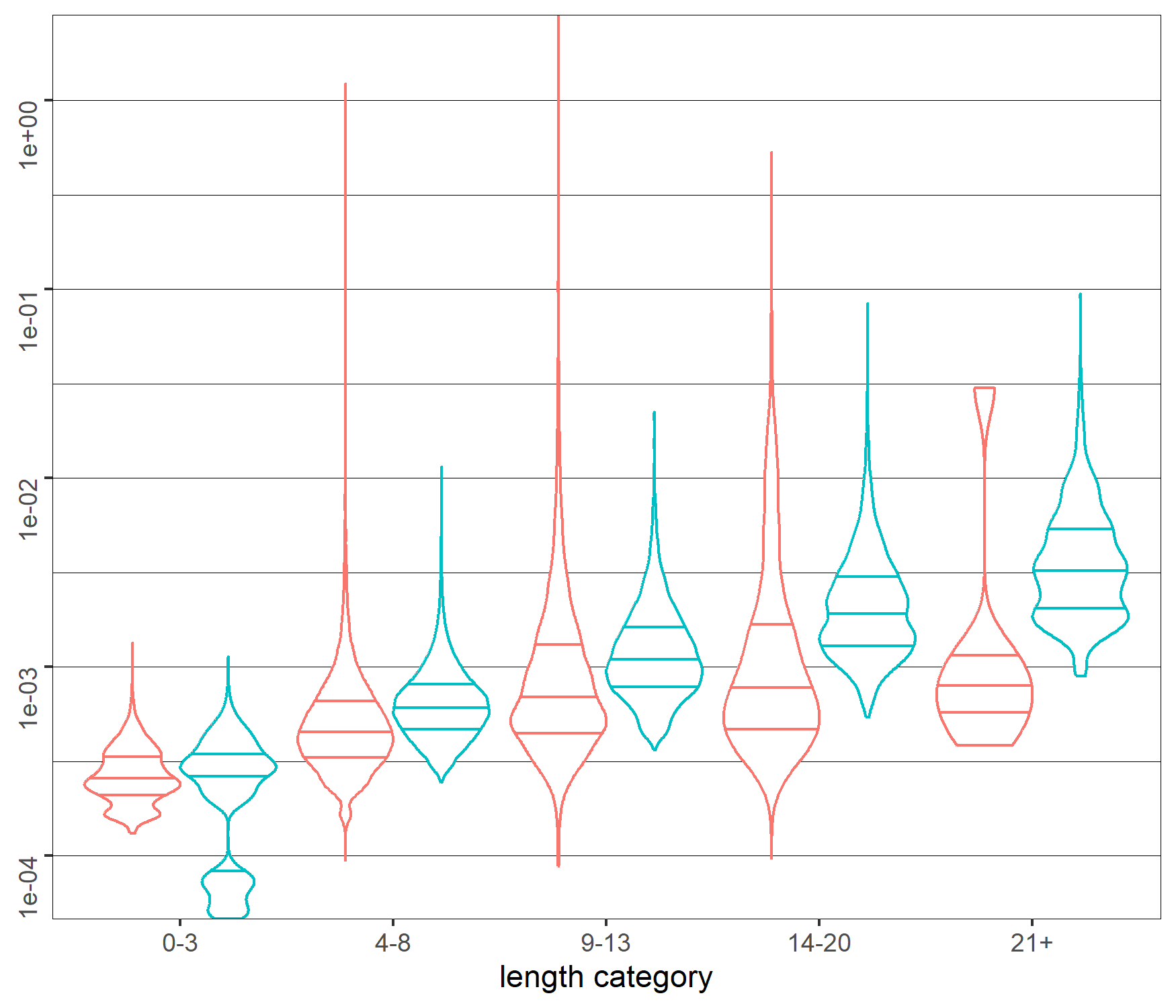}
&
\includegraphics[scale=1]{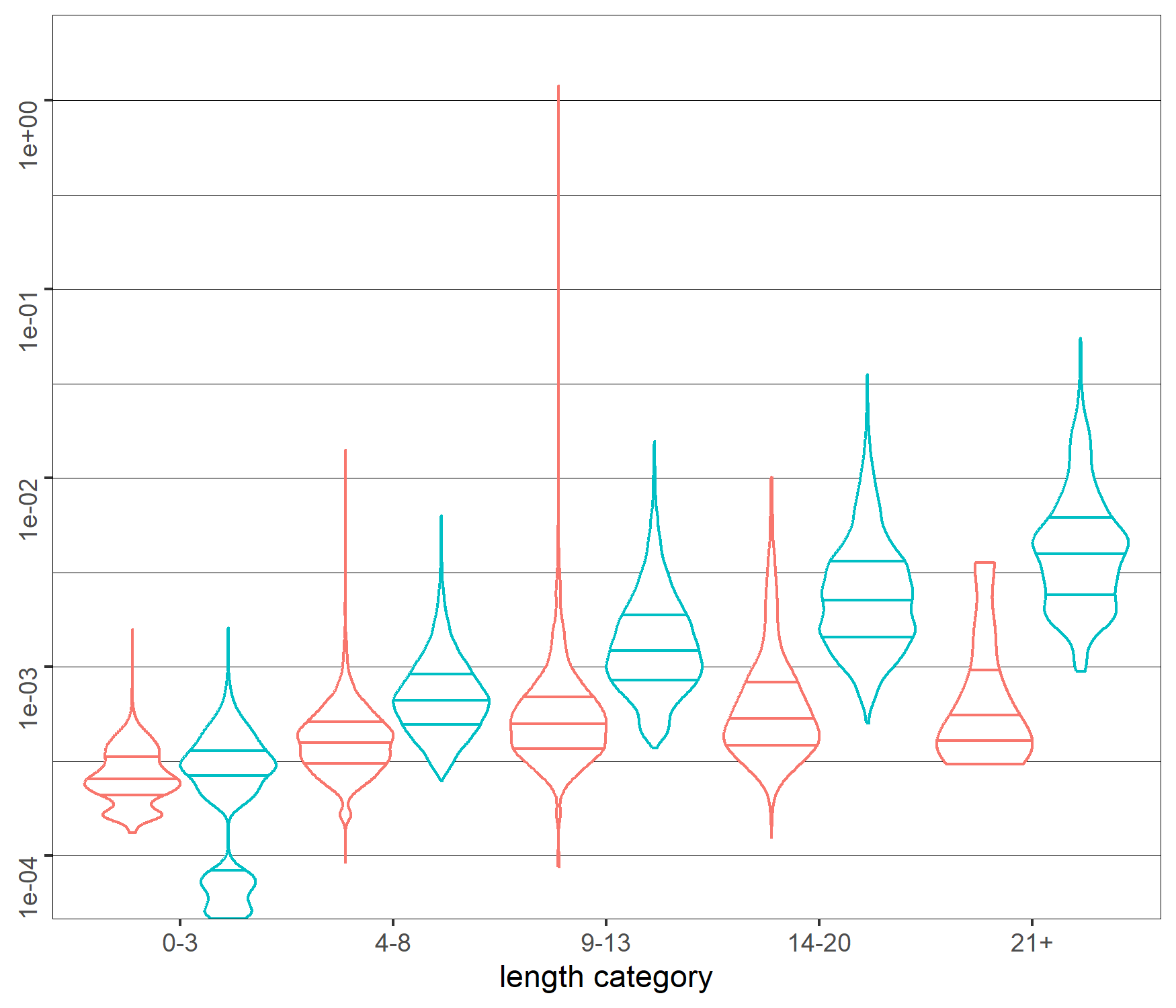}
\\
\hline 
\end{tabular}
}
    
    \caption{Effect of \textbf{POR} \& \textbf{LOC} on analysis time ($log_{10}$ scale)}
    \label{fig:time_experiments}
\end{figure}

When plotting the analysis time in the same manner as we plotted the graph size (on Fig.\ref{fig:graph_size_experiments}), we obtain the diagrams on Fig.\ref{fig:time_experiments}.

The use of DFS and a stopping criterion generally makes the analysis of $\macroOKVerdict$ multi-traces less time-consuming than their $\macroKOVerdict$ counterparts of similar length in the baseline case (i.e., we use neither \textbf{POR} nor \textbf{LOC}).
Indeed, while for the former, we do not need to compute the whole graph, it remains the case for the latter, and, without additional techniques, the graph size may remain similar. 
This can be observed on the top left of Fig.\ref{fig:time_experiments} (looking at the median and quartiles of each distribution).

The effect of \textbf{LOC} remains the same as in Fig.\ref{fig:graph_size_experiments} (especially without \textbf{POR}). However, it is not as spectacular because we do not measure graph size but analysis time and, while for the former, the cost of local analyses is hidden, it is not the case for the latter (the time for the global analysis includes the sum of the times for all local analyses). This also causes the times for $\macroOKVerdict$ multi-traces with \textbf{POR} and \textbf{LOC} to be slightly higher than that for $\macroOKVerdict$ multi-traces with \textbf{POR} but without \textbf{LOC}. Indeed, when analyzing correct behaviors, local analyses may only incur an overhead without necessarily reducing the number of vertices that need to be explored.
This overhead effect increases with the size of the multi-trace but can be mitigated via manipulating the look-ahead depth $\depthAnaLoc$ or using additional heuristics to condition the use of local analyses on specific lifelines.

From these experiments, a conservative conclusion would be that:
\begin{itemize}
    \item if we know that the multi-trace is correct and want to prove it, we should use \textbf{POR} without \textbf{LOC}
    \item if we know that the multi-trace exhibits a failure and want to prove it, we should use \textbf{POR} and \textbf{LOC}
    \item if we do not know whether the multi-trace is correct or not, we may run in two separate threads an analysis with \textbf{POR} and \textbf{LOC} and another one with only \textbf{POR} and get the verdict from the quickest method.
\end{itemize}

\section{Related works\label{sec:related}}

In the following, we discuss related works, primarily concerned with the application of Runtime Verification (RV) techniques to Distributed Systems (DS). 
We first delve into RV approaches that utilize reference specifications of interaction models. We examine both offline and online approaches, in particular scrutinizing their treatment of the challenge posed by partial observation. Subsequently, we shift our attention to RV studies for DS against references of temporal properties, thereby adapting the more prevalent online form of centralized RV to suit the distributed context. Finally, we explore verifying basic Message Sequence Charts (MSCs) against High-level MSCs as a design challenge, highlighting the complexity of membership assessment.

In \cite{Hierons14, passive_conformance_testing_of_service_choreographies, interaction_based_runtime_verification_for_systems_of_systems_integration_KrugerMM10, monitoring_networks_through_multiparty_session_types_BocchiCDHY17, InckiA18}, the focus is on RV verifying distributed executions against models of interaction. While \cite{interaction_based_runtime_verification_for_systems_of_systems_integration_KrugerMM10, Hierons14, InckiA18} concern MSC, \cite{passive_conformance_testing_of_service_choreographies} considers choreographic languages, \cite{monitoring_networks_through_multiparty_session_types_BocchiCDHY17} session types, and \cite{coping_with_bad_agent_interaction_protocols_when_monitoring_partially_observable_multiagent_systems_AnconaFFM18} trace expressions. We discuss these works according to their relevance to offline RV, which is the scope of our approach, and online RV.

The works~\cite{Hierons14,passive_conformance_testing_of_service_choreographies} propose offline RV that relies on synchronization hypotheses and on reconstructing a global trace by ordering events occurring at the distributed interfaces (by exploiting the observational power of testers~\cite{Hierons14} or timestamp information assuming clock synchronization~\cite{passive_conformance_testing_of_service_choreographies}). Our RV approach for multi-traces does not require synchronization prerequisites on logging. Thus, unlike previous works on offline RV, we can analyze executions of DS without the need for a synchronization hypothesis on the ending of local observations. 

For online RV, \cite{InckiA18} depends on a global component, specifically a network sniffer. Conversely, the works~\cite{interaction_based_runtime_verification_for_systems_of_systems_integration_KrugerMM10} and~\cite{monitoring_networks_through_multiparty_session_types_BocchiCDHY17} propose local RV against local projections of interactions. \cite{interaction_based_runtime_verification_for_systems_of_systems_integration_KrugerMM10} is focused on the instrumentation of online RV using aspect-oriented programming techniques to inject monitors into the code of subsystems without modifying it for a non-intrusive verification. 
\cite{monitoring_networks_through_multiparty_session_types_BocchiCDHY17} derives consistent monitors by considering interactions satisfying conditions that enforce intended global behaviors. Unlike these works, our approach involves processing collections of local logs against interactions while leveraging projection techniques to handle partial observation and to optimize performances.

\cite{coping_with_bad_agent_interaction_protocols_when_monitoring_partially_observable_multiagent_systems_AnconaFFM18} focuses on how distributed monitors can be adapted for partial observation. Yet, our notion of partial observation is distinct from that of \cite{coping_with_bad_agent_interaction_protocols_when_monitoring_partially_observable_multiagent_systems_AnconaFFM18} where messages are exchanged via channels which are associated to an observability likelihood. \cite{coping_with_bad_agent_interaction_protocols_when_monitoring_partially_observable_multiagent_systems_AnconaFFM18} uses trace expressions as specifications and proposes transformations that can adapt those expressions to partial observation by removing or making optional some identified unobservable events. Instead, we deal with partial observation from the perspective of analyzing truncated multi-traces due to synchronization issues.

Interaction models are well-suited for specifying DS because they specifically capture interaction patterns and causal communication relationships between subsystems.
Yet, as highlighted in the specification section of the taxonomy of RV tools given in \cite{a_taxonomy_for_classifying_runtime_verification_tools_FalconeKRT21}, interactions are scarcely utilized in RV. 
This is not the case for logical properties extensively employed as reference specifications. 
Linear Temporal Logic (LTL) is particularly prevalent in RV, with its semantics generally given as sets of traces. \cite{SenVAR04} expands a variant of LTL where formulas concern subsystems and their knowledge about other subsystems' local states. It considers a collection of decentralized observers that share information about the subsystem executions that affect the validity of the formula. In contrast, other works such as \cite{Falcone16,El-HokayemF17} express properties at the system level and transform them into decentralized observers using LTL formula rewriting. This approach eliminates the need for a centralized verifier to gather complete information on the system's execution. Considering interaction languages aligns with the direction of recent RV efforts, focusing on comprehensive and targeted verification of DS against global specifications.

Earlier works~\cite{realizability_and_verification_of_msc_graphs,pattern_matching_and_membership_for_hierarchical_message_sequence_charts} considered checking basic MSCs against HMSC (High-level MSC, which are graphs of MSCs) as an MSC membership problem. Roughly speaking, a basic MSC represents a multi-trace and may define a desired or undesired scenario thus resolving membership can aid in identifying design errors or avoiding redundancy. 
In those works, some MSCs are marked as accepting within an HMSC, and a basic MSC belongs to the semantics of the graph if and only if it fully covers accepting (finite) paths in the graph. Consequently, partially observed multi-traces cannot be assessed against HMSC, which does not answer the RV problem under observability limitations. These works demonstrated the NP-hardness of MSC membership by reducing it to the 1-in-3-SAT problem~\cite{1in3SAT78}. The RV problem we tackle in this paper aligns with this complexity, as we establish membership of prefixes of multi-traces through a reduction to the 3SAT problem. This computational cost prompts us to address optimization aspects in our work by proposing 1-unambiguous Partial Order Reduction (POR) and local analyses (LOC) techniques to reduce the size of the search space.

\section{Conclusion\label{sec:conclusion}}

This paper extends previous works~\cite{interaction_based_offline_runtime_verification_of_distributed_systems} in which we proposed an RV approach to monitor DS executions against interactions. 
This approach mitigates an intrinsic problem that occurs when observing DS executions: the inability to synchronize the ending of observations of local executions due to the absence of a global clock.
In our approach, DS executions are represented as tuples of local traces, called multi-traces, in which each local trace corresponds to the local observation of a subsystem execution.
As formal specifications, we consider interaction models in which each subsystem corresponds to a unique lifeline. Our RV process corresponds to re-enacting the observed behavior on the interaction model by executing execution steps derived from the operational semantics of interactions.
We address partial observation via a lifeline removal operator, which enables us to remove from the interaction the specification of all atomic events that occur on a subsystem that is no longer observed (when the corresponding local observation has been fully re-enacted).

We have proposed two techniques to improve the performances of our RV algorithm that both rely on local projections of the interaction model via lifeline removal.
The first one implements a form of partial order reduction by identifying one-unambiguous actions in the interaction model.
During the analysis, if one such action is encountered, we can select it as a unique successor of the current node, thus reducing the size of the search graph.
The second is to carry out local analyses at each exploration step, i.e. to check that the local traces are consistent with the interactions resulting from the projection onto the lifeline concerned. If not, then the current node can be discarded.
Extensive experimentation suggests that partial order reduction improves performances in all cases by reducing the size of the search space. As for local analyses, while their use can significantly improve performances for the analysis of erroneous multi-traces, their impact is more nuanced for correct multi-traces as they incur a potentially non-negligible time overhead without necessarily causing an important reduction in the size of the search space.

\bibliographystyle{elsarticle-harv} 
\bibliography{
biblio/biblio,
biblio/interaction_languages,
biblio/others,
biblio/own,
biblio/por,
biblio/standards
}

\appendix

\section{Proofs of Section \ref{sec:core}\label{app:section2}}

\begin{property*}[Prop.\ref{prop:projection_on_algebraic_operators}, Sec.\ref{sec:multitraces}]
For any $L \subseteq \mathcal{L}$ and any $t_1$ and $t_2$ in $\mathbb{A}(L)^*$:
\[
\begin{array}{rclcrcl}
\muProjection_L(t_1 \cup t_2) & = & \muProjection_L(t_1) \cup \muProjection_L(t_2)
&~~~~&
\muProjection_L(t_1 \opStrictSeq t_2) & = & \muProjection_L(t_1) \opStrictSeq \muProjection_L(t_2)\\
\muProjection_L(t_1 \opInterleaving t_2) & = & \muProjection_L(t_1) \opInterleaving \muProjection_L(t_2)
&~~~~&
\muProjection_L(t_1 \opWeakSeq t_2) & = & \muProjection_L(t_1) \opStrictSeq \muProjection_L(t_2)
\end{array}
\]
\end{property*}

\begin{proof}
The case of $\cup$ is trivial. Those of $\opStrictSeq$ and $\opInterleaving$ are immediate because $\muProjection_L(a.t) = a \multiAppend \muProjection_L(t)$.
For $\opWeakSeq$ let us reason by induction:
\[
\begin{array}{rcl}
\muProjection_L(t_1 \opWeakSeq \varepsilon)
&
=
&
\muProjection_L(t_1) = \muProjection_L(t_1) \opStrictSeq \varepsilon_L = \muProjection_L(t_1) \opStrictSeq \muProjection_L(\varepsilon)
\\
\muProjection_L(\varepsilon \opWeakSeq t_2)
&
=
&
\muProjection_L(t_2) = \varepsilon_L \opStrictSeq \muProjection_L(t_2) = \muProjection_L(\varepsilon) \opStrictSeq \muProjection_L(t_2)
\\
\muProjection_L(a_1.t_1 \opWeakSeq a_2.t_2)
&
=
&
\muProjection_L(a_1.(t_1 \opWeakSeq a_2.t_2))~~~~~~~~~~~~~~~~
\text{ if $(a_1.t_1) \conflictPredicate \theta(a_2)$}
\\
&
=
&
a_1 \multiAppend \muProjection_L(t_1 \opWeakSeq a_2.t_2)
\\
&
=
&
a_1 \multiAppend (\muProjection_L(t_1) \opStrictSeq \muProjection_L(a_2.t_2))
\\
&
=
&
(a_1 \multiAppend \muProjection_L(t_1)) \opStrictSeq \muProjection_L(a_2.t_2)
\\
&
=
&
\muProjection_L(a_1.t_1) \opStrictSeq \muProjection_L(a_2.t_2)
\end{array}
\]
For the case $\neg (a_1.t_1) \conflictPredicate \theta(a_2)$, it suffices to observe that this implies that\\\noindent$\muProjection_L(a_1.t_1)_{|\theta(a_2)} = \varepsilon$ and hence $a_2 \multiAppend \muProjection_L(a_1.t_1) = \muProjection_L(a_1.t_1) \multiAppend a_2$ and thus:
\[
\begin{array}{rcl}
\muProjection_L(a_2.(a_1.t_1 \opWeakSeq t_2))
&
=
&
a_2 \multiAppend \muProjection_L(a_1.t_1 \opWeakSeq \muProjection_L(t_2))
\\
&
=
&
a_2 \multiAppend (\muProjection_L(a_1.t_1) \opStrictSeq \muProjection_L(t_2))
\\
&
=
&
(a_2 \multiAppend \muProjection_L(a_1.t_1)) \opStrictSeq \muProjection_L(t_2)
\\
&
=
&
(\muProjection_L(a_1.t_1) \multiAppend a_2) \opStrictSeq \muProjection_L(t_2)
\\
&
=
&
\muProjection_L(a_1.t_1) \opStrictSeq ( a_2 \multiAppend \muProjection_L(t_2) )
\\
&
=
&
\muProjection_L(a_1.t_1) \opStrictSeq \muProjection_L(a_2.t_2)
\\
\end{array}
\]
Finally:
\[
\begin{array}{ll}
& 
\muProjection_L(a_1.t_1 \opWeakSeq a_2.t_2)
\\
=
&
\muProjection_L(a_1.(t_1 \opWeakSeq a_2.t_2)) \cup \muProjection_L(a_2.(a_1.t_1 \opWeakSeq t_2))
\\
=
&
(\muProjection_L(a_1.t_1) \opStrictSeq \muProjection_L(a_2.t_2)) \cup (\muProjection_L(a_1.t_1) \opStrictSeq \muProjection_L(a_2.t_2))
\\
=
&
\muProjection_L(a_1.t_1) \opStrictSeq \muProjection_L(a_2.t_2)
\end{array}
\]
\end{proof}

\begin{property*}[Prop.\ref{prop:multi_trace_elimination_preserves_sched}, Sec.\ref{sec:lf_tr_removal}]
For any $L \in \mathcal{L}$, any $H \subseteq L$, any $\mu_1$ and $\mu_2$ in $\mathbb{M}(L)$, for any $\diamond \in \{\cup,\opStrictSeq,\opInterleaving\}$, we have:
\[
\lifelineElim_H(\mu_1 \diamond \mu_2) = \lifelineElim_H(\mu_1) \diamond \lifelineElim_H(\mu_2)
\]
\end{property*}

\begin{proof}
\noindent For $\diamond = \cup$, $\lifelineElim_\ell(\mu_1 \cup \mu_2) = \lifelineElim_\ell(\{\mu_1,\mu_2\})$ by definition of the $\cup$ operator between multi-traces
and $\lifelineElim_\ell(\mu_1) \cup \lifelineElim_\ell(\mu_2) = \{ \lifelineElim_\ell(\mu_1) , \lifelineElim_\ell(\mu_2) \}$ by definition of the $\cup$ operator between multi-traces.

\noindent For $\diamond = \opStrictSeq$, let us reason by induction on $\mu_2$:
\begin{itemize}
    \item $\lifelineElim_\ell(\mu_1 \opStrictSeq \varepsilon_L) = \lifelineElim_\ell(\mu_1) = \lifelineElim_\ell(\mu_1) \opStrictSeq  \varepsilon_{L'} = \lifelineElim_\ell(\mu_1) \opStrictSeq \lifelineElim_\ell(\varepsilon_L)$
    \item if $\mu_2 = a \multiAppend \mu_2'$ then 
    \[
    \begin{array}{lclr}
    \lifelineElim_\ell(\mu_1 \opStrictSeq \mu_2) 
    &
    =
    &
    \lifelineElim_\ell((\mu_1 \multiAppend a) \opStrictSeq \mu_2')
    &
    ~~~~\text{by definition of } \opStrictSeq
    \\
    &
    =
    &
    \lifelineElim_\ell((\mu_1 \multiAppend a)) \opStrictSeq \lifelineElim_\ell(\mu_2')
    &
    ~~~~\text{by induction}
    \end{array}
    \]
    Then:
    \begin{itemize}
        \item if $\theta(a) = h$ we have:
        \[
        \begin{array}{lclr}
        \lifelineElim_\ell(\mu_1 \opStrictSeq \mu_2) 
        &
        =
        &
        \lifelineElim_\ell(\mu_1) \opStrictSeq \lifelineElim_\ell(\mu_2')
        &
        ~~~~\text{by Prop.\ref{lem:elimination_append}}
        \\
        &
        =
        &
        \lifelineElim_\ell(\mu_1) \opStrictSeq \lifelineElim_\ell(a \multiAppend\mu_2')
        &
        ~~~~\text{by Prop.\ref{lem:elimination_append}}
        \\
        &
        =
        &
        \lifelineElim_\ell(\mu_1) \opStrictSeq \lifelineElim_\ell(\mu_2)
        &
        \end{array}
        \]
        \item if $\theta(a) \neq h$ we have:
        \[
        \begin{array}{lclr}
        \lifelineElim_\ell(\mu_1 \opStrictSeq \mu_2) 
        &
        =
        &
        (\lifelineElim_\ell(\mu_1) \multiAppend a) \opStrictSeq \lifelineElim_\ell(\mu_2')
        &
        ~~~~\text{by Prop.\ref{lem:elimination_append}}
        \\
        &
        =
        &
        \lifelineElim_\ell(\mu_1) \opStrictSeq (a \multiAppend\lifelineElim_\ell(\mu_2'))
        &
        ~~~~\text{by definition of }\opStrictSeq
        \\
        &
        =
        &
        \lifelineElim_\ell(\mu_1) \opStrictSeq \lifelineElim_\ell(\mu_2)
        &
        ~~~~\text{by Prop.\ref{lem:elimination_append}}
        \end{array}
        \]
    \end{itemize}
\end{itemize}
\noindent For $\diamond = \opInterleaving$ we can reason similarly, using induction on both $\mu_1$ and $\mu_2$.
\end{proof}

\begin{theorem*}[Th.\ref{th:semantics_of_lifeline_removal_in_interactions}, Sec.\ref{sec:semantics}]
For any $L \subseteq \mathcal{L}$, any $H \subseteq L$ and any $i \in \mathbb{I}(L)$, we have:
\[
\sigma_{L\setminus H}(\lifelineElim_H(i)) = \lifelineElim_H(\sigma_L(i))
\]

\end{theorem*}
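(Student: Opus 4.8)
The plan is to proceed by structural induction on the interaction term $i$, following the grammar of Def.\ref{def:interaction_language}. The argument rests on three ingredients already available: the fact that $\lifelineElim_H$ commutes with every term constructor (Def.\ref{def:lifeline_removal_operator_on_interactions}), the algebraic form of the denotational semantics (Def.\ref{def:interaction_semantics}), and Prop.\ref{prop:multi_trace_elimination_preserves_sched} together with its extension to sets of multi-traces and to the Kleene closures $^{\opStrictSeq *}$ and $^{\opInterleaving *}$ mentioned right after it.

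For the base cases, when $i = \varnothing$ both sides reduce directly to $\{\varepsilon_{L\setminus H}\}$ via Def.\ref{def:interaction_semantics}. When $i = a$ is an action, I would split on whether $\theta(a) \in H$: if so, $\lifelineElim_H(a) = \varnothing$ and the right-hand side collapses to $\{\varepsilon_{L \setminus H}\}$ by Prop.\ref{lem:elimination_append} (the case $\theta(a)\in H$); if not, $\lifelineElim_H(a) = a$ and both sides equal $\{a \multiAppend \varepsilon_{L\setminus H}\}$, again by Prop.\ref{lem:elimination_append} (the case $\theta(a)\notin H$).

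For the inductive step on a binary operator $f \in \{strict, seq, par, alt\}$, I would chain the equalities: first $\sigma_{L\setminus H}(\lifelineElim_H(f(i_1,i_2))) = \sigma_{L\setminus H}(f(\lifelineElim_H(i_1),\lifelineElim_H(i_2)))$ by Def.\ref{def:lifeline_removal_operator_on_interactions}; then expand with Def.\ref{def:interaction_semantics} into an algebraic expression in $\sigma_{L\setminus H}(\lifelineElim_H(i_1))$ and $\sigma_{L\setminus H}(\lifelineElim_H(i_2))$ (recalling that both $strict$ and $seq$ are interpreted by $\opStrictSeq$ on multi-traces, $par$ by $\opInterleaving$, $alt$ by $\cup$); apply the induction hypothesis to each argument; and finally push $\lifelineElim_H$ back out through the corresponding operator using the set-lifted form of Prop.\ref{prop:multi_trace_elimination_preserves_sched}. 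The loop case $i = loop_k(i_1)$ has the same shape, using $\lifelineElim_H(loop_k(i_1)) = loop_k(\lifelineElim_H(i_1))$ and the fact that $\lifelineElim_H$ commutes with the relevant Kleene closure; the latter itself follows from Prop.\ref{prop:multi_trace_elimination_preserves_sched} by a secondary induction on the iteration count $j$ (with base $\lifelineElim_H(\{\varepsilon_L\}) = \{\varepsilon_{L\setminus H}\}$), together with the fact that taking the image under $\lifelineElim_H$ commutes with arbitrary unions.

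I do not expect a genuinely hard step: this is a routine homomorphism-transport induction. The only points requiring mild care are making the set-level and Kleene-closure-level versions of Prop.\ref{prop:multi_trace_elimination_preserves_sched} explicit (they are asserted but not spelled out in the body), and checking the degenerate boundary $H = L$, where $L \setminus H = \emptyset$ and $\mathbb{M}(\emptyset)$ is a singleton, so that every semantics on the reduced signature collapses to $\{\varepsilon_\emptyset\}$, consistently with all the reductions above.
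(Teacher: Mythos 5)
Your proposal is correct and follows essentially the same route as the paper's proof in \ref{app:section2}: a structural induction on $i$ with the same base-case split on $\theta(a)\in H$, the same chaining through Def.\ref{def:lifeline_removal_operator_on_interactions}, Def.\ref{def:interaction_semantics}, the induction hypothesis, and Prop.\ref{prop:multi_trace_elimination_preserves_sched} (lifted to sets and Kleene closures) for the operator and loop cases. The extra remarks on the set-level/Kleene-level lifting and the degenerate boundary $H=L$ are sound but not needed beyond what the paper already asserts.
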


\begin{proof}
Let us reason by induction on the structure of interaction terms:\\
\noindent$\bullet$ 
$
\sigma_{L\setminus H}(\lifelineElim_H(\varnothing)) 
= 
\sigma_{L\setminus H}(\varnothing) 
= 
\{\varepsilon_{L\setminus H}\}
= 
\lifelineElim_H(\{\varepsilon_L\}) 
= 
\lifelineElim_H(\sigma_L(\varnothing))
$
\\
\noindent$\bullet$ for any $a \in \mathbb{A}(L)$ we have:
\[
\begin{array}{lrcl}
\text{if }\theta(a) \in H, 
&
\sigma_{L \setminus H}(\lifelineElim_H(a))
&
=
&
\sigma_{L \setminus H}(\varnothing)
=
\{ \varepsilon_{L \setminus H} \}
=
\lifelineElim_H( \{ \varepsilon_L \} )
\\
&
&
=
&
\lifelineElim_H( \{ a \multiAppend \varepsilon_L \} )
=
\lifelineElim_H(\sigma_L(a))
\\
\text{if }\theta(a) \not\in H, 
&
\sigma_{L \setminus H}(\lifelineElim_H(a))
&
=
&
\sigma_{L \setminus H}(a)
=
\{ a \multiAppend \varepsilon_{L \setminus H} \}
\\
&
&
=
&
\lifelineElim_H( \{ a \multiAppend \varepsilon_{L} \} )
=
\lifelineElim_H(\sigma_L(a))
\end{array}
\]
\noindent$\bullet$ with $(f,\diamond) \in \{(strict,\opStrictSeq),~(seq,\opStrictSeq),~(par,\opInterleaving),~(alt,\cup)\}$, for any $i_1$ and $i_2$:
\[
\begin{array}{lclr}
\sigma_{L \setminus H}(\lifelineElim_H(f(i_1,i_2)))
&
=
&
\sigma_{L \setminus H}(f(\lifelineElim_H(i_1),\lifelineElim_H(i_2)))
&
\text{ Def.\ref{def:lifeline_removal_operator_on_interactions}}
\\
&
=
&
\sigma_{L \setminus H}(\lifelineElim_H(i_1)) \diamond \sigma_{L \setminus H}(\lifelineElim_H(i_2))
&
\text{ Def.\ref{def:interaction_semantics}}
\\
&
=
&
\lifelineElim_H(\sigma_L(i_1)) \diamond \lifelineElim_H(\sigma_L(i_2))
&
\text{induction}
\\
&
=
&
\lifelineElim_H(~\sigma_L(i_1) \diamond \sigma_L(i_2)~)
&
\text{ Prop.\ref{prop:multi_trace_elimination_preserves_sched}}
\\
&
=
&
\lifelineElim_H(~\sigma_L(f(i_1,i_2))~)
&
\text{ Def.\ref{def:interaction_semantics}}
\end{array}
\]
\noindent$\bullet$ for any interaction $i$ and any $(k,\diamond) \in \{(S,\opStrictSeq),~(W,\opStrictSeq),~(P,\opInterleaving)\}$:
\[
\begin{array}{lclr}
\sigma_{L \setminus H}(\lifelineElim_H(loop_k(i)))
&
=
&
\sigma_{L \setminus H}(loop_k(\lifelineElim_\ell(i)))
&
\text{ Def.\ref{def:lifeline_removal_operator_on_interactions}}
\\
&
=
&
(~\sigma_{L \setminus H}(\lifelineElim_H(i))~)^{\diamond *}
&
\text{ Def.\ref{def:interaction_semantics}}
\\
&
=
&
(~\lifelineElim_H(\sigma_L(i))~)^{\diamond *}
&
~\text{ induction}
\\
&
=
&
\lifelineElim_H(~\sigma_L(i)^{\diamond *}~)
&
\text{ Prop.\ref{prop:multi_trace_elimination_preserves_sched}}
\\
&
=
&
\lifelineElim_H( \sigma_L(loop_k(i)) )
&
\text{ Def.\ref{def:interaction_semantics}}
\end{array}
\]
\end{proof}

\section{Operational semantics and lifeline removal\label{app:opsem_lfrem}}

Our structural operational semantics $\sigma : \mathbb{I}(L) \rightarrow \mathcal{P}(\mathbb{A}(L)^*)$ defined on global traces rely on the two predicates: 
\begin{itemize}
    \item the termination predicate $i \downarrow$ indicating $i$ accepts the empty trace 
    \item and the execution relation $i \xrightarrow{a@p} i'$ indicating that the action $a$ occurring at position $p$ in $i$ is executable from $i$ and that traces $a.t$ such that $t$ is accepted by $i'$ are accepted by $i$.
\end{itemize}

$\sigma : \mathbb{I}(L) \rightarrow \mathcal{P}(\mathbb{A}(L)^*)$ is defined by:

{
\centering
\begin{minipage}{4cm}
\begin{prooftree}
\AxiomC{$i \downarrow$}
\UnaryInfC{$\varepsilon \in \sigma(i)$}
\end{prooftree}
\end{minipage}
\begin{minipage}{4cm}
\begin{prooftree}
\AxiomC{$t \in \sigma(i')$}
\AxiomC{$i \xrightarrow{a@p} i'$}
\BinaryInfC{$a.t \in \sigma(i)$}
\end{prooftree}
\end{minipage}\\
}

\noindent
with $i$ and $i'$ interactions in $\mathbb{I}(L)$, $t$ a global trace, $a$ an action in $\mathbb{A}(L)$ and $p$ a position in $\{1,2\}^*$.

\textbf{Notation.}
In the following, we will generically consider an interaction $i$, a set $H$ of lifelines verifying $\emptyset \subsetneq H \subsetneq L$, an action $a$ occurring in $i$ at position $p$ and verifying $\theta(a) \not\in H$. Moreover, we will denote $i_h$ the interaction $\lifelineElim_{H}(i)$.

By Prop.\ref{prop:positions_and_lifeline_removal}, $i$ and $i_h$ have the same term structure except for some leaves corresponding to actions that are transformed to $\varnothing$. In particular, all operators of arity 1 or 2 ($alt$, $seq$, $strict$, $par$, $loop_k$) occur at the same positions in both terms.

\subsection{Termination}

\begin{definition}[Termination]\label{def:termination}
The predicate $\downarrow \subset \mathbb{I}(L)$ is such that for any $i_1$ and $i_2$ from $\mathbb{I}(L)$, any $f \in \{strict,seq,par\}$ and any $k \in \{S,W,P\}$ we have:

{
\centering
\begin{minipage}{1.60cm}
\begin{prooftree}
\AxiomC{\phantom{$\top$}}
\UnaryInfC{$\varnothing \downarrow$}
\end{prooftree}
\end{minipage}
\begin{minipage}{2.50cm}
\begin{prooftree}
\AxiomC{$i_1 \downarrow$}
\UnaryInfC{$alt(i_1,i_2) \downarrow$}
\end{prooftree}
\end{minipage}
\begin{minipage}{2.50cm}
\begin{prooftree}
\AxiomC{$i_2 \downarrow$}
\UnaryInfC{$alt(i_1,i_2) \downarrow$}
\end{prooftree}
\end{minipage}
\begin{minipage}{2.50cm}
\begin{prooftree}
\AxiomC{$i_1 \downarrow$}
\AxiomC{$i_2 \downarrow$}
\BinaryInfC{$f(i_1,i_2) \downarrow$}
\end{prooftree}
\end{minipage}
\begin{minipage}{2.50cm}
\begin{prooftree}
\AxiomC{\phantom{$\top$}}
\UnaryInfC{$loop_k(i_1) \downarrow$}
\end{prooftree}
\end{minipage}\\
}
\end{definition}

For any $i \in \mathbb{I}(L)$, $i \downarrow$ iff $ \varepsilon$ is an accepted global trace of $i$.

\begin{property}
\label{prop:termination_and_lifeline_removal}
$i \downarrow \Rightarrow i_h \downarrow$ 
\end{property}

\begin{proof}
By induction of the term structure of $i$.
If $i = \varnothing$ then $i_h = \varnothing$ and the property holds.
We cannot have $i = a \in \mathbb{A}(L)$ otherwise the premise $i \downarrow$ does not hold.
For any interactions $i_1$ and $i_2$, if $i$ is of the form $f(i_1,i_2)$ with $f \in \{strict,seq,par\}$ a scheduling operator then, by Def.\ref{def:termination}, $i \downarrow$ iff both $i_1 \downarrow$ and $i_2 \downarrow$. 
By induction, we have $i_1 \downarrow\;\Rightarrow i_{1h} \downarrow$ and $i_2 \downarrow\;\Rightarrow i_{2h} \downarrow$. Hence both $i_{1h} \downarrow$ and $i_{2h} \downarrow$.
On the other hand, by Def.\ref{def:lifeline_removal_operator_on_interactions}, $i_h = f(i_{1h},i_{2h})$. Hence, by Def.\ref{def:termination}, we have $i_h \downarrow$.
If $i = alt(i_1,i_2)$ we reason similarly depending on whether we have $i_1 \downarrow$ or $i_2 \downarrow$.
If $i = loop_k(i_1)$ with $k \in \{S,W,P\}$, we always have $i \downarrow$ as well as $i_h \downarrow$ because by Def.\ref{def:lifeline_removal_operator_on_interactions}, $i_h = loop_k(i_{1h})$.
\end{proof}

\subsection{Collision and Pruning}

The execution relation $\rightarrow$ relies on the definition of two other inductive predicates: collision and pruning.
The collision predicate $\collidesLf$ is such that $i \collidesLf l$ signifies that all traces of $i$ have at least an action occurring on the lifeline $l$. 

\begin{definition}[Collision]\label{def:collision}
The predicate $\collidesLf \subset \mathbb{I}(L) \times L$ is such that for $i_1$ and $i_2$ in $\mathbb{I}(L)$, $l\in L$, $a \in \mathbb{A}(L)$ and $f \in \{strict,seq,par\}$ we have:

{
\centering
\begin{minipage}{2cm}
\begin{prooftree}
\AxiomC{$\theta(a) = l$}
\UnaryInfC{$a \collidesLf l$}
\end{prooftree}
\end{minipage}
\vspace*{.1cm}
\begin{minipage}{3.5cm}
\begin{prooftree}
\AxiomC{$i_1 \collidesLf l$}
\AxiomC{$i_2 \collidesLf l$}
\BinaryInfC{$alt(i_1,i_2) \collidesLf l$}
\end{prooftree}
\end{minipage}
\begin{minipage}{2.75cm}
\begin{prooftree}
\AxiomC{$i_1 \collidesLf l$}
\UnaryInfC{$f(i_1,i_2) \collidesLf l$}
\end{prooftree}
\end{minipage}
\begin{minipage}{2.75cm}
\begin{prooftree}
\AxiomC{$i_2 \collidesLf l$}
\UnaryInfC{$f(i_1,i_2) \collidesLf l$}
\end{prooftree}
\end{minipage}
}
\end{definition}

\begin{property}
\label{prop:collision_and_lifeline_removal_1}
For all $l \in L \setminus H$ and for all $i \in \mathbb{I}(L)$ we have:
$(i \collidesLf l) \Rightarrow (i_h \collidesLf l)$
\end{property}

\begin{proof}
By induction of the term structure of $i$.
If $i = \varnothing$ then the premise does not hold.
If $i = a \in \mathbb{A}(L)$ then in order to have $i \collidesLf l$ we must have $\theta(a) = l$ which implies that $i_h = i$ because $l \not\in H$.

Given $i_1$ and $i_2$ two interactions, if $i=f(i_1,i_2)$ with $f \in \{strict,seq,par\}$ a scheduling operator then $i \collidesLf l$ iff either (or both of) $i_1 \collidesLf l$ or $i_2 \collidesLf l$. Let us suppose the former, all other things being equivalent.
By induction, the fact that $i_1 \collidesLf l$ implies that $i_{1h} \collidesLf$. On the other hand, $i_h = f(i_{1h},i_{2h})$ by Def.\ref{def:lifeline_removal_operator_on_interactions}.
Hence, by Def.\ref{def:collision}, we also have $i_h \collidesLf l$.
For $i=alt(i_1,i_2)$, we reason similarly (except that it is a logic AND instead of OR).
If $i = loop_k(i_1)$, with $k \in \{S,W,P\}$, the premise does not hold.
\end{proof}

\begin{property}
\label{prop:collision_and_lifeline_removal_2}
For all $l \in L \setminus H$ and for all $i \in \mathbb{I}(L)$ we have:
$(i_h \collidesLf l) \Rightarrow (i \collidesLf l)$
\end{property}

\begin{proof}
By induction of the term structure of $i$.
If $i = \varnothing$ then $i_h = \varnothing$ the premise does not hold.
If $i = a \in \mathbb{A}(L)$ then:
\begin{itemize}
    \item if $\theta(a) \in H$, we have $i_h = \varnothing$ and the premise does not hold
    \item otherwise, we have $i_h = i$ and the conclusion trivially holds
\end{itemize}

Given $i_1$ and $i_2$ two interactions, if $i=f(i_1,i_2)$ with $f \in \{strict,seq,par\}$ a scheduling operator then $i_h = f(i_{1h},i_{2h})$ and $i_h \collidesLf l$ iff either (or both of) $i_{1h} \collidesLf l$ or $i_{2h} \collidesLf l$. Let us suppose the former, all other things being equivalent.
By induction, the fact that $i_{1h} \collidesLf l$ implies that $i_1 \collidesLf$ and therefore that $i \collidesLf l$.
For $i=alt(i_1,i_2)$, we reason similarly (except that it is a logic AND instead of OR).
If $i = loop_k(i_1)$, with $k \in \{S,W,P\}$, we have $i_h = loop_k(i_{1h})$ and the premise does not hold.
\end{proof}

Pruning $i \isPruneOf{l} i'$ is defined so that $i'$ characterizes the maximum subset of accepted traces by $i$ that contains no trace with actions on $l$.

\begin{definition}[Pruning]\label{def:pruning_relation}
The pruning relation $\isPruneBase \; \subset \mathbb{I}(L) \times L \times \mathbb{I}(L)$ is s.t. for any $l \in L$, any $f \in \{strict,seq,par\}$ and any $k \in \{S,W,P\}$:

{
\centering
\begin{minipage}{2.5cm}
\begin{prooftree}
\AxiomC{\vphantom{$\isPruneOf{l}$}}
\UnaryInfC{$\varnothing \isPruneOf{l} \varnothing$}
\end{prooftree}
\end{minipage}
\begin{minipage}{2.5cm}
\begin{prooftree}
\AxiomC{\vphantom{$\theta(a) \neq l$} \vphantom{$\isPruneOf{l}$}}
\RightLabel{$\theta(a) \neq l$}
\UnaryInfC{$a \isPruneOf{l} a$}
\end{prooftree}
\end{minipage}
\begin{minipage}{4cm}
\begin{prooftree}
\AxiomC{$i_1 \isPruneOf{l} i_1'$}
\AxiomC{$i_2 \isPruneOf{l} i_2'$}
\BinaryInfC{$f(i_1,i_2) \isPruneOf{l} f(i_1',i_2')$}
\end{prooftree}
\end{minipage}

\vspace*{.1cm}

\begin{minipage}{3.85cm}
\begin{prooftree}
\AxiomC{$i_1 \isPruneOf{l} i_1'$}
\AxiomC{$i_2 \isPruneOf{l} i_2'$}
\BinaryInfC{$alt(i_1,i_2) \isPruneOf{l} alt(i_1',i_2')$}
\end{prooftree}
\end{minipage}
\begin{minipage}{3.85cm}
\begin{prooftree}
\AxiomC{$i_1 \isPruneOf{l} i_1'$}
\RightLabel{$i_2 \collidesLf l$}
\UnaryInfC{$alt(i_1,i_2) \isPruneOf{l} i_1'$}
\end{prooftree}
\end{minipage}
\begin{minipage}{3.85cm}
\begin{prooftree}
\AxiomC{$i_2 \isPruneOf{l} i_2'$}
\RightLabel{$i_1 \collidesLf l$}
\UnaryInfC{$alt(i_1,i_2) \isPruneOf{l} i_2'$}
\end{prooftree}
\end{minipage}

\vspace*{.1cm}

\begin{minipage}{5cm}
\begin{prooftree}
\AxiomC{$i_1 \isPruneOf{l} i_1'$}
\UnaryInfC{$loop_k(i_1) \isPruneOf{l} loop_k(i_1')$}
\end{prooftree}
\end{minipage}
\begin{minipage}{5cm}
\begin{prooftree}
\AxiomC{$\phantom{\isPruneOf{l}}$}
\RightLabel{$i_1 \collidesLf l$}
\UnaryInfC{$loop_k(i_1) \isPruneOf{l} \varnothing$}
\end{prooftree}
\end{minipage}
\\
}
\end{definition}

\begin{property}
\label{prop:pruning_and_lifeline_removal}
Let $l \not\in H$.
$i \isPruneOf{l}  i' \Rightarrow i_h \isPruneOf{l}  i'_h$. 
\end{property}

\begin{proof}
By induction of the term structure of $i$. 
If $i = \varnothing$ we trivially have $i_h = i = \varnothing$ and $\varnothing \isPruneOf{l} \varnothing$ always holds.

If $i = a \in \mathbb{I}(L)$, the premise can hold iff $\theta(a) \neq l$ and we then have $i \isPruneOf{l} i$. Then:
\begin{itemize}
    \item if $\theta(a) \in H$, then $i_h = \varnothing$, $i'_h = \varnothing$ and the property trivially holds
    \item if $\theta(a) \not\in H$, then $i_h = i$ and $i'_h = i'$ and the property trivially holds
\end{itemize}

Given two interactions $i_1$ and $i_2$, if $i$ is of the form $f(i_1,i_2)$ with $f \in \{strict,seq,par\}$ a scheduling operator, then the premise holds iff there exist a $i'_1$ and $i'_2$ s.t.~$i_1 \isPruneOf{l}  i'_1$, $i_2 \isPruneOf{l}  i'_2$ and $i' = f(i'_1,i'_2)$.
By induction, we have $i_{1h} \isPruneOf{l}  i'_{1h}$ and $i_{2h} \isPruneOf{l}  i'_{2h}$ which then implies as per Def.\ref{def:pruning_relation} and Def.\ref{def:lifeline_removal_operator_on_interactions}, that $i_h = f(i_{1h},i_{2h}) \isPruneOf{l} f(i'_{1h},i'_{2h}) = i'_h$.

A trivial property in \cite{denotational_and_operational_semantics_for_interaction_languages_application_to_trace_analysis} states that we have $i \isPruneOf{l} i'$ iff $\neg(i \collidesLf l)$.
Hence, for the case of $i=alt(i_1,i_2)$ we can distinguish between four cases:
\begin{itemize}
    \item either both $i_1 \collidesLf l$ and $i_2 \collidesLf l$ which contradicts the premise
    \item or both $\neg(i_1 \collidesLf l)$ and $\neg(i_2 \collidesLf l)$. 
    This then implies that we have $i'_1$ and $i'_2$ s.t.~$i_1 \isPruneOf{l} i'_1$, $i_2 \isPruneOf{l} i'_2$ 
    and $i \isPruneOf{l} alt(i'_1,i'_2)$. By induction, we have $i_{1h} \isPruneOf{l} i'_{1h}$ 
    and $i_{2h} \isPruneOf{l} i'_{2h}$, which implies via Def.\ref{def:pruning_relation} 
    and Def.\ref{def:lifeline_removal_operator_on_interactions}, that $i_h = alt(i_{1h},i_{2h}) \isPruneOf{l} alt(i'_{1h},i'_{2h}) = i'_h$
    \item or $i_1 \collidesLf l$ and $\neg(i_2 \collidesLf l)$. This implies that $i_2 \isPruneOf{l} i'_2$ and $i = alt(i_1,i_2) \isPruneOf{l} i'_2$. By induction, we have $i_{2h} \isPruneOf{l} i'_{2h}$.
    By Prop.\ref{prop:collision_and_lifeline_removal_1} and Prop.\ref{prop:collision_and_lifeline_removal_2}, we have $i_{1h} \collidesLf l$ and $\neg(i_{2h} \collidesLf l)$ hence, by Def.\ref{def:pruning_relation} and Def.\ref{def:lifeline_removal_operator_on_interactions}, $i_h = alt(i_{1h},i_{2h}) \isPruneOf{l} i'_{2h} = i'_h$
    \item the case $\neg(i_1 \collidesLf l)$ and $i_2 \collidesLf l$ is symmetric.
\end{itemize}

For the case $i = loop_k(i_1)$, we have two cases:
\begin{itemize}
    \item if $i_1 \collidesLf l$ then $loop_k(i_1) \isPruneOf{l} \varnothing$. By Prop.\ref{prop:collision_and_lifeline_removal_1}, we have $i_{1h} \collidesLf l$ which implies that $loop_k(i_{1h}) \isPruneOf{l} \varnothing$. Then, by Def.\ref{def:lifeline_removal_operator_on_interactions}, we have $i_h = loop_k(i_{1h})$ and $\varnothing_h = \varnothing$ hence the property holds
    \item or $\neg(i_1 \collidesLf l)$ which implies that there exists $i'_1$ s.t.~$i_1 \isPruneOf{l} i'_1$ and we have $i \isPruneOf{l} loop_k(i'_1)$. By induction, $i_1 \isPruneOf{l} i'_1$ implies $i_{1h} \isPruneOf{l} i'_{1h}$ which in turn implies $loop_k(i_{1h}) \isPruneOf{l} loop_k(i'_{1h})$. Then, by Def.\ref{def:lifeline_removal_operator_on_interactions}, $i_h = loop_k(i_{1h})$, $i'_h = loop_k(i'_{1h})$ and the property holds.
\end{itemize}
\end{proof}

\subsection{Execution}

The execution relation (Def.\ref{def:execution_relation} below) relates an initial term $i$ with a follow-up term $i'$ that results from the execution of an action $a$ at a specific position $p$ in $i$. 

\begin{definition}
\label{def:execution_relation}
Given $i_1$, $i_2$, $i_1'$ and $i_2'$ interactions in $\mathbb{I}(L)$, $a$ an action in $\mathbb{A}(L)$ and $p$ a position in $\{1,2\}^*$, the execution relation $\rightarrow \subset \mathbb{I}(L) \times \{1,2\}^* \times \mathbb{A}(L)  \times \mathbb{I}(L)$ is s.t.:

{
\centering
\begin{minipage}{3cm}
\begin{prooftree}
\AxiomC{\phantom{$\xrightarrow{a@p}$}}
\UnaryInfC{$a \xrightarrow{a@\varepsilon} \varnothing$}
\end{prooftree}
\end{minipage}
\begin{minipage}{3.5cm}
\begin{prooftree}
\AxiomC{$i_1 \xrightarrow{a@p} i'_1$}
\UnaryInfC{$alt(i_1,i_2) \xrightarrow{a@1p} i'_1$}
\end{prooftree}
\end{minipage}
\begin{minipage}{3.5cm}
\begin{prooftree}
\AxiomC{$i_2 \xrightarrow{a@p} i'_2$}
\UnaryInfC{$alt(i_1,i_2) \xrightarrow{a@2p} i'_2$}
\end{prooftree}
\end{minipage}

\vspace{0.1cm}

\begin{minipage}{5.5cm}
\begin{prooftree}
\AxiomC{$i_1 \xrightarrow{a@p} i'_1$}
\UnaryInfC{$par(i_1,i_2) \xrightarrow{a@1p} par(i'_1,i_2)$}
\end{prooftree}
\end{minipage}
\begin{minipage}{5.5cm}
\begin{prooftree}
\AxiomC{$i_2 \xrightarrow{a@p} i'_2$}
\UnaryInfC{$par(i_1,i_2) \xrightarrow{a@2p} par(i_1,i'_2)$}
\end{prooftree}
\end{minipage}

\vspace{0.1cm}

\begin{minipage}{5.5cm}
\begin{prooftree}
\AxiomC{$i_1 \xrightarrow{a@p} i'_1$}
\UnaryInfC{$strict(i_1,i_2) \xrightarrow{a@1p} strict(i'_1,i_2)$}
\end{prooftree}
\end{minipage}
\begin{minipage}{5.5cm}
\begin{prooftree}
\AxiomC{$i_2 \xrightarrow{a@p} i'_2$}
\RightLabel{$i_1 \downarrow$}
\UnaryInfC{$strict(i_1,i_2) \xrightarrow{a@2p} i'_2$}
\end{prooftree}
\end{minipage}

\vspace{0.1cm}

\begin{minipage}{5.5cm}
\begin{prooftree}
\AxiomC{$i_1 \xrightarrow{a@p} i'_1$}
\UnaryInfC{$seq(i_1,i_2) \xrightarrow{a@1p} seq(i'_1,i_2)$}
\end{prooftree}
\end{minipage}
\begin{minipage}{5.5cm}
\begin{prooftree}
\AxiomC{$i_1 \isPruneOf{\theta(a)} i_1'$}
\AxiomC{$i_2 \xrightarrow{a@p} i_2'$}
\BinaryInfC{$seq(i_1,i_2) \xrightarrow{a@2p} seq(i_1',i_2')$}
\end{prooftree}
\end{minipage}

\vspace{0.1cm}

\begin{minipage}{8cm}
\begin{prooftree}
\AxiomC{$i_1 \xrightarrow{a@p} i_1'$}
\UnaryInfC{$loop_S(i_1) \xrightarrow{a@1p} strict(i_1',loop_S(i_1))$}
\end{prooftree}
\end{minipage}
 
\vspace{0.1cm}
 
\begin{minipage}{8cm}
\begin{prooftree}
\AxiomC{$i_1 \xrightarrow{a@p} i_1'$}
\AxiomC{$loop_W(i_1) \isPruneOf{\theta(a)} i'$}
\BinaryInfC{$loop_W(i_1) \xrightarrow{a@1p} seq(i',seq(i_1',loop_W(i_1)))$}
\end{prooftree}
\end{minipage}

\vspace{0.1cm}

\begin{minipage}{8cm}
\begin{prooftree}
\AxiomC{$i_1 \xrightarrow{a@p} i_1'$}
\UnaryInfC{$loop_P(i_1) \xrightarrow{a@1p} par(i_1',loop_P(i_1))$}
\end{prooftree}
\end{minipage}\\
}

\end{definition}

We prove below Prop.\ref{prop:execution_and_lifeline_removal} from Sec.\ref{sec:semantics} which relates the execution relation to the lifeline removal operator.

\begin{property*}[Prop.\ref{prop:execution_and_lifeline_removal}, Sec.\ref{sec:semantics}]
For any $L \subset \mathcal{L}$, any interaction $i \in \mathbb{I}(L)$, any set of lifelines $\emptyset \subsetneq H \subsetneq L$, any action $a \in \mathbb{A}(L)$ s.t.~$\theta(a) \not\in H$ and any position $p \in pos(i)$:
\[
(\exists~i' \in \mathbb{I}(L),~i \xrightarrow{a@p} i') ~\Rightarrow~ (i_h \xrightarrow{a@p} i'_h)
\]
\end{property*}

\begin{proof}
Let us suppose $i \xrightarrow{a@p} i'$ and reason by induction on the term structure of $i$.
We cannot have $i = \varnothing$ as the premise would not hold.
If $i \in \mathbb{A}(L)$ then $a = i$, $p = \varepsilon$ and $i' = \varnothing$. Then, because $\theta(a) \not\in H$, we have $i_h = a$ and $i'_h = \varnothing$ which satisfies $i_h \xrightarrow{a@\varepsilon} i'_h$.

Given $i_1$ and $i_2$ two interactions, if $i = f(i_1,i_2)$ with $f \in \{strict,seq,par,alt\}$ any binary operator, the fact that $i \xrightarrow{a@p}$ implies that $p$ is
\begin{itemize}
    \item either of the form $1.p_1$ and we have a $i_1'$ s.t., $i_1 \xrightarrow{a@p_1} i_1'$, and then, by induction, we have $i_{1h} \xrightarrow{a@p_1} i'_{1h}$
    \item or of the form $2.p_2$ and we have a $i_2'$ s.t., $i_2 \xrightarrow{a@p_2} i_2'$, and then, by induction, we have $i_{2h} \xrightarrow{a@p_2} i'_{2h}$
\end{itemize}

 Then:
\begin{itemize}
    \item if $i = alt(i_1,i_2)$, supposing the action is on the side of $i_1$ (the other case being symmetric), we have $i \xrightarrow{a@1.p_1} i_1'$. Then, because by induction we have $i_{1h} \xrightarrow{a@p_1} i'_{1h}$, by Def.\ref{def:execution_relation}, this implies $alt(i_{1h},i_{2h}) \xrightarrow{a@1.p_1} i'_{1h}$ and thus $i_h \xrightarrow{a@p} i'_h$
    \item if $i = strict(i_1,i_2)$:
    \begin{itemize}
        \item if the action comes from $i_1$ then we have $i \xrightarrow{a@1.p_1} strict(i_1',i_2)$ and via induction and Def.\ref{def:execution_relation}, $strict(i_{1h},i_{2h}) \xrightarrow{a@1.p_1} strict(i'_{1h},i_{2h})$ hence, via Def.\ref{def:lifeline_removal_operator_on_interactions} the property holds
        \item if the action comes from $i_2$ then we have $i \xrightarrow{a@2.p_2} i'_2$ and $i_1 \downarrow$. As per Prop.\ref{prop:termination_and_lifeline_removal}, we also have $i_{1h} \downarrow$. Hence, via induction and Def.\ref{def:execution_relation}, $strict(i_{1h},i_{2h}) \xrightarrow{a@2.p_2} i'_{2h}$ hence, via Def.\ref{def:lifeline_removal_operator_on_interactions} the property holds
    \end{itemize}
    \item if $i = par(i_1,i_2)$ we reason as in the first case of $strict$
    \item if $i = seq(i_1,i_2)$, let us consider the case where the action comes from $i_2$ (the other being similar to the first case of $strict$). Then we must have $i_1 \isPruneOf{\theta(a)} i_1'$ and $i_2 \xrightarrow{a@p_2} i_2'$.
    Via Prop.\ref{prop:pruning_and_lifeline_removal}, and given $\theta(a) \not\in H$, the former implies $i_{1h} \isPruneOf{\theta(a)} i'_{1h}$, and, via induction, the latter implies $i_{2h} \xrightarrow{a@p_2} i'_{2h}$. Hence, as per Def.\ref{def:execution_relation}, we have $seq(i_{1h},i_{2h}) \xrightarrow{a@2.p_2} seq(i'_{1h},i'_{2h})$ and thus, via Def.\ref{def:lifeline_removal_operator_on_interactions} the property holds.
\end{itemize}

In the case of loops, for a derivation $loop_k(i_1) \xrightarrow{a@p} x$ to exist (given $k \in \{S,W,P\}$), we must have $p$ of the form $1.p_1$ and there must exist a $i'_1$ s.t.~$i_1 \xrightarrow{a@p_1} i_1'$, which, by induction, implies $i_{1h} \xrightarrow{a@p_1} i'_{1h}$.
Then:
\begin{itemize}
    \item in the case $(k,f) \in \{(S,strict),(P,par)\}$, we have $i = loop_k(i_1) \xrightarrow{a@1.p_1} f(i_1',i)$. Because by induction we have $i_{1h} \xrightarrow{a@p_1} i'_{1h}$, we also have $loop_k(i_{1h}) \xrightarrow{a@1.p_1} f(i'_{1h},loop_k(i_{1h}))$ and because by Def.\ref{def:lifeline_removal_operator_on_interactions} we have $i_h = loop_k(i_{1h})$, the property holds.
    \item in the case $i = loop_W(i_1)$, there exists a $i'$ s.t.~$i \isPruneOf{\theta(a)} i'$ and we have $i \xrightarrow{a@1.p_1} seq(i',seq(i_1',i))$. Via Prop.\ref{prop:pruning_and_lifeline_removal}, and because $\theta(a) \not\in H$, we have $i_h \isPruneOf{\theta(a)} i'_h$. Hence, because $i_h = loop_W(i_{1h})$ (Def.\ref{def:lifeline_removal_operator_on_interactions}) and because by induction we have $i_{1h} \xrightarrow{a@p_1} i'_{1h}$, we also (Def.\ref{def:execution_relation}) have $loop_W(i_{1h}) \xrightarrow{a@1.p_1} seq(i'_h,seq(i'_{1h},i_h))$. Thus the property holds.
\end{itemize}

\end{proof}

\end{document}